\newcommand{\R}{\mathbb{R}}
\newcommand{\C}{\mathbb{C}}
\newcommand{\T}{\mathbb{T}}
\newcommand{\Z}{\mathbb{Z}}
\newcommand{\Sp}{\mathbb{S}}
\newcommand{\Hp}{\mathcal{H}}
\newcommand{\F}{\mathcal{F}}
\newcommand{\N}{\mathbb{N}}
\newcommand{\Pp}{\mathbb{P}}
\newcommand{\E}{\mathbb{E}}
\newtheorem{theorem}{Theorem}
\newtheorem{definition}{Definition}
\newtheorem{lemma}{Lemma}
\newtheorem{proposition}{Proposition}
\newtheorem{remark}{Remark}
\theoremstyle{definition}
\newtheorem{example}{Example}
\providecommand{\keywords}[1]
{
  \small	
  \textbf{Keywords --} #1
}
\providecommand{\msc}[1]
{
  \small	
  \textbf{Mathematics Subject Classification --} #1
}
\title{Dynamical Localization and Transport properties of Quantum Walks on the hexagonal lattice}
\author{Andreas Schaefer%
  \thanks{Electronic address: \texttt{andreas.schaefer@univ-grenoble-alpes.fr}}}
\affil{Institut Fourier, Université Grenoble Alpes, France}
\date{\today}
\begin{document}

\maketitle

\begin{abstract}
    We study coined Random Quantum Walks on the hexagonal lattice, where the strength of disorder is monitored by the coin matrix. Each lattice site is equipped with an i.i.d. random variable that is uniformly distributed on the torus and acts as a random phase in every step of the QW. We show exponential decay of the fractional moments of the Green function in the regime of strong disorder, that is whenever the coin matrix is sufficiently close to the fully localized case, using a fractional moment criterion and a finite volume method. In the decorrelated case, we deduce dynamical localization. Moreover, we adapt a topological index to our model and thereby obtain transport for some coin matrices.
\end{abstract}

\noindent \keywords{Quantum walks, localization, fractional moment criterion, transport, topological index}

\noindent \msc{81Q35}

\section{Introduction}
Quantum Walks (QWs) are the quantum mechanical analogue of classical Random Walks and have attracted considerable attention in recent years, owing to their various applications in fields such as quantum information or condensed matter physics. A comprehensive introduction to QWs can be found in works such as \cite{Portugal:2013}, \cite{VenegasAndraca:2012}, \cite{Kempe:2003} and \cite{Qiang:2024}. In the realm of quantum information, QWs are studied for their potential to exploit quantum phenomena such as entanglement or ballistic motion (see e.g. \cite{Venancio:2023}, \cite{Ahlbrecht:2011}) to obtain algorithms that outperform their classical counterparts. One notable example of this are quantum search algorithms, see \cite{Grover:1996}, \cite{Shenvi:2003}, \cite{Szegedy:2004}, \cite{Ambainis:2019} , \cite{Portugal:2013} among others. In condensed matter physics, on the other hand, QWs can be used as an approximation of the more complex behavior of physical systems, see for example \cite{Chalker:1988}, \cite{Karski:2009}, \cite{Zaehringer:2010}.

It therefore comes as no surprise that QWs have been the subject of extensive mathematical investigation. This research has encompassed a variety of aspects, including the rigorous study of decoherence (\cite{Ahlbrecht:2011}, \cite{Ahlbrecht:2012}, \cite{Joye_2011}, \cite{HJ:2012}), their spectral properties (\cite{Bourget:2003}, \cite{Joye:2014}, \cite{HJ:2014}, \cite{Higuchi:2014}, \cite{Higuchi:2018}, \cite{Richard:2017}), continuum limits (\cite{Molfetta:2018}, \cite{Molfetta:2022}, \cite{Molfetta:2024}) and topological features (\cite{Cedzich:2017}, \cite{Cedzich:2018}), which can be used to show the existence of transport (\cite{ABJ:2017}, \cite{ABJ:2020}, \cite{AschMouneime:2019}). In contrast, localization has been observed for specific types of QWs. While localization has been extensively studied for self-adjoint operators (\cite{FroehlichSpencer:1983}, \cite{AizenmanSchenker:2001}, \cite{Stollmann:2001}, \cite{AizenmanMolchanov:1993}, \cite{AENSS:2005}, \cite{Warzel:15}), methods to extend these results to the unitary case have been developed (\cite{Joye:2005}, \cite{HJS:09}) and applied to specific models (\cite{Koshovets:1991}, \cite{HJS:2006}, \cite{JM:2010}, \cite{Joye:2012}, \cite{ABJ:2012}, \cite{Klausen:2023}, \cite{Boumaza:2025}, see also the reviews \cite{Joye:2011}, \cite{ABJ:2015} and references therein). In particular, several recent works have established dynamical localization for quantum walks and related CMV matrices in disordered and quasi-periodic settings (\cite{ASW:2011}, \cite{CedWer:2021}, \cite{WangDam:2019}, \cite{CedFil:2023}, \cite{CedFilLi:2024}).

The main goal of this paper is to adapt the method developed in \cite{HJS:09} to a specific coined QW on the hexagonal lattice with the aim of showing dynamical localization in the strong disorder regime, which is monitored by the coin matrix. We establish this result using the fractional moment criterion and a finite volume method. A novelty of this work is the introduction of small correlations, which to the best of the authors knowledge have only been investigated in \cite{Klausen:2023} (therein the preprint "Quantum walks in random magnetic fields" together with A. H. Werner and C. Cedzich). In contrast to most previous works, where random phases are represented by an infinite diagonal matrix with i.i.d. random phases along the diagonal, our model incorporates randomness through an infinite diagonal matrix in which each triplet of diagonal entries shares the same phase. However, this comes with a caveat: The fact that exponential decay of the fractional moments implies dynamical localization relies on the independence of the randomness, and does not directly generalize to the correlated case. As a way out, we can add additional random variables to our model, rendering the randomness independent and obtaining dynamical localization. Although the extension of the fractional moment criterion to the correlated case has been conjectured \cite{Klausen:2023}, it remains unresolved and may serve as the subject of future research. For further details, we refer to Remark \ref{remark: correlations}. To complete the picture,  we address the translation-invariant case and investigate transport properties using the topological methods developed in \cite{ABJ:2020}, \cite{AschMouneime:2019}. By adapting a topological index to our model, we derive criteria for the absolutely continuous spectrum to span the entire unit circle, which serves as an indicator of Quantum transport. These criteria depend on the absolute value of the entries of the coin matrices at infinity and thus remain valid when random phases are introduced to the model. We note that, under the right assumptions, these results can be generalized to a broader class of QWs, using a scattering approach as presented in \cite{joye:2024}. The choice of the hexagonal lattice is partially dictated by the specificities of electronic transport in graphene samples, see e.g. \cite{Fefferman:2012} for the Hamiltonian classification of the phenomenon.

The paper is structured as follows: We start in section \ref{sec: model and results} by defining the considered QW and stating the main result concerning dynamical localization, namely Theorem \ref{thm:main_thm}. Section \ref{sec: trans-inv case} deals with the translation-invariant case and its band structure. In section \ref{sec: proof of dyn loc total} we prove exponential decay of the fractional moments as stated in Theorem \ref{thm:main_thm}. We first show the boundedness of fractional moments in \ref{sec: frac mom bound} and define a finite volume restriction in \ref{sec: box definition}. We then obtain an iterative step in \ref{sec: iterative step} that we repeatedly apply in \ref{sec: main proof} to conclude the proof. In section \ref{sec: topo properties} we turn to topological properties of the QW, using an index that leads to transport for specific choices of coin matrices, both with and without randomness (see Theorem \ref{thm: index theorem}). The appendices \ref{sec: first resampling argu} and \ref{sec: second resampling argu} deal with a re-sampling argument, a technical step required in the proof of dynamical localization. Sections \ref{sec: trans-inv case}, \ref{sec: proof of dyn loc total} and \ref{sec: topo properties} are independent from each other and can be read in any order. \vspace{2mm} \\
\textbf{Acknowledgements:} \\
I would like to express my heartfelt gratitude to Simone Warzel and her research group for their warm hospitality and the many insightful discussions during my two months in Munich working on this project. My thanks also go to my PhD supervisor Alain Joye for his support and guidance at every stage of this work. 

\section{The model and main result} \label{sec: model and results}
The hexagonal lattice $\Gamma$ is defined using the integer spans $\Gamma_A$ and $\Gamma_B$
\begin{align*}
    \Gamma_A = \text{span}_\Z \left( \begin{pmatrix}
        \sqrt{3} \\ 0
    \end{pmatrix}, \, \frac{1}{2} \begin{pmatrix}
        \sqrt{3} \\ 3
    \end{pmatrix} \right), \;\; \Gamma_B = \Gamma_A + \begin{pmatrix}
        0 \\ 1
    \end{pmatrix}, \;\; \Gamma = \Gamma_A \cup \Gamma_B.
\end{align*}
Fixing an orthonormal basis (ONB) $f_1, f_2$ of $\C^2$ and using that $l^2(\Gamma) \simeq l^2(\Z^2) \otimes \C^2$, an ONB of $l^2(\Gamma)$ is given by $\ket{j,k} \otimes f_i$. The components $j,k$ are the coefficients with respect to the vectors used to span $\Gamma_A$, while the vectors $f_1$ and $f_2$ refer to lattice sites in $\Gamma_A$ and $\Gamma_B$ respectively. We define the shift operators $S_{1}, S_{2}, S_{3}$ as translation-invariant operators on $l^2(\Gamma)$: 
\begin{align} \label{eq: expre of shifts}
\begin{split}
    S_{1} \left( \ket{j,k} \otimes f_1 \right) = \ket{j,k-1} \otimes f_2, \;\;\;\;\;\;\;\;
    S_{1} \left( \ket{j,k} \otimes f_2 \right) &= \ket{j-1,k+1} \otimes f_1 \\
    S_{2} \left( \ket{j,k} \otimes f_1 \right) = \ket{j+1,k-1} \otimes f_2, \;\;
    S_{2} \left( \ket{j,k} \otimes f_2 \right) &= \ket{j,k} \otimes f_1 \\
    S_{3} \left( \ket{j,k} \otimes f_1 \right) = \ket{j,k} \otimes f_2, \;\;\;\;\;\;\;\;\;\;\;\;\;\;
    S_{3} \left( \ket{j,k} \otimes f_2 \right) &= \ket{j,k+1} \otimes f_1.
\end{split}
\end{align}
Their action is illustrated in Figure \ref{fig:shifts definition}.
\begin{figure} [h!]
    \centering
    \includegraphics[scale=1]{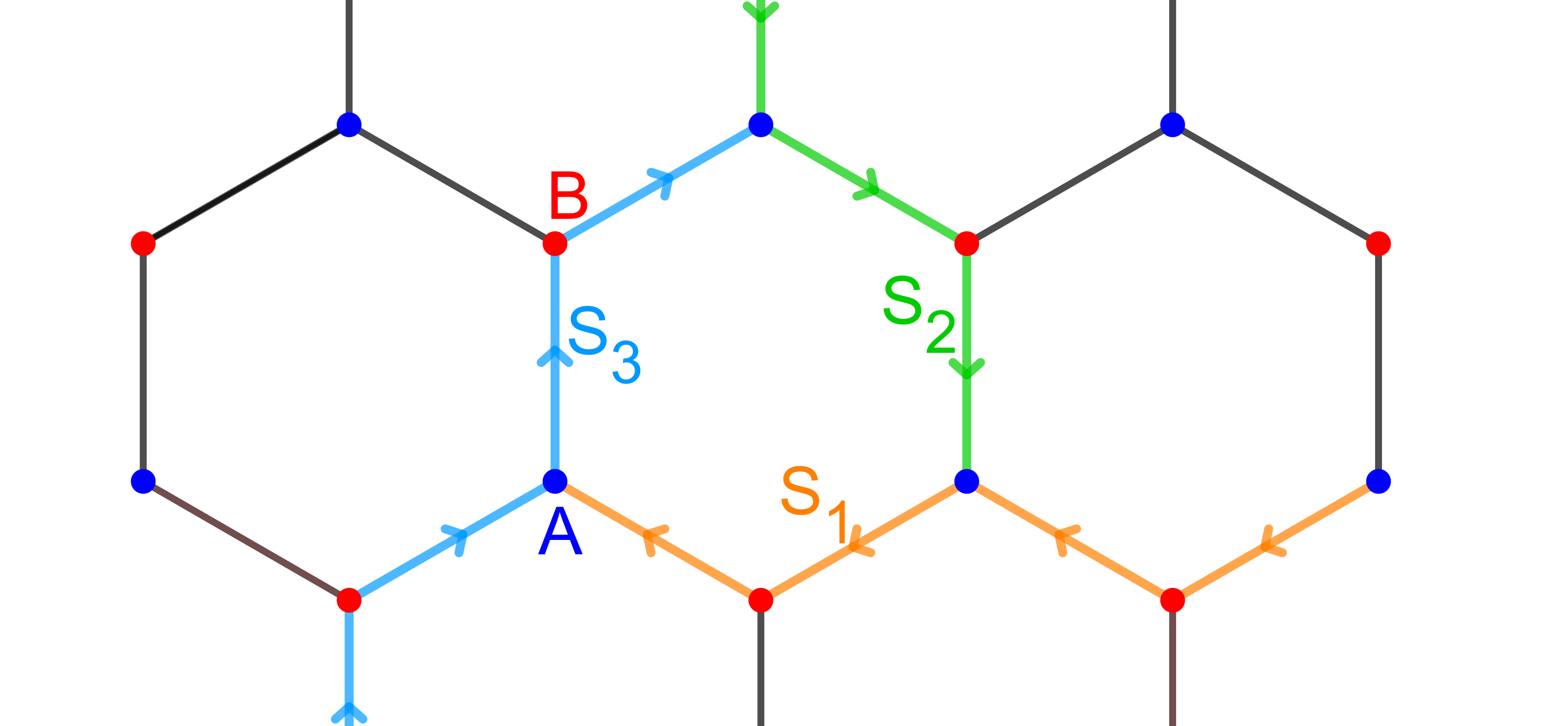}
    \caption{The action of the shift operators $S_j$}
    \label{fig:shifts definition}
\end{figure}

We equip every site $x \in \Gamma$ with a three-dimensional coin space. The Quantum Walk acts on the Hilbert space $\Hp = l^2(\Gamma) \otimes \C^3 \simeq l^2(\Z^2 ; \C^6)$. We fix an arbitrary ONB $e_1, e_2, e_3$ of $\C^3$ and define the shift operator $S \in \mathcal{B}(\Hp)$
\begin{align*}
    S = S_{1} \otimes \ket{e_1} \bra{e_1} + S_{2} \otimes \ket{e_2} \bra{e_2} + S_{3} \otimes \ket{e_3} \bra{e_3}.
\end{align*}
To every lattice site $x \in \Gamma$ we associate a unitary coin matrix $C_x \in U(3)$. For any $\ket{x} = \ket{j,k} \otimes f_i$ and coin state $\tau \in \C^3$ we define the coin operator $\mathcal{C} = \{C_x\}_{x \in \Gamma}$
\begin{align*}
    \mathcal{C} \left( \ket{x} \otimes \tau \right) = \ket{x} \otimes C_x \, \tau.
\end{align*}
The deterministic QW is given by $U = S \mathcal{C}$. In some instances we will assume that there exists a matrix $C \in U(3)$ such that $C_x = C$ for all $x \in \Gamma$. In this case, we will denote the coin operator by $\mathcal{C}(C)$ and the corresponding QW by $U(C)$.

Every $x \in \Gamma$ is equipped with an i.i.d. random variable $\omega_x$ that is uniformly distributed on the torus $\T$. Note that whenever we view the torus as the real interval $[0, 2\pi)$, we denote it by $\T$, while the complex sphere it is homeomorphic to will be called $\Sp^1$. We define the random unitary operator
\begin{align*}
    D_\omega \left( \ket{x} \otimes \tau \right) = e^{i \omega_x} \left( \ket{x} \otimes \tau \right).
\end{align*}
The random QW is given by $U_\omega = D_\omega U$.
\begin{remark} \label{remark: randomness as random coin mat}
    We define the random QW by $D_\omega S \mathcal{C}$, applying the random phase only after the shift operator to align with the existing literature. However, an alternative definition is $W_\omega = S \mathcal{C} D_\omega$. They are unitarily equivalent due to $W_\omega = D_\omega^* U_\omega D_\omega$ and thus share the same transport and spectral properties. Due to $\mathcal{C} D_\omega \ket{x} \otimes \tau = e^{i \omega_x} \ket{x} \otimes C_x \tau$ we can replace the deterministic coin matrices $C_x$ by the random matrices $e^{i \omega_x} C_x$. Using the random coin operator $\mathcal{C}_\omega = \{e^{i \omega_x} C_x \}_{x \in \Gamma}$, we have $W_\omega = S C_\omega$ and can view the randomness as a random phase associated to each coin matrix.
\end{remark}
Let $C_0 = \begin{pmatrix}
    0 & 1 & 0 \\ 0 & 0 & 1 \\ 1 & 0 & 0
\end{pmatrix}$ be the permutation matrix corresponding to the permutation $(1 \; 3 \; 2)$ and denote $x^{(i)} = \ket{x} \otimes e_i$. Our goal is the following theorem:
\begin{theorem} \label{thm:main_thm}
    For any $s \in (0,\frac{1}{3})$ there exists $\varphi, c, g > 0$ such that for any $C \in U(3)$ with $\| C - C_0 \|_\infty \leq \varphi$ and all lattice sites $x, y \in \Gamma$, coin states $i,j \in \{1,2,3\}$, $z \in \C$ with $|z|<1$:
    \begin{align} \label{eq:expo_decay}
        \E \left( | \langle x^{(i)} | \, \big( U_\omega(C)-z \big)^{-1} \, y^{(j)} \rangle |^s \right) \leq c \, e^{-g |x - y|}.
    \end{align}
\end{theorem}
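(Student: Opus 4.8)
The plan is to establish \eqref{eq:expo_decay} following the finite-volume fractional moment scheme of \cite{HJS:09}, adapted to the hexagonal geometry and to the $6$-dimensional fibre $\C^6$ of $\Hp = l^2(\Z^2;\C^6)$. The starting point is the a priori bound: for $s \in (0,\frac{1}{3})$ there is a constant $c_s$ such that
\[
  \sup_{|z|<1}\,\E\!\left( |\langle x^{(i)} | (U_\omega(C)-z)^{-1} y^{(j)}\rangle|^s \right) \le c_s
\]
uniformly in $x,y,i,j$ and in $C$ near $C_0$. This is the content of the boundedness step (Section~\ref{sec: frac mom bound}); I would derive it from the fact that the random phases $\omega_x$ enter $U_\omega = D_\omega S\mathcal C$ multiplicatively, so that conditioning on all phases except the one at a single site and integrating that phase over $\T$ controls a rank-$3$ piece of the resolvent via a Cartan-type / spectral-averaging estimate — this is where $s<1/3$ is used, since one divides the rank-$3$ contribution into three rank-one parts and each survives a fractional moment of order $<1$.

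Next I would set up the finite-volume apparatus (Section~\ref{sec: box definition}). For a box $\Lambda\subset\Gamma$ I would define a truncated unitary $U_\omega^\Lambda$ — concretely, modify $U_0(C)=S\mathcal C$ at the boundary so that it becomes block-diagonal with respect to $\Lambda$ and $\Lambda^c$ (replacing the offending shift components by reflections or by a phase that decouples the box), then reinstate $D_\omega$. The key algebraic identity is a resolvent/geometric-series expansion relating $(U_\omega - z)^{-1}$ to $(U_\omega^\Lambda - z)^{-1}$ through the finite-rank boundary perturbation $B_\Lambda = U_\omega - U_\omega^\Lambda$: for $x$ deep inside $\Lambda$ and $y$ outside,
\[
  \langle x^{(i)}|(U_\omega-z)^{-1}y^{(j)}\rangle
  = -\sum \langle x^{(i)}|(U_\omega^\Lambda-z)^{-1}|u\rangle\,\langle u|B_\Lambda|v\rangle\,\langle v|(U_\omega-z)^{-1}y^{(j)}\rangle,
\]
the sum running over the finitely many boundary sites $u\in\Lambda$, $v\notin\Lambda$ coupled by the geometry of $S_1,S_2,S_3$. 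Taking $s$-th powers, using subadditivity of $t\mapsto t^s$, and then the crucial conditional-independence argument — the matrix element $\langle x^{(i)}|(U_\omega^\Lambda-z)^{-1}|u\rangle$ depends only on phases inside $\Lambda$ while $\langle v|(U_\omega-z)^{-1}y^{(j)}\rangle$ can be handled after resampling phases near the boundary — yields the iterative inequality
\[
  \E\big(|G_\omega(x,y)|^s\big) \le \gamma(\varphi)\,\sum_{\text{bdry}} \E\big(|G_\omega(\tilde x,y)|^s\big),
\]
where $\gamma(\varphi)\to 0$ as the box parameter grows and/or as $C\to C_0$, because in the fully-localized case $C=C_0$ the walk $U_0(C_0)$ itself is a pure permutation whose orbits are bounded, making the "free" finite-volume resolvent a genuinely local object with no boundary leakage; the perturbation $\|C-C_0\|_\infty\le\varphi$ controls the size of the coupling.

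I would then iterate this one-step bound across a chain of $\sim |x-y|/L$ disjoint boxes separating $x$ from $y$ (Section~\ref{sec: main proof}), each iteration gaining a factor $\gamma(\varphi)<1$, and close the recursion against the a priori bound at the far end, producing $c\,\gamma(\varphi)^{|x-y|/L} = c\,e^{-g|x-y|}$ with $g = -L^{-1}\log\gamma(\varphi)>0$. The two genuine obstacles I anticipate: first, the \emph{resampling argument} (the appendices \ref{sec: first resampling argu}, \ref{sec: second resampling argu}) needed to decouple the two resolvent factors — because a single phase $\omega_x$ sits in a whole triple of fibre coordinates (the $\C^3$ coin space is not where the randomness lives; rather $D_\omega$ multiplies all of $\ket{x}\otimes\C^3$ by $e^{i\omega_x}$, so conditioning is cleaner here than in the correlated-triplet model of Remark~\ref{remark: correlations}), one must show that re-drawing finitely many boundary phases changes the relevant expectations by at most a bounded multiplicative constant, which requires another spectral-averaging estimate; second, constructing the \emph{decoupled finite-volume unitary} $U_\omega^\Lambda$ so that it is exactly unitary (not merely a truncation, which would be non-unitary and spoil the $\|(U^\Lambda_\omega-z)^{-1}\|\le (1-|z|)^{-1}$ bound and the spectral-averaging machinery) while differing from $U_\omega$ only by a finite-rank term supported on $\partial\Lambda$ — on the hexagonal lattice the three shifts move in three different directions, so the boundary must be chosen compatibly with all of $S_1,S_2,S_3$, and verifying that the resulting $B_\Lambda$ has the claimed small norm when $C\approx C_0$ is the technical heart of Section~\ref{sec: iterative step}.
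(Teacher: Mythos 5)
Your outline follows the same finite-volume fractional-moment strategy as the paper, but two of its load-bearing steps are misplaced or missing. First, the role of $s<\tfrac13$: the a priori bound of Section~\ref{sec: frac mom bound} holds for \emph{all} $s\in(0,1)$ and is not obtained by splitting the rank-three perturbation into three rank-one pieces (which would in any case only require $s<1$ for subadditivity); it is obtained by showing that the reduced operator $\widehat F_z$ on the six-dimensional range of $P$ is (maximally) dissipative and invoking the weak-$L^1$ integral bound of Lemma~\ref{lemma: dissipative integral bound} in arbitrary finite dimension. The restriction $s<\tfrac13$ enters only later, in the first re-sampling argument (Appendix~\ref{sec: first resampling argu}): after H\"older with three factors one must bound conditional expectations of $|\cdot|^{3s}$ by Theorem~\ref{thm: frac mom bound}, which requires $3s<1$. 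Your proposal does not account for this, and your justification for $s<1/3$ in the a priori bound does not hold up. (Also, your parenthetical about correlations is backwards: the model at hand, where one phase $\omega_x$ multiplies all of $\ket{x}\otimes\C^3$, \emph{is} the correlated-triplet model of Remark~\ref{remark: correlations}; the correlation is harmless for Theorem~\ref{thm:main_thm} but obstructs the passage to dynamical localization, not the other way around.)

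Second, and more seriously, your mechanism for making the iterative constant $\gamma(\varphi)$ small is incomplete. You attribute the smallness entirely to $\|C-C_0\|_\infty\le\varphi$ and the locality of the $C_0$-walk. But the resolvent expansion from $U_\omega^{\Lambda_L}(C_0)$ to $U_\omega^{\Lambda_L}(C)$ produces factors of $\operatorname{dist}\bigl(z,\sigma(U_\omega^{\Lambda_L}(C_0))\bigr)^{-2}$, and this distance is a \emph{random} quantity that can be arbitrarily small; no deterministic bound in terms of $\varphi$ exists. The paper closes this gap with a Wegner-type estimate (Proposition~\ref{propo: spect gaps fully loc}): since the eigenvalues of each invariant block $U_\omega^{j,k}(C_0)$ are sixth roots of independent uniform phases $\Theta_\omega^{j,k}$, one gets $\Pp(\operatorname{dist}(z,\sigma)\le\eta)\le c\,\eta\,\mathrm{vol}(\Lambda_L)$, and then splits the expectation into a bad event of small probability (handled by H\"older against the a priori bound, whence the exponent $p>\tfrac1{1-s}$) and a good event on which the resolvent is controlled. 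This forces the quantitative coupling $\|C-C_0\|_\infty\le |L|^{-(2ap+4+a/s)}$ between the coin and the box size, which in turn is what delivers the polynomial decay $c|L|^{-a}$ of the boxed resolvent and hence $q<1$ after summing over the $\mathcal O(|L|)$ boundary sites. Without this probabilistic ingredient your recursion cannot be closed.
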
 
Here $|x - y|$ denotes the length of the shortest path connecting $x$ and $y$. Thus, $d(x, y) = |x - y|$ is a metric. 
\begin{remark} \label{remark: generalization of main thm}
    The proof of Theorem \ref{thm:main_thm} can be directly generalized to any deterministic family of unitary matrices $(C_x)_{x \in \Gamma}$ by replacing the assumption $\|C - C_0\|_\infty \leq \varphi$ by $\sup_{x \in \Gamma}\| C_x - C_0 \|_\infty \leq \varphi$. Thus, Theorem \ref{thm:main_thm} also applies to non-ergodic models.
\end{remark}
\begin{remark} \label{remark: correlations}
    We emphasize that the elements $\ket{x^{(1)}}$, $\ket{x^{(2)}}$ and $\ket{x^{(3)}}$ of the ONB of $\Hp$ are equipped with the same random variable. When the operator $D_\omega$ is expressed in this ONB, this yields an infinite diagonal matrix, where each triplet of matrix elements shares the same random variable. Unlike the models in the existing works \cite{HJS:09} and \cite{ABJ:2012} where the infinite matrix $D_\omega$ has independent diagonal entries, this structure introduces some correlations to the model.
    
    The fact that exponential decay of the fractional moments implies dynamical localization is proven in Theorem 3.2 in \cite{HJS:09}. However, their proof relies on the independence of the diagonal entries in the matrix expression of $D_\omega$ and can not be directly adapted to the correlated framework presented above. This is because Krein's formula, which is equation (5.5) in \cite{HJS:09}, requires a rank one perturbation and does not easily generalize to other finite-dimensional perturbations. Since $D_\omega$ has correlated diagonal blocks of size $3 \times 3$, we are dealing with rank three perturbations. Nevertheless, these difficulties seem rather technical than fundamental and dynamical localization is still expected in the correlated case (see \cite{Klausen:2023}, Quantum Walks in Random Magnetic Fields, Conjecture 2.5).
\end{remark}
Nevertheless, we can still conclude dynamical localization if we adapt the random operator $D_\omega$: Equipping every site $x \in \Gamma$ with three i.i.d. random variables $\omega_{x,1}$, $\omega_{x,2}$ and $\omega_{x,3}$ we can define the random operator $D_\omega \ket{x^{(j)}}= e^{i \omega_{x,j}} \, \ket{x^{(j)}}$. Its matrix expression in the ONB $(\ket{x^{(i)}})_{x,i}$ has now independent diagonal entries. Since Theorem \ref{thm:main_thm} still holds, we can conclude dynamical localization using Theorem 3.2 in \cite{HJS:09} under the assumptions of Theorem \ref{thm:main_thm}:
\begin{align} \label{eq:def_dyn_loc}
    \E \left( \sup_{n \in \Z} | \langle x^{(i)}\, | \, U_\omega(C)^n \, y^{(j)} \rangle | \right) \leq c e^{-g |x - y|}.
\end{align}

\begin{remark}
    We can adapt the proof of Theorem 3.2 in \cite{HJS:09} to the decorrelated model presented above without any significant changes. In particular, Proposition 5.1 in \cite{HJS:09} can be reformulated as
    \begin{align*}
        \E \left( (1-|z|^2) \, | \bra{x^{(i)} } (U_\omega-z)^{-1} \ket{y^{(j)}} |^2 \right) \leq C(s) \, \sum_{\substack{v^{(k)} \text{ s.t.} \\ |v-x| \leq 2}} \E \left( | \bra{v^{(k)}} (U_\omega -z)^{-1} \ket{y^{(j)}} |^s \right).
    \end{align*}
    We stress again that the key step is equation (5.5) in \cite{HJS:09}, which does not easily generalize to the correlated case. The integral representation for $f(U)$ using the resolvent holds for general unitary operators, while Section 5.3 in \cite{HJS:09} does not involve any model specific properties.
\end{remark}

\begin{wrapfigure}{r}{0.35\textwidth}
    \centering
    \includegraphics[scale=2]{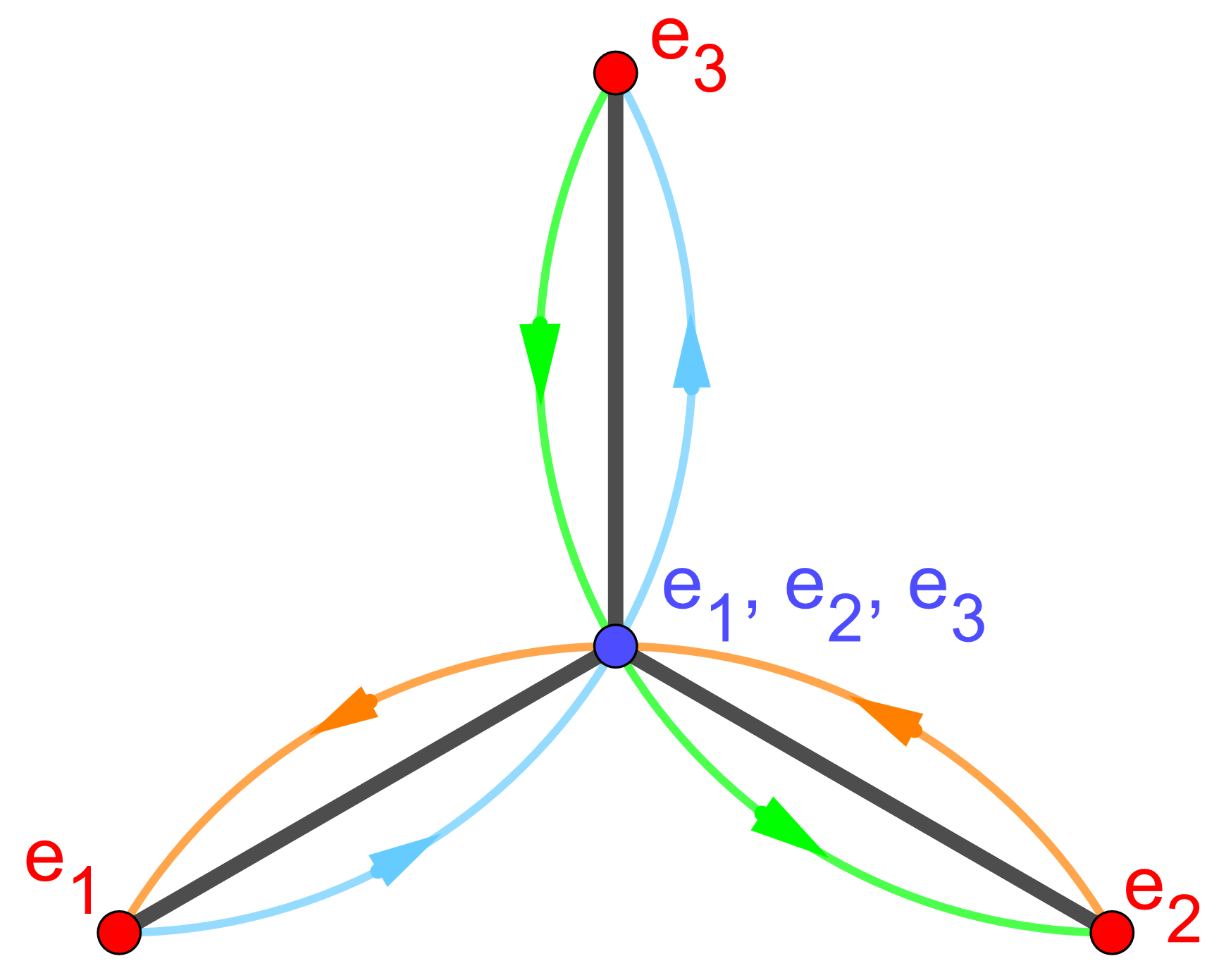}
    \caption{The invariant subspace $\Hp^{j,k}$ of $U_\omega(C_0)$}
    \label{fig:invariant subspace}
\end{wrapfigure}
Dynamical localization means that the probability to move from a lattice site $x \in \Gamma$ to another site $y \in \Gamma$ decreases almost surely exponentially in the distance $|x - y|$, independently of how many steps $n$ the Quantum Walker may take. Using the coin matrix $C_0$, the Quantum Walk is fully localized, see Figure \ref{fig:invariant subspace}. In other words, the subspaces $\Hp^{j,k}$ as defined in \eqref{eq:def_H_jk} are invariant under $U_\omega(C_0)$. Thus, Theorem \ref{thm:main_thm} yields dynamical localization if the chosen coin matrix $C$ is sufficiently close to the fully localized case $C_0$.

\begin{align} \label{eq:def_H_jk}
    \begin{split}
        \Hp^{j,k} = \text{span} \Big\{ \ket{j,k} &\otimes f_1 \otimes e_1, \; \ket{j,k} \otimes f_1 \otimes e_2, \; \ket{j,k} \otimes f_1 \otimes e_3, \\ \ket{j,k} &\otimes f_2 \otimes e_3, \; \ket{j,k-1} \otimes f_2 \otimes e_1, \; \ket{j+1,k-1} \otimes f_2 \otimes e_2 \Big\}.
    \end{split}
\end{align}

\begin{remark}
    The matrix $\tilde{C}_0$ corresponding to the permutation $(1 \; 2 \; 3)$ induces full localization, similarly to $C_0$ which corresponds to the permutation $(1 \; 3 \; 2)$. The invariant subspaces $\Hp^{j,k}$ pictured in Figure \ref{fig:invariant subspace} are then centered at a $\Gamma_B$ site instead of a $\Gamma_A$ site. The proof of Theorem \ref{thm:main_thm} can be adapted without any major changes, thereby yielding a fractional moment estimate for any choice of $C \in U(3)$ such that $\| C- \Tilde{C}_0 \| \leq \varphi_0$. Using the techniques presented in section \ref{sec: trans-inv case}, it can be shown that $C_0$ and $\tilde{C}_0$ are (up to phase factors) the only coin matrices inducing full localization. 
\end{remark}

\begin{remark}
    It was shown in \cite{felix:22} that without random phases the coin matrix $C(\theta) = \begin{pmatrix}
        0 & \cos(\theta) & \sin(\theta) \\ 0 &-\sin(\theta) & \cos(\theta) \\ 1 & 0 & 0
    \end{pmatrix}$ leads to non-empty continuous spectrum for any $\theta \notin \pi \Z$ and therefore no localization. Since $\| C(\theta) - C_0 \|$ can be arbitrarily small, this shows that it is the randomness which drives localization.
\end{remark}

\section{The band structure of translation-invariant QWs} \label{sec: trans-inv case}
We consider a deterministic Quantum Walk $U = S \mathcal{C}$ as defined in section \ref{sec: model and results}. Furthermore, we assume that the coin operator $\mathcal{C}$ is such that there exist two unitary matrices $C_A, C_B \in U(3)$ such that $C_x = C_A$ on all lattice sites $x \in \Gamma_A$ and $C_x = C_B$ on all $x \in \Gamma_B$. The resulting operator $U$ is translation-invariant with respect to shifts that respect the lattice structure, i.e. shifts $T_v: x \mapsto x+v$ where $v \in \Gamma_A$. We consider the Fourier Transform
\begin{align*}
    \F: l^2(\Z^2) \to L^2(\T^2), \;\; \widehat{\varphi}(k) = \frac{1}{2 \pi} \, \sum_{n \in \Z^2} e^{i k \cdot n} \, \varphi(n).
\end{align*}
$\F$ is a unitary operator when the spaces $l^2(\Z^2)$ and $L^2(\T^2)$ are equipped with their standard scalar products without normalization factors. The Fourier Transform on $l^2(\Z^2) \otimes \C^6$ is given by $\F \otimes I$ and the Fourier transformed QW is thus defined as $\widehat{U} = (\F \otimes I) U (\F \otimes I)^{-1}$. It is a unitary operator on $L^2(\T^2) \otimes \C^6$. We define the translation $T_v$ on elements of $l^2(\Z^2)$ by $T_v \varphi (x) = \varphi(x+v)$ and remember that $\big( \F T_v \, \varphi \big) (k) = e^{-i v \cdot k} \widehat{\varphi}(k)$. Since $U$ is translation invariant, an easy calculation shows that $\widehat{U}$ acts as a multiplication by the matrix 
\begin{align*}
    \widehat{U}(k) = \begin{pmatrix}
        0 & S_{BA}(k) C_B \\ S_{AB}(k) C_A & 0
    \end{pmatrix}.
\end{align*}
Here we have ordered the ONB of $\C^6$ by $f_1 \otimes e_1,..., f_1 \otimes e_3, f_2 \otimes e_1,..., f_2 \otimes e_3$. The blocks are given by
\begin{align*}
    S_{BA}(k) = \begin{pmatrix}
        e^{i (k_1-k_2)} & 0 & 0 \\
        0 & 1 & 0 \\
        0 & 0 & e^{-i k_2}
    \end{pmatrix} \;\; \text{and} \;\; S_{AB}(k) = \begin{pmatrix}
        e^{i k_2} & 0 & 0 \\
        0 & e^{i (k_2-k_1)} & 0 \\
        0 & 0 & 1
    \end{pmatrix}.
\end{align*}
The matrix $S_{AB}(k)$ (respectively $S_{BA}(k)$) corresponds to the Walker moving from $\Gamma_A$ sites to $\Gamma_B$ sites (respectively $\Gamma_B$ to $\Gamma_A$), while the coefficients $e^{i v \cdot k}$ stem from the expression of the shifts $S_j$ as defined in \eqref{eq: expre of shifts}. Due to the off-diagonal structure of $\widehat{U}(k)$, we have:
\begin{align*}
    \sigma(U) &= \bigcup_{k \in \T^2} \big\{ \pm \sqrt{\lambda} \, \text{ s.t. } \lambda \in \sigma(V(k)) \big\}\, \text{ where } V(k)= S_{BA}(k) \, C_B \, S_{AB}(k) \, C_A.
\end{align*}
Let $k\mapsto \lambda_j(k)$ denote the three eigenvalues of $V(k)$ and $\mu_j(k) = -i \log(\lambda_j(k))$ their real arguments, which we call the bands of $U$. Since $\det(V(k)) = \det(C_B) \, \det(C_A)$ is independent of $k$, we can normalize and assume without loss of generality that $V(k)$ has determinant $1$. The characteristic polynomial of $V(k)$ is given by
\begin{align*}
    \chi_{V(k)}(\lambda) = \lambda^3 - \tr(V(k)) \, \lambda^2 + \overline{\tr(V(k))} \, \lambda - 1.
\end{align*}
The spectrum of $U$ is thus characterized by the map $k \mapsto \tr(V(k))$. We immediately see that $\sigma_{cont}(U) = \varnothing$ if and only if $k \mapsto \tr(V(k))$ is constant, in which case $\sigma(U)$ consists of at most six infinitely degenerate eigenvalues.

Moreover, we can define the set of trace values $\mathcal{T} = \{\tr(U) | \, U \in SU(3) \}$. It is a closed subset of $\C$ and we can easily show that two bands agree (e.g. $\mu_1(k) = \mu_2(k)$) at some $k \in \T^2$ if and only if $\tr(V(k)) \in \partial \mathcal{T}$, see \cite{felix:22}. From this we can derive a fruitful theory regarding intersections in the band structure. In particular, we can obtain a criterion for conical intersections in the band structure based on the first two derivatives of the trace of $V(k)$. We note that conical intersections, also called Dirac points, are indeed observed for graphene, whose atomic structure is a hexagonal lattice \cite{Fefferman:2012}. Restricting the model to circulant coin matrices offers further results, such as intersections at $k=0$ always being conical. These results will be presented in the PhD manuscript of the author.

\section{Exponential decay of the fractional moments} \label{sec: proof of dyn loc total}
We use sections \ref{sec: frac mom bound} - \ref{sec: main proof} to prove Theorem \ref{thm:main_thm}, following the methods presented in \cite{HJS:09} and \cite{ABJ:2012}. The proof uses a finite volume method and is structured as follows: We start by showing that the fractional moments in \eqref{eq:expo_decay} are bounded by some constant $C(s)$ in section \ref{sec: frac mom bound}. We define the restriction of the Quantum Walk to some finite box $\Lambda_L$ of size $L$ in section \ref{sec: box definition} and show that the operator describing the transition between $\Lambda_L$ and $\Lambda_L^C$ is bounded independently of $L$ in section \ref{sec: trans op}. In section \ref{sec: spec gaps full loc op} we investigate the probability of gaps in the spectrum of the fully localized operator, i.e. when $C=C_0$. Using this, we are able to obtain polynomial decay in $|L|$ of the resolvent restricted to some box $\Lambda_L$ in section \ref{sec: poly dec box res}. The key ingredient for the proof of Theorem \ref{thm:main_thm} is finally obtained in section \ref{sec: iterative step}. We show that the expectation $\E \left( \left| \langle x^{(i)} \, | \, (U_\omega(C) - z)^{-1} \, y^{(j)} \rangle \right|^s \right)$ can be bounded by $q<1$ times $\E \left( \left| \langle v^{(k)} \, | \, (U_\omega(C) - z)^{-1} \, y^{(j)} \rangle \right|^s \right)$, where $v^{(k)}$ lies on the boundary of $\Lambda_L$ (and is thus closer to $y^{(j)}$). Repeated application of this result finally yields Theorem \ref{thm:main_thm} in section \ref{sec: main proof}. The appendices \ref{sec: first resampling argu} and \ref{sec: second resampling argu} are devoted to proving the so-called re-sampling arguments, which are key ingredients in the finite volume method and omitted in section \ref{sec: iterative step}.

\subsection{Fractional moment bound} \label{sec: frac mom bound}
In this section we prove the boundedness of the fractional moments in equation \eqref{eq:expo_decay} under the weaker assumption that the i.i.d random variables $(\omega_x)_{x \in \Gamma}$ are subject to an absolutely continuous distribution $\mu$ with bounded density $\tau$. We prove the following:
\begin{theorem} \label{thm: frac mom bound}
    For every $s \in (0,1)$ there exists $C(s) < \infty$ such that
    \begin{align} \label{eq: fract mom bound}
        \int_\T \int_\T | \langle x^{(i)} \, | (U_\omega - z )^{-1} y^{(j)} \rangle |^s \; d\mu( \omega_{x}) \, d\mu(\omega_{y}) \leq C(s)
    \end{align}
    for all $z \in \C$, $|z| \neq 1$, $x^{(i)}, y^{(j)} \in l^2(\Gamma) \otimes \C^3$, and arbitrary values of all other random variables $\omega_{v}$. Thus, the expectation in \eqref{eq:expo_decay} is bounded by $C(s)$.
\end{theorem}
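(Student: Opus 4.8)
The plan is to prove a single-site version of the fractional-moment bound first and then integrate in the second variable. Fix the lattice sites $x,y$ and coin indices $i,j$, and write $G_\omega(z) = (U_\omega - z)^{-1}$. The matrix element in question is $\langle x^{(i)} | G_\omega(z) y^{(j)} \rangle$. The key algebraic observation, as in Remark \ref{remark: randomness as random coin mat}, is that the randomness enters $U_\omega = D_\omega S\mathcal{C}$ through the phase $e^{i\omega_x}$ attached to site $x$, and since $D_\omega$ acts on the three-dimensional coin space over $x$ as the scalar $e^{i\omega_x}$ times the identity, varying $\omega_x$ amounts to a rank-three unitary perturbation supported on $\mathrm{span}\{x^{(1)},x^{(2)},x^{(3)}\}$. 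Concretely, write $U_\omega = (\mathbbm{1} + (e^{i\omega_x}-1) P_x) V$, where $P_x$ is the orthogonal projection onto that three-dimensional subspace and $V$ is the part of $U_\omega$ with $\omega_x$ removed (still unitary, and independent of $\omega_x$). Then resolvent identities express $\langle x^{(i)}|G_\omega(z) y^{(j)}\rangle$ as a rational function of $e^{i\omega_x}$ whose denominator is $\det$ of a $3\times 3$ matrix depending on $\omega_x$ only through $e^{i\omega_x}$.

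Next I would invoke a Schur/Aizenman–Graf–type a priori bound for fractional moments of resolvents of the form $(\mathbbm{1}+uP)V - z$ with $u$ ranging over the unit circle with bounded density. The standard statement is: if $M(u)$ is an $n\times n$ matrix depending on $u\in\mathbb{S}^1$ as $M(u) = A + uB$ (or a Möbius-type dependence) with $\det M(u)$ not identically zero, then for $s\in(0,1)$, $\int_{\mathbb{S}^1} \|M(u)^{-1}\|^s \, d\mu(u) \le C(s,n)\, \|\tau\|_\infty$, uniformly in $A,B$; this is the unitary analogue of the self-adjoint rank-one bound used in \cite{AizenmanMolchanov:1993} and adapted to QWs in \cite{HJS:09}, \cite{ABJ:2012}. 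The point is that the bound depends only on $s$, the dimension $n=3$ (or $6$), and the $L^\infty$-norm of the density $\tau$, not on the location of the poles or on the other random variables. Applying this with the explicit rational expression from the previous step gives $\int_\T |\langle x^{(i)}|G_\omega(z)y^{(j)}\rangle|^s\, d\mu(\omega_x) \le C(s)$ uniformly in $z$ with $|z|\ne 1$ and in all $\omega_v$, $v\ne x$.

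Finally, I would run the same argument in the variable $\omega_y$: the already-obtained bound holds for each fixed value of $\omega_y$, so integrating $d\mu(\omega_y)$ over a probability measure preserves it, giving \eqref{eq: fract mom bound}. (If one wants the bound to genuinely use both integrations symmetrically — e.g.\ because the diagonal case $x=y$ needs care — one can instead split off the single site $x$ when $x=y$ and the two sites $x,y$ when $x\ne y$, in the latter case the perturbation is block-diagonal of rank six and the same a priori bound applies with $n=6$.) The boundedness of $C(s)$ as $z$ approaches the unit circle is automatic because the a priori bound never sees $z$ directly — it only sees that $M(u)^{-1}$ exists for a.e.\ $u$, which holds since $U_\omega$ is unitary and $|z|\ne 1$ makes $U_\omega - z$ invertible for every $\omega$.

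The main obstacle is establishing the a priori fractional-moment bound for the rank-three (or rank-six) unitary perturbation with the uniform-in-everything constant. For rank one this is classical; for higher finite rank one needs the right form of the Combes–Thomas / Aizenman–Molchanov estimate for the determinant of $A+uB$, controlling $\int_{\mathbb{S}^1} |\det(A+uB)|^{-s}\,d\mu(u)$ and more refinedly the individual matrix entries of the inverse, uniformly in $A,B$. The cleanest route is probably to diagonalize the perturbation: write $M(u)^{-1}$ via Cramer's rule, bound the numerator (a polynomial in $u$ of bounded degree with coefficients controlled by $\|A\|,\|B\|$, which here are controlled because $U_\omega$ is unitary hence norm $1$) and handle the denominator $|\det M(u)|^{-s}$ by the elementary fact that a degree-$\le 3$ polynomial in $e^{i\omega}$ that is not identically zero has $\int_\T |p(e^{i\omega})|^{-s}\,d\mu \le C(s)\,\|\tau\|_\infty$ with $C(s)$ depending only on the degree — this is where $s<1$ is used. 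I expect this determinant estimate, together with a uniform lower bound argument ruling out the degenerate case where $\det M(u)$ vanishes identically, to be the technical heart of the section.
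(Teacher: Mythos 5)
Your high-level strategy (treat the random phase as a finite-rank unitary perturbation, reduce the matrix element to a finite-dimensional object depending rationally on $e^{i\omega}$, and prove a uniform a priori bound for its fractional moment) is the right philosophy and is broadly what the paper does, following \cite{HJS:09}. But the technical core you propose --- Cramer's rule plus the claim that $\int_\T |p(e^{i\omega})|^{-s}\,d\mu \le C(s)$ for any not-identically-zero polynomial of degree $\le 3$ --- is a genuine gap, for two reasons. First, no such bound holds without a normalization of $p$: taking $p\equiv\varepsilon$ gives $\varepsilon^{-s}$, and in your setting the coefficients of $\det M(u)$ are built from $P_x(V-z)^{-1}P_x$, whose size is not controlled uniformly as $|z|\to1$ or as the other phases $\omega_v$ vary; the ``uniform lower bound argument ruling out the degenerate case'' that you defer is precisely the content of the theorem, not a technicality. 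Second, even for monic polynomials of degree $d=3$ a bound uniform over the coefficients fails for $s\ge 1/3$ (let all three roots coalesce on the unit circle), so this route could at best reach $s\in(0,\tfrac13)$, not the claimed $s\in(0,1)$. Relatedly, your remark that the a priori bound ``never sees $z$ directly'' and that uniformity as $|z|\to1$ is automatic is exactly backwards: that uniformity is the hard part.

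The paper avoids the determinant entirely by exploiting a dissipative structure. It averages over the common phase $\alpha=\tfrac12(\omega_x+\omega_y)$ of the two sites (a rank-six perturbation), writes the compressed operator $F_z=P(U_\omega+z)(U_\omega-z)^{-1}P$ as an explicit M\"obius-type function of $-i\widehat F_z$ and $m(\alpha)=-\cot(\alpha/2)$, where $-i\widehat F_z$ is maximally dissipative for $|z|>1$ (and $+i\widehat F_z$ for $|z|<1$), and then applies the weak-$L^1$ spectral-averaging bound for resolvents of maximally dissipative finite-dimensional operators (Lemma \ref{lemma: dissipative integral bound}, from \cite{AENSS:2005}) after the substitution $x=m(\alpha)$ and a layer-cake argument. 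The dissipativity, which encodes the unitarity of $U_\omega$ together with $|z|\neq1$, is what replaces the normalization your polynomial argument lacks, and it is what makes the constant uniform in $z$, in the remaining random variables, and valid for all $s\in(0,1)$. If you want to salvage a single-site version you would still need a rank-three analogue of this spectral-averaging input; the elementary degree-count argument cannot supply it.
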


The proof of Theorem \ref{thm: frac mom bound} mainly follows the steps of \cite{HJS:09} (Theorem 3.1). However, due to our model being less random, the resulting dissipative operator $-i \widehat{F}_z$ acts on a six-dimensional space instead of a two-dimensional one, leading us to extend the integral bound presented in \cite{HJS:09} to any finite-dimensional space.

\begin{proof}
    The bound \eqref{eq: fract mom bound} is trivial when $z$ is away from the unit circle:
    \begin{align*}
        | \langle x^{(i)} \, | (U_\omega - z )^{-1} \, y^{(j)} \rangle | \leq \|(U_\omega - z )^{-1} \| = \frac{1}{\text{dist}(z, \sigma(U_\omega))} \leq \frac{1}{| 1 - |z| \, |}. 
    \end{align*}
    We may thus assume without loss of generality, that $\frac{1}{2} \leq |z| \leq 2$ and $|z| \neq 1$. Furthermore, it holds that
    \begin{align*}
        \frac{1}{2z} \, \Big( (U_\omega + z) (U_\omega - z)^{-1} - 1 \Big) = (U_\omega - z)^{-1}.
    \end{align*}
    To prove \eqref{eq: fract mom bound} it thus suffices to show the similar bound on $(U_\omega + z)(U_\omega-z)^{-1}$:
    \begin{align} \label{eq: bound on U plus z U minus z}
        \int_\T \int_\T | \langle x^{(i)} \, | (U_\omega+z) \, (U_\omega - z )^{-1} \, y^{(j)} \rangle |^s \; d\mu( \omega_{x}) \, d\mu(\omega_{y}) \leq C(s).
    \end{align}
    We assume for now that $x^{(i)} \neq y^{(j)}$, while the simpler case $x^{(i)} = y^{(j)}$ will be dealt with at the end. Following the steps of \cite{HJS:09}, we introduce the change of variables
    \begin{align*}
        \alpha = \frac{1}{2} \big( \omega_{x} + \omega_{y} \big) \;\;\; \text{and} \;\;\; \beta = \frac{1}{2} \big( \omega_{x} - \omega_{y} \big).
    \end{align*}
    Furthermore, we introduce the unitary operators:
    \begin{align*}
        D_\alpha v^{(k)} &= \begin{cases}
            e^{i \alpha} \, v^{(k)} & \text{if } v \in \{ x, y \} \\
            v^{(k)} & \text{else}
        \end{cases}, \;\;
        D_\beta v^{(k)} = \begin{cases}
            e^{i \beta} \, v^{(k)} & \text{if } v = x\\
            e^{-i \beta} \, v^{(k)} & \text{if } v = y \\
            v^{(k)} & \text{else}
        \end{cases} \\ \widehat{D} v^{(k)} &= \begin{cases}
            v^{(k)} & \text{if } v \in \{ x, y \} \\
            e^{i \omega_{v}} \, v^{(k)} & \text{else}
        \end{cases}.
    \end{align*}
    By definition, we have $D_\omega = D_\alpha \, D_\beta \, \widehat{D}$. We define the unitary operator $V_\omega = D_\beta \, \widehat{D} \, U$. Denoting by $P$ the projection onto the subspace spanned by $\{V_\omega^{-1} x^{(k)}, V_\omega^{-1} y^{(k)}, \, k=1,2,3 \}$, we easily verify that $(U_\omega - V_\omega) \, (1 - P) = 0$ and $V_\omega^{-1} \, U_\omega P = e^{i \alpha} P$. We conclude
    \begin{align} \label{eq: fmb estimate 1}
        U_\omega = U_\omega \, P + U_\omega \, (1-P) = e^{i \alpha} \, V_\omega \, P + V_\omega \, (1-P).
    \end{align}
    For $|z| \notin \{0,1\}$ we define the operators
    \begin{align*}
        F_z = P (U_\omega + z) (U_\omega -z)^{-1} P \text{  and  } \widehat{F}_z = P (V_\omega + z) (V_\omega -z)^{-1} P,
    \end{align*}
    viewing them as operators on the range of $P$. Using that $[V_\omega +z, (V_\omega -z)^{-1}] =0$ and $\big( (V_\omega - z)^{-1} \big)^* = (V_\omega^* - \overline{z})^{-1}$, we compute
    \begin{align*}
    \begin{split}
        \widehat{F}_z + \widehat{F}_z^* = P \Big( 2\big( I - |z|^2 \big) (V_\omega -z)^{-1} \, \big( (V_\omega -z)^{-1} \big)^* \Big) P.
    \end{split}
    \end{align*}
    Assuming that $|z| > 1$, we conclude that $\widehat{F}_z + \widehat{F}_z^* < 0 $ is invertible. Thus, $-i \widehat{F}_z$ is a dissipative operator if $|z| > 1$. Note that we call an operator $A$ dissipative if its imaginary part $\frac{1}{2i} (A-A^*)$ is positive. Using $\widehat{F}_z^{-1} + ( \widehat{F}_z^{-1} )^* = \widehat{F}_z^{-1} \big( \widehat{F}_z + \widehat{F}_z^* \big) (\widehat{F}_z^{-1})^*$, we conclude $ \widehat{F}_z^{-1} + (\widehat{F}_z^{-1})^* < 0$ and see that $-i \widehat{F}_z^{-1}$ is also dissipative for $|z| > 1$. Similarly, we obtain that $i \widehat{F}_z$ and $i \widehat{F}_z^{-1}$ are dissipative for $|z| < 1$. Employing the resolvent identity and \eqref{eq: fmb estimate 1}, we compute
    \begin{align*}
        F_z - \widehat{F}_z = P \Big( 2z (V_\omega - z)^{-1} (1- e^{i \alpha}) V_\omega \, P \, (U_\omega -z)^{-1} \Big) P.
    \end{align*}
    We use the definitions of $F_z$ and $\widehat{F}_z$ to obtain on the range of $P$:
    \begin{align*}
        F_z - \widehat{F}_z = \frac{1}{2} \, (e^{i \alpha} -1) \, (1 + \widehat{F}_z) \, (1 - F_z). 
    \end{align*}
    Following the steps from \cite{HJS:09}, we define $m(\alpha) = i \frac{1+e^{i \alpha}}{1-e^{i \alpha}} = - \cot \big(\frac{\alpha}{2}\big)$ for $\alpha \notin \pi \Z$ and obtain from a direct computation:
    \begin{align} \label{eq: fmb estimate 2}
        F_z = -i \Big( -i \widehat{F}_z + m(\alpha) \Big)^{-1} -i \Big( -i \widehat{F}_z^{-1} - \frac{1}{m(\alpha)} \Big)^{-1}.
    \end{align}
    Using $V_\omega^{-1} \, U_\omega P = e^{i \alpha} P$ and unitarity of $V_\omega$, we obtain:
    \begin{align*}
        \langle x^{(i)} \, | \, (U_\omega+z) \, (U_\omega -z)^{-1} \, y^{(j)} \rangle = \langle V_\omega^* \, x^{(i)} \, | \, F_z V_\omega^* \, y^{(j)} \rangle.
    \end{align*}
    This allows us to estimate the integral in \eqref{eq: bound on U plus z U minus z}:
    \begin{align*}
        &\int_{\T^2} | \langle x^{(i)} \, | \, (U_\omega+z) \, (U_\omega - z )^{-1} y^{(j)} \rangle |^s \; d\mu( \omega_{x}) \, d\mu(\omega_{y}) \leq \| \tau \|_\infty^2 \int_{-\pi}^\pi \int_0^{2\pi} \| F_z \|^s \; d\alpha \, d\beta .
    \end{align*}
    We stress that while $F_z$ depends on $\alpha$, $\widehat{F}_z$ does not. Note that for $s < 1$, we have $|a+b|^s \leq |a|^s + |b|^s$ for all $a,b \in \R$. Plugging equation \eqref{eq: fmb estimate 2} into the integral above, we obtain
    \begin{align} \label{eq: fmb estimate 3}
        \int_0^{2\pi} \| F_z \|^s \; d\alpha \leq \int_0^{2\pi} \| \big(-i \widehat{F}_z + m(\alpha) \big)^{-1} \|^s \; d\alpha + \int_0^{2\pi} \| \Big(-i \widehat{F}_z^{-1} - \frac{1}{m(\alpha)} \Big)^{-1} \|^s \; d\alpha .
    \end{align}
    Using that $\widehat{F}_z$ does not depend on $\alpha$, we substitute $x=m(\alpha)$ to bound the first integral on the right hand side of \eqref{eq: fmb estimate 3}:
    \begin{align*}
        \int_0^{2\pi} \| \big(-i \widehat{F}_z + m(\alpha) \big)^{-1} \|^s \; d\alpha &= \lim_{R \to \infty} \int_{-R}^{R} \frac{2}{1+x^2} \, \| \big(-i \widehat{F}_z + x \big)^{-1} \|^s \; dx \\ &\leq 2 \sum_{n \in \Z} \frac{1}{1+(|n|-1)^2} \, \int_n^{n+1} \| (-i \widehat{F}_z + x)^{-1} \|^s \, dx.
    \end{align*}
    A similar expression holds for the second term in \eqref{eq: fmb estimate 3}. The operators $-i \widehat{F}_z$ and $-i \widehat{F}^{-1}_z$ are defined on $\text{Ran}(P)$, which has dimension six and are thus maximally dissipative. Since $\text{Ran}(P)$ is two-dimensional in \cite{HJS:09}, we have to extend their results to any dissipative operator on a finite dimensional space, using Lemma \ref{lemma: dissipative integral bound} from \cite{AENSS:2005}: By mapping an arbitrary orthonormal basis of $\text{Ran}(P)$ to the standard basis of $\C^6$, we view $-i \widehat{F}_z$ and $-i \widehat{F}^{-1}_z$ as operators on $\C^6$. Due to the map $\text{Ran}(P) \to \C^6$ being unitary, the resulting operators on $\C^6$ are maximally dissipative and $\| (-i \widehat{F}_z -x)^{-1} \|$ and $\|(-i \widehat{F}^{-1}_z -x)^{-1}\|$ are invariant. We write $A$ for $-i \widehat{F}_z$ and $-i \widehat{F}^{-1}_z$ and estimate the integrals in \eqref{eq: fmb estimate 3} for $|z| > 1$. Using the layer cake representation, we rewrite the integral of $\|(A+x)^{-1}\|^s$ on some unit interval:
    \begin{align*}
        &\int_n^{n+1} \| (A + x)^{-1} \|^s \, dx = \int_n^{n+1} \int_0^\infty \mathbbm{1}_{ \{ \| (A+x)^{-1} \|^s > t \} } (x) \, dt \, dx \\
        &= \int_n^{n+1} \int_0^\infty s \, t^{s-1} \, \mathbbm{1}_{ \{ \| (A+x)^{-1} \| > t \} } (x) \, dt \, dx = \int_0^\infty s \, t^{s-1} \int_n^{n+1} \mathbbm{1}_{ \{ \| (A+x)^{-1} \| > t \} } (x) \, dx \, dt \\
        &= \int_0^1 s \, t^{s-1} \int_n^{n+1} \mathbbm{1}_{ \{ \| (A+x)^{-1} \| > t \} } (x) \, dx \, dt + \int_1^\infty s \, t^{s-1} \int_n^{n+1} \mathbbm{1}_{ \{ \| (A+x)^{-1} \| > t \} } (x) \, dx \, dt .
    \end{align*}
    In the first term, we bound the inner integral by $1$, and use that $\int_0^1 s \, t^{s-1} \, dt < \infty$ for $s > 0$. For the second term, we apply Lemma \ref{lemma: dissipative integral bound} below with $M_1 = I = M_2$:
    \begin{align*}
        \int_n^{n+1} \mathbbm{1}_{ \{ \| (A+x)^{-1} \| > t \} } (x) \, dx = | \left\{ x \in [n, n+1] \, \text{ s.t. } \|(A+x)^{-1} \| > t \right\} | \leq \frac{c}{t},
    \end{align*}
    where the constant $c$ is independent of $A$ and $n$ (but does depend on the dimension of $\text{Ran}(P)$). We use that $\int_1^\infty t^{s-2} \, dt < \infty$ for $s < 1$ to bound the second term. Thus, there exists a constant $C(s)$ such that
    \begin{align*}
        \int_n^{n+1} \| (A + x)^{-1} \|^s \, dx \leq C(s)
    \end{align*}
    for all maximally dissipative $A$, where $C(s)$ is independent of $n$ and $\text{Ran}(P)$. Using that $\sum_{n \in \Z} \frac{1}{1 + (|n|-1)^2} < \infty$, we can therefore bound the right hand side of \eqref{eq: fmb estimate 3} for all $|z| > 1$. If $|z| < 1$, we drop a minus sign in both norms in \eqref{eq: fmb estimate 3} and use that $i \widehat{F}_z$ and $i \widehat{F}_z^{-1}$ are dissipative. Repeating the arguments from above yields the bound for all $|z| < 1$.

    The case $x^{(i)} = y^{(j)}$ is significantly easier: We can directly take $\alpha = \omega_{x}$ and do not need $\beta$. The operators $F_z$ and $\widehat{F}_z$ act on $\C^3$, while all other estimates still hold. The integrals in \eqref{eq: fmb estimate 3} can be similarly bounded by Lemma \ref{lemma: dissipative integral bound}.
\end{proof}

We state a simplified version of Lemma 3.1 from \cite{AENSS:2005}, which we use to bound the integrals in the proof of Theorem \ref{thm: frac mom bound}.
\begin{lemma} \label{lemma: dissipative integral bound}
    Let $\Hp$ be a separable Hilbert space, $A$ be a maximally dissipative operator with strictly positive imaginary part and $M_1, M_2: \Hp \to \Hp$ be Hilbert-Schmidt operators. Then there exists a constant $c$ independent of $A$, $M_1$ and $M_2$ such that for any $t > 0$:
    \begin{align*}
        | \left\{ x \in \R \, \text{s.t. } \|M_1 \, (A+x)^{-1} \, M_2 \| > t \right\} | \leq c \, \|M_1\|_{HS} \, \|M_2\|_{HS} \, \frac{1}{t},
    \end{align*}
    where $| \cdot |$ denotes the Lebesgue measure.
\end{lemma}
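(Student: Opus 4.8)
The plan is to follow \cite{AENSS:2005} and reduce the operator-norm estimate to the classical weak-type $(1,1)$ bound for Cauchy transforms of finite positive measures. The key structural input is an operator-valued Herglotz representation of the resolvent of a maximally dissipative operator. Since $A$ is maximally dissipative, $(A+z)^{-1}$ is analytic on the open upper half-plane, and from the identity $\mathrm{Im}\big[-(A+z)^{-1}\big] = -(A+z)^{-1}\big(\mathrm{Im}\,A + \mathrm{Im}\,z\big)(A^*+\bar z)^{-1} \ge 0$ together with $-(A+z)^{-1} \sim -z^{-1}\, I$ at infinity, one obtains a positive operator-valued measure $N$ on $\R$ with $N(\R) = I$ and
\begin{align*}
    -(A+z)^{-1} = \int_\R \frac{dN(\lambda)}{\lambda - z}, \qquad \mathrm{Im}\, z > 0 .
\end{align*}
In particular the boundary values $(A + x + i0)^{-1}$ exist for almost every $x$, and $-\mathrm{Im}(A+x+i0)^{-1} = \pi\, n(x)$ where $n = dN/dx$ is the almost-everywhere defined density.

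The argument then rests on two ingredients. First, conjugating by $M_1$ and $M_2$ gives $M_1 (A + x + i0)^{-1} M_2 = -\int_\R (\lambda - x - i0)^{-1}\, d\mu(\lambda)$ with $d\mu(\lambda) = M_1\, dN(\lambda)\, M_2$; since $\int_\R \|M_1\, n(\lambda)^{1/2}\|_{HS}^2\, d\lambda = \mathrm{Tr}\big[M_1^* M_1\, N(\R)\big] = \|M_1\|_{HS}^2$ and likewise for $M_2$, Cauchy--Schwarz yields $\int_\R \|d\mu(\lambda)\|_{S_1} \le \|M_1\|_{HS}\,\|M_2\|_{HS}$, so that $M_1 (A + \cdot + i0)^{-1} M_2$ is the Cauchy transform of a trace-class-valued measure of total variation at most $\|M_1\|_{HS}\|M_2\|_{HS}$. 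Second, since $\|M_1 (A + x + i0)^{-1} M_2\| \le \|M_1 (A + x + i0)^{-1} M_2\|_{S_1} = \sup_{\|K\|\le 1} \big|\mathrm{Tr}\big[K\, M_1 (A+x+i0)^{-1} M_2\big]\big|$ and each $\mathrm{Tr}[K\, d\mu]$ is a complex scalar measure of total variation at most $\|M_1\|_{HS}\|M_2\|_{HS}$, the classical weak-$(1,1)$ estimate for Cauchy transforms of finite measures (Boole's equality, Loomis's lemma) bounds every $x \mapsto \int (\lambda - x - i0)^{-1}\,d\big(\mathrm{Tr}[K\mu]\big)(\lambda)$ in $L^{1,\infty}(\R)$ with an absolute constant. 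This last step is the source of the independence of $c$ from $A$, $M_1$, $M_2$; by contrast a crude bound such as $\|(A+x)^{-1}\| \le \mathrm{dist}(x, -\sigma(A))^{-1}$ would only confine the super-level set to an interval whose length depends on $A$.

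The remaining, and main, obstacle is to convert these estimates for the individual scalar measures $\mathrm{Tr}[K\mu]$ into a bound on the operator norm, i.e. to carry out the supremum over $K$ (equivalently, to sum over a basis) without degrading the rate $1/t$, since a sum of weak-$L^1$ functions need not be weak-$L^1$. In \cite{AENSS:2005} this is done through a careful use of the Schatten-norm structure together with the mass bound $\int_\R \langle \phi,\, -\mathrm{Im}(A+x+i0)^{-1}\phi\rangle\, dx \le \pi\|\phi\|^2$. For the use made of the lemma in the proof of Theorem \ref{thm: frac mom bound}, however, only the finite-dimensional case is needed --- $\dim \Hp \in \{3,6\}$ and $M_1 = M_2 = I$ --- where $\|\cdot\| \le (\dim\Hp)^{1/2}\|\cdot\|_{HS}$ lets one sum the finitely many matrix-element estimates directly, and $A + x$ is uniformly invertible so that $N$ is absolutely continuous with bounded density, giving the asserted bound with a constant depending on $\dim\Hp$ only. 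Accordingly, I would state the lemma as above and refer to \cite{AENSS:2005} for the general argument, spelling out only the specialisation to $\C^6$ used in the proof of Theorem \ref{thm: frac mom bound}.
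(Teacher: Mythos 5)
The paper does not prove this lemma at all: it is stated as a simplified version of Lemma 3.1 in \cite{AENSS:2005} and invoked as a black box, so your proposal --- which reconstructs the Herglotz-representation/Cauchy-transform argument behind that lemma, correctly isolates the one genuinely hard step (passing from the scalar weak-$(1,1)$ bounds for $\mathrm{Tr}[K\mu]$ to the operator norm without degrading the $1/t$ rate) and defers exactly that step to \cite{AENSS:2005}, while giving a complete elementary argument in the finite-dimensional case with $M_1=M_2=I$, which is all that the application in Theorem \ref{thm: frac mom bound} requires --- is consistent with, and more informative than, what the paper does. One small slip: the displayed identity should read $\mathrm{Im}\big[-(A+z)^{-1}\big]=\big[(A+z)^{-1}\big]^{*}\,\big(\mathrm{Im}\,A+\mathrm{Im}\,z\big)\,(A+z)^{-1}$; as you wrote it, with the leading minus sign and the adjoint on the right-hand factor, the expression is negative semidefinite, contradicting the (correct) positivity you conclude from it.
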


\subsection{The boundary of a box} \label{sec: box definition}
We want to define a "box" $\Lambda_L$ for any size $L \in \N^2$, whose sides have lengths $L_1$ and $L_2$, such that the Quantum Walker is unable to cross the boundary of $\Lambda_L$. Restricting the Walker to some box $\Lambda_L$ is achieved by changing the coin matrix at specific lattice sites on the boundary of $\Lambda_L$. In other words, we want to obtain unitary operators $U_\omega^{(L)} = U_\omega^{\Lambda_L} \oplus U_\omega^{\Lambda_L^C}$ and subspaces $\Hp_L \oplus \Hp_L^C = \Hp$ such that $\Hp_L$, respectively $\Hp_L^C$, are invariant under $U_\omega^{\Lambda_L}$, respectively $U_\omega^{\Lambda_L^C}$. Note that we call a subspace $\Hp' \subset \Hp$ invariant under $U$ if $U \Hp' \subset \Hp'$. Recalling the definition of $\Hp^{j,k}$ \eqref{eq:def_H_jk}, we use that $C_0$ induces a fully localized Quantum Walk, see section \ref{sec: model and results}, and define the invariant subspaces:
\begin{align*}
    \Hp_L = \bigoplus_{\substack{-L_1  \leq j \leq L_1 -1 \\ -L_2  \leq k \leq L_2 - 1 \\ j+k > -L_1-L_2}} \Hp^{j,k}, \;\; \Hp_L^C = \Hp \setminus \Hp_L.
\end{align*}
The choice $j+k>-L_1-L_2$ is not necessary, but simplifies the structure of $\Lambda_L$, see Figure \ref{fig:boundary coins with origin}. We call the number of $\Gamma_A$-vertices in $\Lambda_L$ the volume of $\Lambda_L$, that is:
\begin{align} \label{eq:def vol LambdaL}
    \text{vol}\left(\Lambda_L\right) = 4L_1 L_2 -1\;\; \text{ and } \;\; |L| = \sqrt{L_1^2 + L_2^2}.
\end{align}
To obtain a Quantum Walk such that these two subspaces are invariant, we need to change the coin matrix at specific lattice sites from $C$ to $C_0$. In particular, we use the coin matrix $C_0$ at all $\Gamma_B$ sites in 
\begin{align*}
    \Gamma_{C_0}^{(L)} = \Big\{ \ket{j,k} &\otimes \begin{pmatrix}
        0 \\ 1
    \end{pmatrix} \text{ s.t. } \Big( -L_1 \leq j \leq L_1-1, k = L_2-1 \Big) \text{ or } \\ &\Big( -L_1+1  \leq j \leq L_1, k = -L_2 -1 \Big) \text{ or } \Big( j = L_1, -L_2  \leq k \leq L_2 -2 \Big) \\ &\text{ or } \Big( j = -L_1, -L_2  \leq k \leq L_2 -2 \Big) \Big\}.
\end{align*}
We stress that lattice sites in $\Gamma_{C_0}^{(L)}$ will now be split between $\Hp_L$ and $\Hp_L^C$ depending on their coin state. 
\begin{figure} [h!]
    \centering
    \includegraphics[scale=0.4]{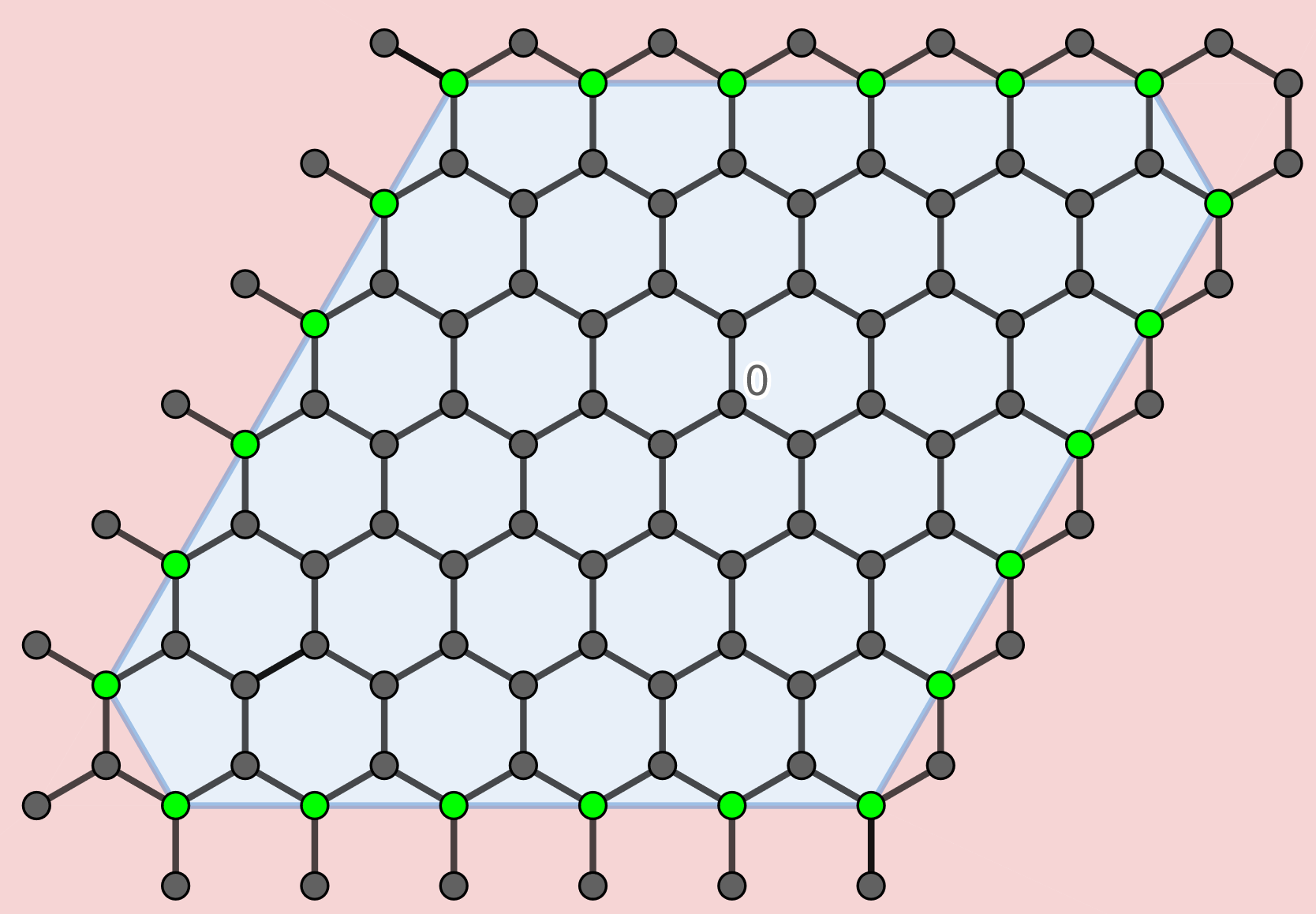}
    \caption{The coin matrices are changed from $C$ to $C_0$ on all green lattice sites. The subspace $\Hp_L$ for $L=(3,3)$ is illustrated as the inner blue area, whereas $\Hp_L^C$ is the exterior red area. Note that sites in $\Gamma_{C_0}^{(L)}$ are split depending on the coin state.}
    \label{fig:boundary coins with origin}
\end{figure}

\noindent The new coin operator is defined as
\begin{align*}
    \mathcal{C}^{\Lambda_L} (\ket{x} \otimes e_j) = \begin{cases} 
        \ket{x} \otimes C_0 \, e_j & \text{if } x \in \Gamma_{C_0}^{(L)} \\
        \ket{x} \otimes C e_j & \text{else}
    \end{cases}.
\end{align*}
We define $U_\omega^{(L)} = D_\omega S \mathcal{C}^{\Lambda_L}$. Note that to pass from $\Lambda_L$ to $\Lambda_L^C$, the Walker needs to cross a site in $\Gamma_{C_0}^{(L)}$. However, due to the choice of coin matrix, the Walker is reflected back at these sites and thus prevented from crossing the boundary of $\Lambda_L$. Denoting by $U_\omega^{\Lambda_L}$ and $U_\omega^{\Lambda_L^C}$ the restrictions of $U_\omega^{(L)}$ to $\Hp_L$ and $\Hp_L^C$, we conclude that $U_\omega^{(L)} = U_\omega^{\Lambda_L} \oplus U_\omega^{\Lambda_L^C}$ and $\Hp_L$, $\Hp_L^C$ are invariant under $U_\omega^{\Lambda_L}$, respectively $U_\omega^{\Lambda_L^C}$.

We note that our construction can be shifted by any lattice shift $v \in \Gamma_A$. We can therefore also consider the operator $U$ restricted to some box $\Lambda_L + v$. Furthermore, we stress that the operator $U_\omega^{(L)}$ is unitary. If we had simply projected onto a subset of $l^2(\Gamma) \otimes \C^3$ (i.e. $U^{\Lambda_L} = PUP + (1-P)U(1-P)$) we would have lost unitarity.

\subsection{The transition operator} \label{sec: trans op}
We define the transition operator:
\begin{align*}
    T_\omega^{(L)}(C) = U_\omega(C) - U_\omega^{(L)}(C) \in \mathcal{B}(\Hp).
\end{align*}
We note that $T_\omega^{(L)}(C)$ applied to some $x^{(i)}$ can only be non-zero if the coin matrix is changed at $x \in \Lambda$. We will show that its norm is bounded by the difference $\|C - C_0 \|_\infty$ independently of $L$. For this, we will make use of the following Lemma, see e.g. \cite{Kato:66}:
\begin{lemma} \label{lem:SchurTest}
    Let $A$ be an infinite matrix such that $\xi(A) = \sup_{i \in \Z} \sum_{j \in \Z} |A_{i,j}| < \infty$ and $\eta(A) = \sup_{j \in \Z} \sum_{i \in \Z} |A_{i,j}| < \infty$, then $A$ is a bounded operator on $l^2(\Z)$ with $\|A\|_{l^2}^2 \leq \xi(A) \, \nu(A)$.
\end{lemma}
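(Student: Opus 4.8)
The statement is the classical Schur test, so the plan is to give the standard two-step argument: first prove boundedness (hence that $A$ defines an operator on $l^2(\Z)$) and the norm bound simultaneously via a Cauchy--Schwarz splitting of the summand $|A_{i,j}|$, then verify that the formal matrix action is the genuine operator. First I would fix $\varphi \in l^2(\Z)$ and write, for each index $i$,
\begin{align*}
    \left| (A\varphi)_i \right| \leq \sum_{j \in \Z} |A_{i,j}|^{1/2} \cdot |A_{i,j}|^{1/2} |\varphi_j|,
\end{align*}
and then apply the Cauchy--Schwarz inequality to the two factors $|A_{i,j}|^{1/2}$ and $|A_{i,j}|^{1/2}|\varphi_j|$, obtaining
\begin{align*}
    \left| (A\varphi)_i \right|^2 \leq \left( \sum_{j \in \Z} |A_{i,j}| \right) \left( \sum_{j \in \Z} |A_{i,j}| \, |\varphi_j|^2 \right) \leq \xi(A) \sum_{j \in \Z} |A_{i,j}| \, |\varphi_j|^2.
\end{align*}

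Next I would sum over $i$, interchange the order of summation (legitimate by Tonelli, since all terms are non-negative), and use $\sum_{i \in \Z} |A_{i,j}| \leq \eta(A)$ for each fixed $j$:
\begin{align*}
    \sum_{i \in \Z} \left| (A\varphi)_i \right|^2 \leq \xi(A) \sum_{j \in \Z} |\varphi_j|^2 \sum_{i \in \Z} |A_{i,j}| \leq \xi(A) \, \eta(A) \, \|\varphi\|_{l^2}^2.
\end{align*}
This shows $\|A\varphi\|_{l^2}^2 \leq \xi(A)\,\eta(A)\,\|\varphi\|_{l^2}^2$, which is the desired estimate $\|A\|_{l^2}^2 \leq \xi(A)\,\nu(A)$ (here $\nu(A) = \eta(A)$). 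The only slightly delicate point, which I would address first, is that the series $(A\varphi)_i = \sum_j A_{i,j}\varphi_j$ is absolutely convergent for each $i$: this follows from Cauchy--Schwarz, $\sum_j |A_{i,j}||\varphi_j| \leq \big(\sum_j |A_{i,j}|\big)^{1/2}\big(\sum_j |A_{i,j}||\varphi_j|^2\big)^{1/2}$, whose right-hand side is finite since the first factor is at most $\xi(A)^{1/2}$ and the second is bounded by the computation above applied to the single index $i$.

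There is essentially no serious obstacle here; the argument is entirely standard. The one point requiring a modicum of care is bookkeeping with the two auxiliary quantities $\xi(A)$ (row sums) and $\eta(A) = \nu(A)$ (column sums) so that each is used on the correct side of the double sum — using $\xi$ to control the inner $j$-sum after Cauchy--Schwarz and $\eta$ to control the $i$-sum after the Tonelli interchange. I would also remark that in our application $A$ will always be a finite-rank (in fact compactly supported) perturbation, so convergence issues are vacuous, but the lemma is stated and proved in full generality for clarity. Finally, since in the paper this lemma is applied to operators on $l^2(\Gamma)\otimes\C^3 \simeq l^2(\Z^2;\C^6)$ rather than literally $l^2(\Z)$, I would note that $\Z$ may be replaced by any countable index set without any change to the proof, the entries $A_{i,j}$ then being taken with respect to a fixed orthonormal basis.
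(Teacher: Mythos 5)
Your proof is correct and is the standard Cauchy--Schwarz/Tonelli argument for the Schur test; the paper itself offers no proof of this lemma, simply citing the literature (Kato), so there is nothing to compare against and your write-up would serve as a complete self-contained justification. You are also right that the $\nu(A)$ in the stated bound is a typo for $\eta(A)$.
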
 \noindent
This allows us to prove the following Proposition:
\begin{proposition} \label{propo:norm trans op}
    The norm of $T_\omega^{(L)}(C)$ is bounded independently of $L$, that is
    \begin{align*}
        \|T_\omega^{(L)}(C) \| \leq 3 \|C- C_0 \|_\infty .
    \end{align*}
\end{proposition}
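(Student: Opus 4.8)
The plan is to reduce everything to a bound on the single $3\times 3$ matrix $C - C_0$. First I would observe that $U_\omega(C)$ and $U_\omega^{(L)}(C)$ are built from the \emph{same} random phases $D_\omega$ and the \emph{same} shift $S$; only the coin operator is modified when passing to the box. Hence $T_\omega^{(L)}(C) = D_\omega S\big(\mathcal{C}(C) - \mathcal{C}^{\Lambda_L}\big)$. Since $D_\omega$ multiplies each ONB vector $\ket{x}\otimes e_i$ by a phase, and $S$ merely permutes this ONB (each $S_j$ being a permutation of the lattice basis), the operator $D_\omega S$ is unitary. Therefore $\|T_\omega^{(L)}(C)\| = \|\mathcal{C}(C) - \mathcal{C}^{\Lambda_L}\|$; in particular the bound will not depend on $\omega$.

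Next I would identify the matrix of $\mathcal{C}(C) - \mathcal{C}^{\Lambda_L}$ in the ONB $\big(\ket{x}\otimes e_i\big)_{x,i}$. By the definition of $\mathcal{C}^{\Lambda_L}$, this operator is block-diagonal with respect to the decomposition of $\Hp$ over lattice sites: on a site $x \notin \Gamma_{C_0}^{(L)}$ the two coin operators agree and the block vanishes, while on a site $x \in \Gamma_{C_0}^{(L)}$ the block equals $C - C_0$. Thus only finitely many blocks are nonzero, but — and this is the point that makes the estimate uniform in $L$ — every nonzero block is a $3\times 3$ matrix with entries bounded by $\|C - C_0\|_\infty$, regardless of how large $L$ is.

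Then I would apply the Schur test, Lemma \ref{lem:SchurTest}. For any basis vector $\ket{x}\otimes e_i$ the corresponding row of $\mathcal{C}(C) - \mathcal{C}^{\Lambda_L}$ is either zero or consists of the three entries $(C - C_0)_{i1}, (C - C_0)_{i2}, (C - C_0)_{i3}$, whose absolute values sum to at most $3\|C - C_0\|_\infty$; the same bound holds column-wise. Hence $\xi\big(\mathcal{C}(C) - \mathcal{C}^{\Lambda_L}\big) \le 3\|C - C_0\|_\infty$ and $\eta\big(\mathcal{C}(C) - \mathcal{C}^{\Lambda_L}\big) \le 3\|C - C_0\|_\infty$, so Lemma \ref{lem:SchurTest} yields $\|\mathcal{C}(C) - \mathcal{C}^{\Lambda_L}\|^2 \le 9\|C - C_0\|_\infty^2$, which is the claim. (Equivalently, one may skip the Schur test and use that the norm of a block-diagonal operator is the supremum of the norms of its blocks together with $\|M\| \le 3\|M\|_\infty$ for any $3\times 3$ matrix $M$.)

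All steps are elementary, so I do not expect a genuine obstacle; the only thing requiring care is the bookkeeping from the previous subsection showing that the finite-volume restriction modifies the coin matrix only on $\Gamma_{C_0}^{(L)}$ and never touches $S$ or $D_\omega$ — this is precisely what prevents the constant from deteriorating as $L \to \infty$.
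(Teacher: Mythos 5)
Your proposal is correct and follows essentially the same route as the paper: factor out the unitary $D_\omega S$, observe that $\mathcal{C}(C)-\mathcal{C}^{\Lambda_L}$ is supported on the sites of $\Gamma_{C_0}^{(L)}$ with blocks equal to $C-C_0$, and conclude via the Schur test (Lemma \ref{lem:SchurTest}) with row and column sums bounded by $3\|C-C_0\|_\infty$. Your parenthetical alternative (block-diagonal norm as a supremum of block norms) is a valid, slightly more direct variant, but the argument is the same in substance.
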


\begin{proof}
    Since $D_\omega$ and $S$ are unitary, we have:
    \begin{align*}
        \|T_\omega^{(L)} \| = \| D_\omega \, S \, \big( \mathcal{C} - \mathcal{C}^{\Lambda_L} \big) \| = \|\mathcal{C} - \mathcal{C}^{\Lambda_L} \|.
    \end{align*}
    As explained before, the transition operator is only non-zero on sites in $\Gamma_{C_0}^{(L)}$. Thus, we compute for any $x^{(i)} \in \Hp$ with $x \in \Gamma_{C_0}^{(L)}$:
    \begin{align*}
        \sum_{y^{(j)} \in \Hp} |\langle x^{(i)} \, | \, \mathcal{C} - \mathcal{C}^{\Lambda_L} \, y^{(j)} \rangle | = \sum_{j=1}^3 |\bra{e_i} C-C_0 \, e_j \rangle | \leq 3 \| C-C_0 \|_\infty.
    \end{align*}
    Therefore, using the notation of Lemma \ref{lem:SchurTest}:
    \begin{align*}
        \xi \big( \mathcal{C} - \mathcal{C}^{\Lambda_L} \big) \leq 3 \, \|C-C_0\|_\infty \;\; \text{ and } \;\; \eta \big( \mathcal{C} - \mathcal{C}^{\Lambda_L} \big) \leq 3 \, \|C-C_0\|_\infty.
    \end{align*}
    Applying Lemma \ref{lem:SchurTest}, this yields $\| T_\omega^{(L)}(C) \| \leq 3 \|C- C_0\|_\infty$.
\end{proof}

\subsection{Spectral gaps of the fully localized operator} \label{sec: spec gaps full loc op}
The subspaces $\Hp^{j,k}$ are invariant under the action of $U_\omega(C_0)$, as explained in section \ref{sec: model and results}. We fix some $\Hp^{j,k}$ and label its 4 random variables $\omega_x$ as shown in Figure \ref{fig:inv subsp with rv}:
\begin{figure} [H]
    \centering
    \includegraphics[scale=1.6]{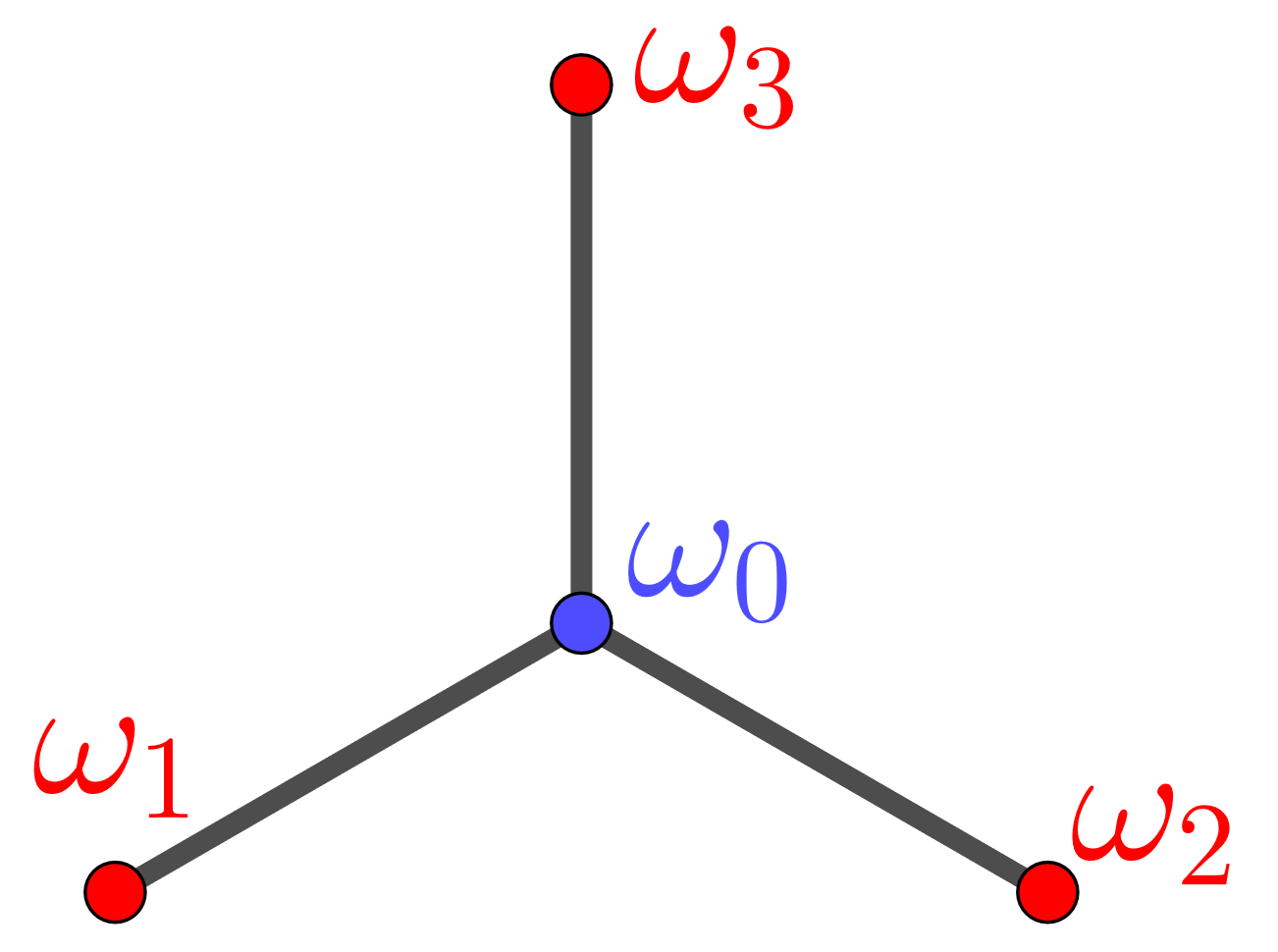}
    \caption{Labeling of the random variables $\omega_x$ for a fixed subspace $\Hp^{j,k}$. We note that if the random variable $\omega_0$ is at site $\ket{j,k} \otimes f_1$, then $\omega_1$ is at $\ket{j,k-1}\otimes f_2$, $\omega_2$ is at $\ket{j+1,k-1} \otimes f_2$ and $\omega_3$ is at $\ket{j,k} \otimes f_2$.}
    \label{fig:inv subsp with rv}
\end{figure} \noindent
We let $U_\omega^{j,k}(C_0)$ denote the restriction of $U_\omega(C_0)$ to the subspace $\Hp^{j,k}$. Using the labeling of Figure \ref{fig:inv subsp with rv} we have $U_\omega^{j,k}(C_0)^6 = e^{i(3 \omega_0 + \omega_1 + \omega_2 + \omega_3)} \, I_{\Hp^{j,k}}$. Moreover, we can express $U_\omega^{j,k}(C_0)$ in the natural ONB of $\Hp^{j,k}$, see \eqref{eq:def_H_jk}. This allows us to directly compute its characteristic polynomial: $\chi_{U_\omega^{j,k}(C_0)}(\lambda) = \lambda^6 - e^{i (3 \omega_0 + \omega_1 + \omega_2 + \omega_3)}$. The eigenvalues of $U_\omega^{j,k}(C_0)$ are the sixth roots of unity of $e^{i (3 \omega_0 + \omega_1 + \omega_2 + \omega_3)} = \Theta_\omega^{j,k}$. Additionally, we can show that the random variables $\left(\Theta_\omega^{j,k} \right)_{j,k \in \Z}$ are independent and uniformly distributed random variables on $\Sp^1 \subset \C$. This claim is obtained by using Lemma 4.1 in \cite{ABJ:2010} and relies on the fact that the RVs are uniformly distributed and the RVs on each $\Gamma_A$ vertex contribute to exactly one $\Theta_\omega^{j,k}$ random variable.

Our next goal is the following Proposition, which estimates the probability of spectral gaps for the fully localized operator:

\begin{proposition} \label{propo: spect gaps fully loc}
    There exists $c>0$ such that for any $L \in \N^2$, $\eta > 0$, $v \in \Gamma_A$:
    \begin{align*}
        \Pp \Big( \text{dist} \big(z, \sigma \left( U_\omega^{\Lambda_L + v} (C_0) \right) \big) \leq \eta \Big) \leq c \, \eta \, \text{vol} \left( \Lambda_L \right)
    \end{align*}
    for $\eta \, \text{vol} \left( \lambda_L \right)$ small enough, uniformly in $z \in \C \setminus \Sp^1$.
\end{proposition}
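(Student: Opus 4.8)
The plan is to reduce the estimate to a single invariant block by a union bound, and then to settle that block by an elementary computation with the sixth roots of a uniform phase. First I would observe that when the coin equals $C_0$ at every site the finite-volume construction of Section \ref{sec: box definition} changes nothing, so $U_\omega^{\Lambda_L + v}(C_0)$ is simply the restriction of $U_\omega(C_0)$ to the $U_\omega(C_0)$-invariant subspace $\Hp_L$ translated by $v$. By the discussion in Section \ref{sec: model and results} and the definition \eqref{eq:def_H_jk}, this subspace is an orthogonal direct sum of exactly $\text{vol}(\Lambda_L)$ blocks of the form $\Hp^{j,k}$ (a count one checks directly against \eqref{eq:def vol LambdaL}), on each of which $U_\omega(C_0)$ acts as $U_\omega^{j,k}(C_0)$. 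Hence $\sigma\!\big(U_\omega^{\Lambda_L+v}(C_0)\big)=\bigcup_{(j,k)}\sigma\!\big(U_\omega^{j,k}(C_0)\big)$, the event $\{\text{dist}(z,\sigma(U_\omega^{\Lambda_L+v}(C_0)))\le\eta\}$ is the union over the blocks of the events $\{\text{dist}(z,\sigma(U_\omega^{j,k}(C_0)))\le\eta\}$, and a union bound reduces the claim to the existence of a universal $c_0$ with
\begin{align*}
    \Pp\big( \text{dist}(z, \sigma(U_\omega^{j,k}(C_0))) \le \eta \big) \le c_0 \, \eta
\end{align*}
for every block, uniformly in $z \in \C \setminus \Sp^1$. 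I would emphasize that independence of the $\Theta_\omega^{j,k}$ plays no role here; only the uniform marginal law of each $\Theta_\omega^{j,k}$ on $\Sp^1$ is used.

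For this single-block bound I would use the computation already carried out above: $\sigma(U_\omega^{j,k}(C_0))$ consists of the six sixth-roots of $\Theta_\omega^{j,k}=e^{i(3\omega_0+\omega_1+\omega_2+\omega_3)}$, which is uniformly distributed on $\Sp^1$ since $\omega_0$ is uniform on $\T$ and independent of the remaining phases. Because all six eigenvalues lie on $\Sp^1$, the event $\{\text{dist}(z,\sigma(U_\omega^{j,k}(C_0)))\le\eta\}$ forces $\big|\,|z|-1\,\big|\le\eta$ and then $\text{dist}\big(z/|z|,\sigma(U_\omega^{j,k}(C_0))\big)\le 2\eta$; so it suffices to bound the probability for $z\in\Sp^1$ with $\eta$ replaced by $2\eta$. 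Writing $z=e^{i\psi}$ and $\Theta_\omega^{j,k}=e^{i\theta}$ with $\theta$ uniform on $[0,2\pi)$, the eigenvalues are $e^{i(\theta+2\pi m)/6}$, $m=0,\dots,5$, and since $|e^{i\psi}-e^{i\phi}|=2|\sin((\psi-\phi)/2)|$ is comparable to the arc-length distance, the condition becomes: some $(\theta+2\pi m)/6$ lies within a distance $O(\eta)$ of $\psi$ on the circle. The six branches $\theta\mapsto(\theta+2\pi m)/6$ partition the circle and each contracts arc-length by a factor six, so this confines $\theta$ to a union of arcs of total length $O(\eta)$; by uniformity the probability is $O(\eta)$. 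Summing $c_0\eta$ over the $\text{vol}(\Lambda_L)$ blocks yields $\Pp(\dots)\le c_0\,\eta\,\text{vol}(\Lambda_L)$, which is the claim; the hypothesis that $\eta\,\text{vol}(\Lambda_L)$ be small only serves to make the bound a nontrivial probability estimate and, in particular, to keep $\eta$ in the range where the small-$\eta$ approximations of this paragraph are valid.

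I do not expect a genuine obstacle: the content is really "union bound over $\text{vol}(\Lambda_L)$ events, each of probability $O(\eta)$". The only points needing a little care are the passage from $z\in\C\setminus\Sp^1$ to $z\in\Sp^1$, the bookkeeping of arcs modulo $2\pi$ in the single-block step, and verifying that the number of blocks $\Hp^{j,k}$ in $\Lambda_L+v$ is exactly $\text{vol}(\Lambda_L)$ as defined in \eqref{eq:def vol LambdaL}.
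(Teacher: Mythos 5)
Your proposal is correct, and the single-block computation at its core (the six eigenvalues are the sixth roots of the uniform phase $\Theta_\omega^{j,k}$, so the event $\text{dist}(z,\sigma(U_\omega^{j,k}(C_0)))\le\eta$ confines $\theta$ to an arc of length $O(\eta)$) is exactly the computation the paper performs. Where you diverge is in how the blocks are combined. The paper uses the \emph{independence} of the random variables $\Theta_\omega^{j,k}$ to write the exact product formula $\Pp\big(\sigma(U_\omega^{\Lambda_L}(C_0))\cap A=\varnothing\big)=(1-6l(A))^{\text{vol}(\Lambda_L)}$, and then needs the elementary estimate $\sup_{y}\big|(1-xy)^{1/y}-1+x\big|\le x^2/2$ to convert $1-(1-3\eta)^{\text{vol}(\Lambda_L)}$ into the linear bound $c\,\eta\,\text{vol}(\Lambda_L)$, which is where the hypothesis that $\eta\,\text{vol}(\Lambda_L)$ be small enters. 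You instead use subadditivity of the probability measure (a union bound over the $\text{vol}(\Lambda_L)$ blocks), which gives the linear bound directly, requires only the uniform marginal law of each $\Theta_\omega^{j,k}$ and not their joint independence, and makes the smallness hypothesis essentially cosmetic. Your route is therefore slightly more elementary and more robust (it would survive if the per-block phases were correlated); the paper's route yields the sharper exact product, which is not needed for the stated one-sided inequality. Your handling of $z\notin\Sp^1$ (reducing to $z/|z|$ with $2\eta$) is a legitimate substitute for the paper's geometric bound $l(B_\eta(z)\cap\Sp^1)<\eta/2$, and your block count $4L_1L_2-1=\text{vol}(\Lambda_L)$ matches \eqref{eq:def vol LambdaL}.
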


The proof of Proposition \ref{propo: spect gaps fully loc} follows the steps of \cite{ABJ:2012} (Proposition 2.2) and does not require any significant model-specific changes.

\begin{proof}
    We can assume without loss of generality that $v=0$. Let $l$ denote the natural spherical measure on $\Sp^1 \subset \C$, normalized such that $l(\Sp^1) = 1$ and consider any arc $A \subset \Sp^1$ with length $l(A) < \frac{1}{6}$. Since the eigenvalues of $U_\omega^{j,k}(C_0)$ are the sixth roots of unity of the uniformly distributed random variable $\Theta_\omega^{j,k}$, the probability for the arc $A$ to lie outside of the spectrum of $U_\omega^{j,k}(C_0)$ is $1-6l(A)$. $\Hp_L$ contains $\text{vol}(\Lambda_L) = 4 L_1 L_2 -1$ blocks $\Hp^{j,k}$, which are invariant under $U_\omega^{\Lambda_L}(C_0)$. Using that the random variables $\Theta_\omega^{j,k}$ are independent, it follows for the spectrum of $U_\omega^{\Lambda_L}(C_0)$:
    \begin{align} \label{eq: spect intersect arc}
        \Pp \Big( \sigma \big(U_\omega^{\Lambda_L}(C_0) \big) \cap A = \varnothing \Big) = \big( 1 - 6l(A) \big)^{\text{vol}(\Lambda_L)}.
    \end{align}
    We easily obtain the following geometric property from the law of cosines: Let $B_\eta(z) \subset \C$ be the closed ball centered at $z \in \C$ of radius $\eta \in (0,1)$ , then
    \begin{align*}
        l \big( B_\eta(z) \cap \Sp^1 \big) < \frac{\eta}{2}.
    \end{align*}
    Using this, together with \eqref{eq: spect intersect arc}, it follows that
    \begin{align*}
        \Pp \Big( \text{dist} \big( z, \sigma \left(U_\omega^{\Lambda_L}(C_0) \right) \big) > \eta \Big) = \Big( 1 - 6 \, l \big( B_\eta(z) \cap \Sp^1 \big) \Big)^{\text{vol}(\Lambda_L)} \geq \big(1- 3 \, \eta \big)^{\text{vol}(\Lambda_L)}.
    \end{align*}
    Note that in the last step we used that $\eta$ is small. Finally, this gives us
    \begin{align*}
        \Pp \Big( \text{dist} \big(z, \sigma \left( U_\omega^{\Lambda_L} (C_0) \right) \big) \leq \eta \Big) \leq 1 - \big(1- 3 \, \eta \big)^{\text{vol}(\Lambda_L)}.
    \end{align*} 
    To conclude, we use the following estimate (see \cite{ABJ:2012}):
    \begin{align*}
        \sup_{y \in \left(0, \frac{1}{2}\right)} \left| \big(1-xy \big)^\frac{1}{y} -1 + x \right| \leq \frac{x^2}{2} \;\;\; \forall \, x \in [0,1].
    \end{align*}
    Using $x = 3 \, \eta \, \text{vol}(\Lambda_L)$ and $y = \frac{1}{\text{vol}(\Lambda_L)}$, this finishes the proof.
\end{proof}

\subsection{Polynomial decay of the boxed resolvent} \label{sec: poly dec box res}
We denote by $R_\omega^{\#}(C,z)$ the resolvent of $U_\omega^{\#}(C)$ for any coin $C \in U(3)$ and any $z$ in its resolvent set. Unless it is unclear, we will omit the dependence on $z$. Furthermore, we define the following equivalence relation on the elements of the orthonormal basis $\left( x^{(i)} \right)_{x \in \Gamma, i=1,2,3}$ of $\Hp$:
\begin{align*}
    x^{(i)} \sim y^{(l)} \iff \exists \, j,k \in \Z \; \text{ s.t. } \; x^{(i)}, y^{(l)} \in \Hp^{j,k}.
\end{align*}
Our goal is the following Proposition:
\begin{proposition} \label{propo: poly decay resolvent}
    For any $s \in (0,1)$, $p > \frac{1}{1-s}$ and $a \geq 0$ there exists $c > 0$ such that:
    \begin{align*}
        \E \Big( \left| \langle x^{(i)} \, | \, R_\omega^{\Lambda_L+v}(C,z) \, y^{(j)} \rangle \right|^s \Big) \leq \frac{c}{|L|^a} 
    \end{align*}
    uniformly in $L \in \N^2$ with $\min(L_1, L_2) \geq 3$, $z \in \C \setminus \Sp^1$, all coin matrices $C \in U(3)$ such that $\|C-C_0\|_\infty \leq \frac{1}{|L|^{2ap + 4 + \frac{a}{s}}}$, all translations $v \in \Gamma_A$ and all $x,y \in \Lambda_L + v$ with $x^{(i)} \nsim y^{(j)}$.
\end{proposition}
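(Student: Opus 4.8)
The plan is to transfer the polynomial decay from the fully localized operator $U_\omega^{\Lambda_L+v}(C_0)$, for which the spectral gap estimate of Proposition \ref{propo: spect gaps fully loc} is available, to the perturbed operator $U_\omega^{\Lambda_L+v}(C)$ via the transition operator bound of Proposition \ref{propo:norm trans op}, and then to combine this with the uniform fractional moment bound of Theorem \ref{thm: frac mom bound}. Without loss of generality take $v=0$. The first step is purely deterministic: by the resolvent identity, $R_\omega^{\Lambda_L}(C,z) = R_\omega^{\Lambda_L}(C_0,z) - R_\omega^{\Lambda_L}(C,z)\,\big(U_\omega^{\Lambda_L}(C)-U_\omega^{\Lambda_L}(C_0)\big)\,R_\omega^{\Lambda_L}(C_0,z)$, where the perturbation has norm at most $3\|C-C_0\|_\infty$ (the restricted analogue of Proposition \ref{propo:norm trans op}, since only boundary coins in $\Gamma_{C_0}^{(L)}$ are involved and those are already $C_0$ on both sides — in fact one must be slightly careful here and apply the bound to the difference of the two box operators, which again is supported on finitely many sites and Schur-tested as before). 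Iterating once more or using $\|R_\omega^{\Lambda_L}(C,z)\| \le 2\|R_\omega^{\Lambda_L}(C_0,z)\|$ whenever $3\|C-C_0\|_\infty\,\|R_\omega^{\Lambda_L}(C_0,z)\| \le \tfrac12$, one gets, on the event where $\mathrm{dist}(z,\sigma(U_\omega^{\Lambda_L}(C_0)))=:\eta$ is not too small,
\begin{align*}
    \big|\langle x^{(i)}\,|\,R_\omega^{\Lambda_L}(C,z)\,y^{(j)}\rangle\big| \le \big|\langle x^{(i)}\,|\,R_\omega^{\Lambda_L}(C_0,z)\,y^{(j)}\rangle\big| + \frac{C\,\|C-C_0\|_\infty}{\eta^2}.
\end{align*}
The key point making the first term manageable is $x^{(i)}\nsim y^{(j)}$: since $C_0$ leaves every $\Hp^{j,k}$ invariant and $U_\omega^{\Lambda_L}(C_0)$ is block-diagonal along these subspaces, the matrix element $\langle x^{(i)}\,|\,R_\omega^{\Lambda_L}(C_0,z)\,y^{(j)}\rangle$ vanishes identically.

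Thus on the "good" event $G_\eta = \{\mathrm{dist}(z,\sigma(U_\omega^{\Lambda_L}(C_0))) > \eta\}$ the resolvent matrix element is bounded by $C\|C-C_0\|_\infty/\eta^2$, and on the complementary "bad" event $G_\eta^c$ we do not control $R_\omega^{\Lambda_L}(C_0,z)$ but we still control the fractional moment of $R_\omega^{\Lambda_L}(C,z)$ through Theorem \ref{thm: frac mom bound} (which applies verbatim to the boxed operator, being just another QW of the required form). Splitting the expectation accordingly,
\begin{align*}
    \E\Big(\big|\langle x^{(i)}\,|\,R_\omega^{\Lambda_L}(C,z)\,y^{(j)}\rangle\big|^s\Big) \le \Big(\frac{C\,\|C-C_0\|_\infty}{\eta^2}\Big)^s + \E\Big(\mathbbm{1}_{G_\eta^c}\,\big|\langle x^{(i)}\,|\,R_\omega^{\Lambda_L}(C,z)\,y^{(j)}\rangle\big|^s\Big).
\end{align*}
For the second term apply Hölder with exponents $p$ and $p'=p/(p-1)$: it is at most $\big(\E|\langle\cdot\rangle|^{sp}\big)^{1/p}\,\Pp(G_\eta^c)^{1/p'}$. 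Choosing $sp<1$ — which is exactly the hypothesis $p<1/(1-s)$ rearranged — lets us bound the first factor by a constant via Theorem \ref{thm: frac mom bound}, while $\Pp(G_\eta^c) \le c\,\eta\,\mathrm{vol}(\Lambda_L)$ by Proposition \ref{propo: spect gaps fully loc}. Since $\mathrm{vol}(\Lambda_L)$ is comparable to $|L|^2$, this gives an overall bound of order
\begin{align*}
    \Big(\frac{\|C-C_0\|_\infty}{\eta^2}\Big)^s + c\,\big(\eta\,|L|^2\big)^{1/p'}.
\end{align*}

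It remains to optimize in $\eta$ and to check that the stated smallness of $\|C-C_0\|_\infty$ suffices. Pick $\eta = |L|^{-b}$ for a suitable exponent $b$: the second term becomes $c\,|L|^{(2-b)/p'}$, which is $\le c\,|L|^{-a}$ once $b \ge 2 + a p'$, and since $p' = p/(p-1) < p$ it suffices to take $b$ a fixed multiple of $ap$ plus a constant, e.g. $b = 2ap+2$ comfortably works. With $\eta$ so fixed, the first term is $(\|C-C_0\|_\infty\,|L|^{2b})^s = (\|C-C_0\|_\infty\,|L|^{4ap+4})^s$, and requiring $\|C-C_0\|_\infty \le |L|^{-2ap-4-a/s}$ makes this $\le |L|^{-(a/s)\cdot s} = |L|^{-a}$; one should double-check the exponent bookkeeping against the precise value of $b$ and absorb discrepancies into the constant $c$ and into a slightly larger fixed power, which the stated hypothesis $\|C-C_0\|_\infty \le |L|^{-(2ap+4+a/s)}$ is designed to accommodate. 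I expect the main obstacle to be exactly this juggling of exponents — ensuring a single choice of $\eta$ simultaneously kills both error terms at rate $|L|^{-a}$ under the given (rather tight) smallness assumption on $C$, and making sure the resolvent expansion step is valid on $G_\eta$, i.e. that $\|C-C_0\|_\infty/\eta$ is indeed $\le 1/2$ there, which again follows from the same smallness hypothesis since $\eta = |L|^{-b}$ with $b$ much smaller than $2ap+4+a/s$. The measure-theoretic and operator-theoretic inputs are all off-the-shelf (resolvent identity, Hölder, Schur test, and the two Propositions), so no genuinely new idea is needed beyond this careful calibration.
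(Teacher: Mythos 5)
Your overall strategy is exactly the paper's: resolvent identity off the fully localized operator, vanishing of $\langle x^{(i)}|R_\omega^{\Lambda_L}(C_0,z)y^{(j)}\rangle$ from $x^{(i)}\nsim y^{(j)}$, a good/bad event split at $\eta=\mathrm{dist}(z,\sigma(U_\omega^{\Lambda_L}(C_0)))$, H\"older plus the a priori fractional moment bound on the bad event, and a deterministic $\|C-C_0\|_\infty/\eta^2$ bound on the good event. However, two steps as written do not go through.

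First, your H\"older application is backwards. You place the resolvent in $L^p$ and the indicator in $L^{p'}$, which forces you to invoke Theorem \ref{thm: frac mom bound} for the exponent $sp$, hence to require $sp<1$. That condition is $p<1/s$, which is \emph{not} implied by the actual hypothesis $p>\frac{1}{1-s}$ (and your claim that $sp<1$ ``is exactly the hypothesis $p<1/(1-s)$ rearranged'' misstates the hypothesis, which is a lower bound on $p$, and misidentifies the rearrangement). The correct assignment is the opposite one: indicator in $L^p$, resolvent in $L^{q}$ with $q=p/(p-1)$; then the needed condition $sq<1$ is equivalent to $p>\frac{1}{1-s}$. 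This also changes the exponent on $\Pp(G_\eta^c)$ from $1/p'$ to $1/p$, which feeds into the calibration of $\eta$.

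Second, your choice $\eta=|L|^{-b}$ with $b=2ap+2$ breaks the good-event term. You then have $\|C-C_0\|_\infty\,|L|^{2b}=\|C-C_0\|_\infty\,|L|^{4ap+4}$, and under the stated hypothesis $\|C-C_0\|_\infty\le |L|^{-(2ap+4+a/s)}$ this is only $\le |L|^{2ap-a/s}$, so the good-event contribution is $\le |L|^{(2ap-a/s)s}=|L|^{2aps-a}$, which exceeds $|L|^{-a}$ by the unbounded factor $|L|^{2aps}$ for any $a>0$. This is not a discrepancy that can be ``absorbed into the constant.'' The exponent $b$ must be taken smaller: with the corrected H\"older split the bad event only requires $b\ge 2+ap$, so one takes $b=ap+2$ (equivalently $\eta=\frac{1}{\mathrm{vol}(\Lambda_L)\,|L|^{ap}}$, as in the paper); then $(\eta\,\mathrm{vol}(\Lambda_L))^{1/p}=|L|^{-a}$ and $\|C-C_0\|_\infty|L|^{2b}\le 4|L|^{-a/s}$, giving $|L|^{-a}$ after raising to the power $s$. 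With those two corrections (and your Neumann-series justification of $\|R(C)\|\le 2\|R(C_0)\|$ on $G_\eta$, which plays the role of the paper's Lemma \ref{lem: dist spect A-B} and is valid under the same smallness hypothesis), the argument matches the paper's proof.
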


We prove Proposition \ref{propo: poly decay resolvent} by following the steps of \cite{ABJ:2012} (Proposition 2.3) without any significant changes.

\begin{proof}
    It suffices to consider only the case $v=0$. The case $a=0$, i.e. the expectation is bounded by some constant, follows from Theorem \ref{thm: frac mom bound}, noting that its proof works as well for $U_\omega^{\Lambda_L}$. Note that $\Hp^{j,k}$ is invariant under $U_\omega^{\Lambda_L}(C_0)$, which implies
    \begin{align} \label{eq:res local on Hjk}
        \langle x^{(i)} \, | \, R_\omega^{\Lambda_L}(C_0) \, y^{(j)} \rangle = 0 \;\; \text{ if } \; x^{(i)} \nsim y^{(j)}.
    \end{align}
    It is due to the resolvent identity that
    \begin{align} \label{eq:res id C - C0}
        R_\omega^{\Lambda_L}(C) = R_\omega^{\Lambda_L}(C_0) + R_\omega^{\Lambda_L}(C) \Big( U_\omega^{\Lambda_L}(C_0) - U_\omega^{\Lambda_L}(C) \Big) R_\omega^{\Lambda_L}(C_0).
    \end{align}
    Since the shift operator $S$ only permits nearest neighbor hopping, we have:
    \begin{align} \label{eq:U nearest nbh}
        \langle x^{(i)} \, | \, U_\omega^{\Lambda_L}(C) \, y^{(j)} \rangle = 0 \;\; \text{ if } \; |x-y| \neq 1.
    \end{align}
    Moreover, due to the unitarity of $D_\omega$ and $S$, we can use Lemma \ref{lem:SchurTest} to estimate the norm of $\mathcal{C}^{\Lambda_L}(C) - \mathcal{C}^{\Lambda_L}(C_0)$ to obtain:
    \begin{align} \label{eq:norm Uc - Uc0}
        \| U_\omega^{\Lambda_L}(C) - U_\omega^{\Lambda_L}(C_0) \| \leq 3 \|C- C_0\|_\infty.
    \end{align}
    Equations \eqref{eq:res local on Hjk} - \eqref{eq:U nearest nbh} now yield for $x^{(i)}, y^{(j)} \in \Hp_L$ with $x^{(i)} \nsim y^{(j)}$ and every $z \in \C \setminus \Sp^1$:
    \begin{align*}
        &\big| \langle x^{(i)} \, | \, R_\omega^{\Lambda_L}(C) \, y^{(j)} \rangle \big| \\
        &\overset{\eqref{eq:res id C - C0}, \eqref{eq:res local on Hjk}}{= } \big| \langle x^{(i)} \, | \, R_\omega^{\Lambda_L}(C) \Big( U_\omega^{\Lambda_L}(C_0) - U_\omega^{\Lambda_L}(C) \Big) R_\omega^{\Lambda_L}(C_0) \, y^{(j)} \rangle \big| \\ 
        &\overset{\eqref{eq:res local on Hjk}, \eqref{eq:U nearest nbh}}{\leq} \sum_{\substack{v^{(k)}, \tilde{v}^{(l)} \in \Hp_L \text{ s.t} \\ \tilde{v}^{(l)} \sim y^{(j)}, |v-\tilde{v}|=1}} \big| \langle x^{(i)} \, | \, R_\omega^{\Lambda_L}(C) \, v^{(k)} \rangle \, \langle v^{(k)} \, | \, \Big( U_\omega^{\Lambda_L}(C_0) - U_\omega^{\Lambda_L}(C) \Big) \, \tilde{v}^{(l)} \rangle \\ 
        &\hphantom{{} \overset{\eqref{eq:res local on Hjk}, \eqref{eq:U nearest nbh}}{=} \sum_{\substack{v^{(k)}, \tilde{v}^{(l)} \in \Hp_L \text{ s.t} \\ \tilde{v}^{(l)} \sim y^{(j)}, |v-\tilde{v}|=1}} \big| } \langle \tilde{v}^{(l)} \, | \, R_\omega^{\Lambda_L}(C_0) \, y^{(j)} \rangle \big|.
    \end{align*}
    Using \eqref{eq:norm Uc - Uc0} we can bound the second scalar product by $3 \|C-C_0\|_\infty$. Since $U_\omega^{\Lambda_L}$ is a normal operator, we can bound the third scalar product by $\frac{1}{\text{dist} \big( z, \sigma(U_\omega^{\Lambda_L}(C_0)) \big) }$. Finally, we note that the number of terms in the sum above is finite and independent of $L$. Thus, there exists a constant $c > 0$ such that
    \begin{align} \label{eq:resolvent bound}
        \big| \langle x^{(i)} \, | \, R_\omega^{\Lambda_L}(C) \, y^{(j)} \rangle \big| &\leq c \, \frac{\|C - C_0 \|_\infty}{\text{dist} \big( z, \sigma(U_\omega^{\Lambda_L}(C_0)) \big)} \, \sup_{\substack{v^{(k)}, \tilde{v}^{(l)} \in \Hp_L \\ \text{ s.t. } \tilde{v}^{(l)} \sim y^{(j)}, \\ |v-\tilde{v}|=1}} \big| \langle x^{(i)} \, | \, R_\omega^{\Lambda_L}(C) \, v^{(k)} \rangle \big|.
    \end{align}
    For any $\eta > 0$ we define the set of "good" events:
    \begin{align*}
        G_\eta(z) = \left\{ \omega \in \T^{\Gamma} \text{ s.t. } \text{dist}\Big(z, \sigma \big( U_\omega^{\Lambda_L}(C_0) \big) \Big) > \eta \right\}.
    \end{align*}
    We let $c>0$ denote some constant that is independent of $L$ and may change from line to line. By Proposition \ref{propo: spect gaps fully loc} we have $\Pp \left( G_\eta(z)^C \right) \leq c \, \eta \, \text{vol}\left(\Lambda_L \right)$. We take $p > 1$ as in the statement of Proposition \ref{propo: poly decay resolvent} and $q > 1$ such that $\frac{1}{p} + \frac{1}{q} = 1$ and apply Hölder's inequality:
    \begin{align} \label{eq: first part of expectation}
    \begin{split}
        \E &\left( \chi_{G_\eta(z)^C} \, \left| \langle x^{(i)} \, | \, R_\omega^{\Lambda_L}(C) \, y^{(j)} \rangle \right|^s \right) \\
        &\leq \underbrace{\E \left( \left| \chi_{G_\eta(z)^C} \right|^p \right)^\frac{1}{p}}_{=\Pp\left(G_\eta(z)^C \right)^\frac{1}{p} } \, \underbrace{\E \left( \left| \langle x^{(i)} \, | \, R_\omega^{\Lambda_L}(C) \, y^{(j)} \rangle \right|^{sq} \right)^\frac{1}{q}}_{\overset{\eqref{eq: fract mom bound}}{\leq} \text{const}} \leq c \big(\eta \, \text{vol} (\Lambda_L) \big)^\frac{1}{p}.
    \end{split}
    \end{align}
    Note that to use estimate \eqref{eq: fract mom bound}, we needed that $sq < 1$, which is equivalent to the assumption $p > \frac{1}{1-s}$ from the statement of Proposition \ref{propo: poly decay resolvent}. Setting $\eta = \frac{1}{\text{vol}\left(\Lambda_L\right) |L|^{ap}}$ proves the claim on the set $G_\eta(z)^C$.

    In order to get a bound on the set $G_\eta(z)$, we use the following result from perturbation theory:
    \begin{lemma} \label{lem: dist spect A-B}
        Let $A,B$ be normal operators on some Hilbert space and $z \in \C$ be in both of their resolvent sets. For any $\eta > 0$, if $\text{dist} \big(z, \sigma(A) \big) > \eta$ and $\|A-B\| \leq \frac{\eta}{2}$, then $\text{dist} \big(z, \sigma(B) \big) > \frac{\eta}{2}$.
    \end{lemma}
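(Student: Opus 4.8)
The plan is to reduce the statement to the elementary spectral fact that for a normal operator $N$ and any $z$ in its resolvent set one has $\|(N-z)^{-1}\| = \text{dist}(z,\sigma(N))^{-1}$, which follows immediately from the spectral theorem (the spectral radius of a normal operator equals its norm). Applied to $A$, the hypothesis $\text{dist}(z,\sigma(A)) > \eta$ becomes the operator bound $\|(A-z)^{-1}\| < 1/\eta$.

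Next I would invert $B-z$ by a Neumann series. Writing $B - z = (A-z)\big(I + (A-z)^{-1}(B-A)\big)$, the assumption $\|A-B\| \leq \eta/2$ gives
\begin{align*}
\big\| (A-z)^{-1}(B-A) \big\| \leq \|(A-z)^{-1}\|\,\|A-B\| < \tfrac{1}{\eta}\cdot\tfrac{\eta}{2} = \tfrac12 ,
\end{align*}
so $I + (A-z)^{-1}(B-A)$ is invertible with $\big\|(I + (A-z)^{-1}(B-A))^{-1}\big\| < 2$. Consequently $B-z$ is invertible (reconfirming that $z$ lies in the resolvent set of $B$) and
\begin{align*}
\|(B-z)^{-1}\| \leq \frac{\|(A-z)^{-1}\|}{1 - \|(A-z)^{-1}\|\,\|A-B\|} = \frac{1}{\text{dist}(z,\sigma(A)) - \|A-B\|} < \frac{1}{\eta - \eta/2} = \frac{2}{\eta}.
\end{align*}

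Finally, since $B$ is also normal, the same spectral identity gives $\text{dist}(z,\sigma(B)) = \|(B-z)^{-1}\|^{-1} > \eta/2$, which is the claim. The only point requiring a little care is bookkeeping of strict versus non-strict inequalities: one uses $\text{dist}(z,\sigma(A)) > \eta$ strictly in order to obtain $\|(B-z)^{-1}\| < 2/\eta$ strictly and hence $\text{dist}(z,\sigma(B)) > \eta/2$ rather than merely $\geq$. There is no genuine obstacle here — this is a standard resolvent-perturbation estimate, and normality of $A$ and $B$ enters only to translate resolvent norms into distances to the spectrum.
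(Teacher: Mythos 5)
Your proof is correct. The paper itself states this lemma without proof (it is quoted as a standard fact from perturbation theory inside the proof of Proposition \ref{propo: poly decay resolvent}), so there is no in-paper argument to compare against; your Neumann-series derivation, combined with the identity $\|(N-z)^{-1}\| = \text{dist}(z,\sigma(N))^{-1}$ for normal $N$, is exactly the standard way to fill this gap, and your tracking of strict versus non-strict inequalities is right. One small remark: normality of $B$ is not actually needed. The same factorization $B-w = (A-w)\big(I+(A-w)^{-1}(B-A)\big)$ applied at an arbitrary point $w$ with $\text{dist}(w,\sigma(A)) > \|A-B\|$ shows $\sigma(B) \subseteq \{w : \text{dist}(w,\sigma(A)) \leq \|A-B\|\}$, and then the triangle inequality gives $\text{dist}(z,\sigma(B)) > \eta - \eta/2$ directly, using only that $A$ is normal. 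This is harmless here since both operators in the application are unitary, hence normal, but it shows your final appeal to the spectral identity for $B$ can be avoided.
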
 \noindent
    Thus, if the coin $C$ is chosen such that $3 \|C-C_0\|_\infty \leq \frac{\eta}{2}$, then 
    \begin{align} \label{eq:spectral implication}
        \text{dist} \Big(z, \sigma \big( U_\omega^{\Lambda_L}(C_0) \big) \Big) > \eta \Longrightarrow \text{dist} \Big(z, \sigma \big( U_\omega^{\Lambda_L}(C) \big) \Big) > \frac{\eta}{2}.
    \end{align}
    Now we can estimate for all events in $G_\eta(z)$:
    \begin{align*}
        \chi_{G_\eta(z)}(\omega) \, &\big| \langle x^{(i)} \, | \, R_\omega^{\Lambda_L}(C,z) \, y^{(j)} \rangle \big|^s \\
        &\overset{\eqref{eq:resolvent bound}}{\leq} \chi_{G_\eta(z)}(\omega) \, \Big| c \, \frac{\|C - C_0 \|_\infty}{\text{dist} \big( z, \sigma(U_\omega^{\Lambda_L}(C_0)) \big)} \, \sup_{\substack{v^{(k)}, \tilde{v}^{(l)} \in \Hp_L \\ \text{ s.t. } \tilde{v}^{(l)} \sim y^{(j)}, \\ |v-\tilde{v}|=1}} \underbrace{\big| \langle x^{(i)} \, | \, R_\omega^{\Lambda_L}(C) \, v^{(k)} \rangle \big|}_{\leq \frac{1}{\text{dist}(z, \sigma(U_\omega^{\Lambda_L}(C)))}} \Big|^s.
    \end{align*}
    If $\omega \in G_\eta(z)$, we have $\text{dist} \big( z, \sigma(U_\omega^{\Lambda_L}(C_0)) \big) > \eta$ and therefore by \eqref{eq:spectral implication}:
    \begin{align} \label{eq: almost final bound}
        \chi_{G_\eta(z)}(\omega) \, \left| \langle x^{(i)} \, | \, R_\omega^{\Lambda_L}(C) \, y^{(j)} \rangle \right|^s \leq \chi_{G_\eta(z)}(\omega) \, c \, \bigg( \frac{\|C-C_0\|_\infty}{\eta^2} \bigg)^s.
    \end{align}
    Now, if we choose $C$ such that $\|C-C_0\|_\infty \leq \frac{1}{|L|^{2ap+4+\frac{a}{s}}}$, then together with our choice for $\eta$ we obtain
    \begin{align*}
        \frac{\|C-C_0\|_\infty}{\eta^2} \leq \frac{\text{vol}\left(\Lambda_L\right)^2 |L|^{2ap}}{|L|^{2ap+4+\frac{a}{s}}} = \frac{\text{vol}\left(\Lambda_L\right)^2}{|L|^{4+\frac{a}{s}}} \leq \frac{4}{|L|^\frac{a}{s}}.
    \end{align*}
    In the last step we have used the bound $\text{vol}(\Lambda_L) \leq 2 |L|^2$, which can easily be obtained from \eqref{eq:def vol LambdaL}. We insert this into \eqref{eq: almost final bound}:
    \begin{align} \label{eq: second part of expectation}
        \chi_{G_\eta(z)}(\omega) \, \left| \langle x^{(i)} \, | \, R_\omega^{\Lambda_L}(C) \, y^{(j)} \rangle \right|^s \leq \chi_{G_\eta(z)}(\omega) \, c \, \frac{1}{|L|^a}.
    \end{align}
    The estimates \eqref{eq: first part of expectation} and \eqref{eq: second part of expectation} finish the proof.
\end{proof} \noindent
We remark that in the proof we have used the following two assumptions on $C$:
\begin{align} \label{eq:assumptions on C}
    \|C-C_0\|_\infty \leq \frac{1}{|L|^{2ap+4+\frac{a}{s}}} \;\; \text{and} \;\; \|C-C_0\|_\infty \leq \frac{\eta}{6}.
\end{align}
Using the definition of $\eta$ and the relation $\text{vol}(\Lambda_L) \leq 2 |L|^2$ we can show that the first implies the second whenever $|L|^2 \geq 12$. This is always satisfied due to our choice of $\min(L_1, L_2) \geq 3$. 

\subsection{The iterative step} \label{sec: iterative step}
We start by defining the boundary of both $\Lambda_L$ and its compliment:
\begin{align*}
    \partial \big( \Lambda_L \big) &= \text{span} \left( x^{(i)} \in \Hp_L \text{ s.t. } x \in \Gamma_{C_0}^{(L)} \text{ or } x \text{ is adjacent to some } y \in \Gamma_{C_0}^{(L)} \right), \\
    \partial \big( \Lambda_L^C \big) &= \text{span} \left( x^{(i)} \in \Hp_L^C \text{ s.t. } x \in \Gamma_{C_0}^{(L)} \text{ or } x \text{ is adjacent to some } y \in \Gamma_{C_0}^{(L)} \right).
\end{align*}
The sets are illustrated in Figure \ref{fig:interior and boundary}.
\begin{figure} [h!]
    \centering
    \includegraphics[scale=0.4]{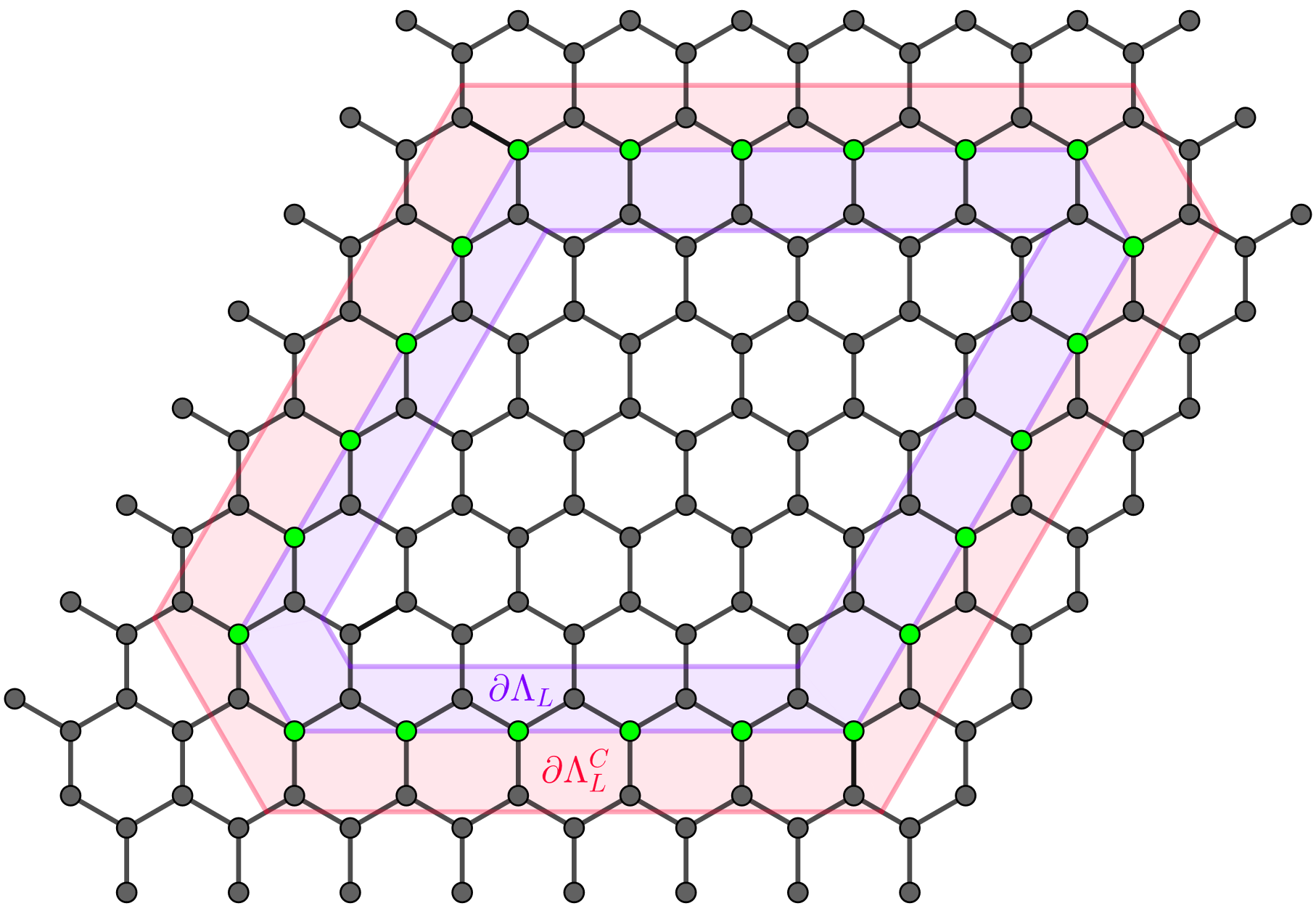}
    \caption{The boundaries of $\Lambda_L$ and $\Lambda_L^C$.}
    \label{fig:interior and boundary}
\end{figure}\\
For any subset $S \subset \Hp$, we define its closure to be:
\begin{align} \label{eq: def N(S)}
    \overline{S} = \left\{x^{(i)} \in \Hp \, | \, \exists \, j \text{ s.t. } x^{(j)} \in S \right\}.
\end{align}
For any $x^{(i)} \in \Hp$ let $[x^{(i)}]$ denote the unique element $y^{(i)} \in \Hp$ such that $y \in \Gamma_A$ and $x^{(i)} \sim y^{(i)}$. In other words, if $x \in \Gamma_A$, then $[x^{(i)}] = x^{(i)}$, and if $x \in \Gamma_B$, then $[x^{(i)}]$ is the unique $\Gamma_A$ vertex that is in the same invariant subspace $\Hp^{j,k}$ as $x^{(i)}$ with the same coin state. This construction allows us to assume without loss of generality that the box $\Lambda_L$ is centered at the origin. We will prove the following:
\begin{proposition} \label{propo: max on boundary}
    There exist $L_0 > 0$ and $q < 1$ such that for all $s \in \left(0, \frac{1}{3} \right)$, $L \in \N^2$ with $min(L_1, L_2) \geq 3$ and $|L| \geq L_0$ there exists $\varphi(L,s) > 0$ such that
    \begin{align*}
        \E \left( \left| \langle x^{(i)} \, | \, R(C,z) \, y^{(j)} \rangle \right|^s \right) \leq q \, \max_{v^{(k)} \in \overline{\partial ( \Lambda_{L+\iota}^C)} + [x^{(i)}] } \E \left( \left| \langle v^{(k)} \, | \, R(C,z) \, y^{(j)} \rangle \right|^s \right)
    \end{align*}
    for all $z \in \C$ with $|z|<1$, $\|C - C_0 \|_\infty \leq \varphi$, $x^{(i)} \in \Hp$ and $y^{(j)} \in \Hp_{L+ 2 \iota}^C + [x^{(i)}]$, where $\iota=(2,2)$ is used to expand the box $\Lambda_L$ from size $L$ to size $L + \iota$.
\end{proposition}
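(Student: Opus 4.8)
The plan is to follow the finite-volume multi-scale scheme of \cite{ABJ:2012} (Proposition 2.4) and derive the one-step contraction by combining the geometric resolvent (second resolvent) identity with the polynomial decay of the boxed resolvent from Proposition \ref{propo: poly decay resolvent}. First I would place the box at the origin, using the map $x^{(i)} \mapsto [x^{(i)}]$ and the translation invariance of the construction noted in Section \ref{sec: box definition}; this reduces the general statement to the case where $x \in \Gamma_A$ is the center of $\Lambda_L$ and $y^{(j)}$ lies outside the enlarged box $\Lambda_{L+2\iota}$. Enlarging $L$ to $L+\iota$ and then to $L+2\iota$ creates two "collars" of boundary sites, which is exactly what is needed so that the transition operator $T^{(L+\iota)}_\omega(C) = U_\omega(C) - U^{(L+\iota)}_\omega(C)$ is supported strictly between the inner box and the locus of $y^{(j)}$.

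The core computation is the geometric resolvent identity
\begin{align*}
    R(C,z) = R^{\Lambda_{L+\iota}}_\omega(C,z) + R^{\Lambda_{L+\iota}}_\omega(C,z)\, T^{(L+\iota)}_\omega(C)\, R(C,z),
\end{align*}
evaluated between $x^{(i)}$ and $y^{(j)}$. Since $x^{(i)}$ is deep inside $\Lambda_{L+\iota}$ while $y^{(j)} \notin \Lambda_{L+2\iota}$, the term $\langle x^{(i)}|R^{\Lambda_{L+\iota}}_\omega(C,z)\,y^{(j)}\rangle$ vanishes ($y^{(j)}$ is in the complement block, which is invariant), so only the second term survives. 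Inserting the resolution of the identity and using that $T^{(L+\iota)}_\omega(C)$ is supported on (the closure of) $\partial(\Lambda_{L+\iota})$ and only hops to nearest neighbours, one expresses $\langle x^{(i)}|R(C,z)\,y^{(j)}\rangle$ as a finite sum (finitely many boundary sites, number independent of $L$) of products $\langle x^{(i)}|R^{\Lambda_{L+\iota}}_\omega(C,z)\,v^{(k)}\rangle \cdot \langle v^{(k)}|T^{(L+\iota)}_\omega(C)\,w^{(l)}\rangle \cdot \langle w^{(l)}|R(C,z)\,y^{(j)}\rangle$, with $v^{(k)} \in \overline{\partial(\Lambda_{L+\iota})}$ and $w^{(l)} \in \overline{\partial(\Lambda^C_{L+\iota})} + [x^{(i)}]$. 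Taking $s$-th powers and using subadditivity of $t\mapsto t^s$ for $s<1$, then taking expectations, I would invoke the crucial independence input: the randomness entering $R^{\Lambda_{L+\iota}}_\omega(C,z)$ (the variables inside the box) is independent of the randomness entering $R(C,z)$ evaluated near $y^{(j)}$ — here one needs a re-sampling argument (the appendices \ref{sec: first resampling argu}, \ref{sec: second resampling argu}) to decouple the two resolvent factors, since $R(C,z)$ a priori depends on all variables. This factorizes the expectation into $\E(|\langle x^{(i)}|R^{\Lambda_{L+\iota}}_\omega(C,z)\,v^{(k)}\rangle|^s)$ times $\sup \E(|\langle w^{(l)}|R(C,z)\,y^{(j)}\rangle|^s)$, with $\|T^{(L+\iota)}_\omega(C)\|^s \le (3\|C-C_0\|_\infty)^s$ pulled out by Proposition \ref{propo:norm trans op}.

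To close the argument I would now choose the parameters in the right order. First fix $a>0$ large enough (depending only on the lattice geometry, so that the number of boundary terms times $|L|^{-a}$ tends to $0$) and pick $p>\tfrac{1}{1-s}$; Proposition \ref{propo: poly decay resolvent} then gives $\E(|\langle x^{(i)}|R^{\Lambda_{L+\iota}}_\omega(C,z)\,v^{(k)}\rangle|^s) \le c\,|L|^{-a}$ provided $\|C-C_0\|_\infty \le |L|^{-(2ap+4+a/s)}$. Combining, the prefactor in front of $\max_{w^{(l)}} \E(|\langle w^{(l)}|R(C,z)\,y^{(j)}\rangle|^s)$ is at most $(\text{number of boundary sites})\cdot c\,(3\|C-C_0\|_\infty)^s\,|L|^{-a}$, which for $|L| \ge L_0$ is $\le q < 1$; this determines the admissible $\varphi(L,s)$. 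The main obstacle, and the step requiring genuine care rather than bookkeeping, is the re-sampling / decoupling argument: $R(C,z)$ depends on \emph{all} the random phases, including those inside $\Lambda_{L+\iota}$, so before factorizing the expectation one must replace those inner variables by fresh independent copies and control the error — this is where the correlated $3\times 3$ block structure of $D_\omega$ (Remark \ref{remark: correlations}) makes the bookkeeping heavier than in \cite{ABJ:2012}, and it is deferred to the appendices. A secondary technical point is verifying that the geometric identity's boundary term genuinely lands in $\overline{\partial(\Lambda^C_{L+\iota})} + [x^{(i)}]$ and that $y^{(j)} \in \Hp^C_{L+2\iota}+[x^{(i)}]$ guarantees $w^{(l)} \nsim y^{(j)}$, so that one is never forced to estimate a "diagonal" resolvent matrix element with the wrong tools.
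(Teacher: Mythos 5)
Your overall scaffolding (center the box via $[x^{(i)}]$, kill the leading term of the resolvent identity by invariance, extract $|L|^{-a}$ from Proposition \ref{propo: poly decay resolvent}, beat the $\mathcal{O}(|L|)$ boundary count, then fix $a$ and $L_0$) matches the paper. But there is a genuine gap in the core decomposition. You apply the geometric resolvent identity \emph{once}, which leaves you with the two-factor structure
\begin{align*}
\langle x^{(i)} \mid R^{(L+\iota)} v^{(k)}\rangle\,\langle v^{(k)}\mid T^{(L+\iota)} w^{(l)}\rangle\,\langle w^{(l)}\mid R\, y^{(j)}\rangle ,
\end{align*}
and you propose to decouple the first and last factors by re-sampling. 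This cannot work with the re-sampling arguments of Appendices \ref{sec: first resampling argu}--\ref{sec: second resampling argu}, which only ever re-sample a \emph{finite, $L$-independent} set of sites. The full resolvent $R$ shares \emph{all} the random phases inside $\Lambda_{L+\iota}$ with the boxed resolvent $R^{(L+\iota)}$, so after re-sampling finitely many sites the two factors remain correlated through $\mathcal{O}(|L|^2)$ common variables. The only finite-site decoupling available is to bound one factor by its conditional expectation via Theorem \ref{thm: frac mom bound}; but sacrificing the boxed factor destroys the $|L|^{-a}$ decay you need to beat the boundary count, while sacrificing the full-resolvent factor destroys the term you must keep for the iteration. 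Re-sampling everything inside the box instead would produce $\mathcal{O}(|L|^2)$ uncontrolled correction terms.

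The paper resolves this by applying the resolvent identity \emph{twice}, producing the sandwich $R^{(L)}\, T^{(L)}\, R\, T^{(L+\iota)}\, R^{(L+\iota)}$ and hence \emph{three} factors: $\langle x^{(i)}\mid R^{(L)} v^{(k)}\rangle$ (depending only on variables in $\Lambda_L$), $\langle \tilde v^{(l)}\mid R\, u^{(n)}\rangle$ (the full resolvent, pinned between two boundary sites), and $\langle \tilde u^{(m)}\mid R^{(L+\iota)} y^{(j)}\rangle$ (depending only on variables in $\Lambda_{L+\iota}^C$). The outer factors live on \emph{disjoint} variable sets — this is the real purpose of the collar $\iota$ — and the middle full resolvent is dispensable: its conditional expectation given everything outside the four sites $J=\{v,\tilde v,u,\tilde u\}$ is bounded by $C(s)$, which is affordable since no decay is needed from it. A further resolvent identity plus the second re-sampling argument then converts $R^{(L+\iota)}$ in the last factor back into the full resolvent $R$ so the estimate can be iterated. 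Without this two-box, three-factor structure your factorization step has no justification. (A minor additional point: the correlated $3\times 3$ blocks of $D_\omega$ are not the source of the decoupling difficulty you cite; they matter for Theorem \ref{thm: frac mom bound} and for passing from fractional moments to dynamical localization, whereas the obstruction here is purely the variable-sharing between $R$ and the boxed resolvent.)
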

The proof of Proposition \ref{propo: max on boundary} follows mainly the steps of \cite{HJS:09} (Proposition 3.1) with some model-specific changes. It uses two re-sampling arguments that require some adaptations and are detailed in the appendices \ref{sec: first resampling argu} and \ref{sec: second resampling argu}.
\begin{remark} \label{remark: relabeling propo}
    We can relabel $L + \iota$ to $L$ in the statement of Proposition \ref{propo: max on boundary}. We then obtain for $\min(L_1, L_2) \geq 5$, $|L + \iota| \geq L_0$ and $y^{(j)} \in \Hp_{L+\iota}^C$:
    \begin{align*}
        \E \left( \left| \langle x^{(i)} \, | \, R(C,z) \, y^{(j)} \rangle \right|^s \right) \leq q \, \max_{v^{(k)} \in \overline{\partial ( \Lambda_{L}^C)} + [x^{(i)}]} \E \left( \left| \langle v^{(k)} \, | \, R(C,z) \, y^{(j)} \rangle \right|^s \right).
    \end{align*}
\end{remark}
\begin{proof}
    To simplify the notation we assume without loss of generality that $x^{(i)} \in \Hp^{0,0}$. From now on we will omit the dependencies on $z, C$ and $\omega$. We use the resolvent identity twice and obtain:
    \begin{align*}
        R &= R^{(L)} - R^{(L)} \, T^{(L)} \, R = R^{(L)} - R^{(L)} \, T^{(L)} \, \big( R^{(L+ \iota)} - R \, T^{(L+\iota)} \, R^{(L+ \iota)} \big),
    \end{align*}
    where $T^{(L)}$ denotes the transition operator defined in section \ref{sec: trans op}. Using the resolvent identity above, we have:
    \begin{align*}
        \langle x^{(i)} \, | \, R \, y^{(j)} \rangle 
        &= \langle x^{(i)} \, | \, R^{(L)} \, y^{(j)} \rangle - \langle x^{(i)} \, | \, R^{(L)} \, T^{(L)} \, R^{(L+\iota)} \, y^{(j)} \rangle \\
        &+ \langle x^{(i)} \, | \, R^{(L)} \, T^{(L)} \, R \, T^{(L+\iota)} \, R^{(L+\iota)} \, y^{(j)} \rangle.
    \end{align*}
    The first of the three terms is zero, since $x^{(i)} \in \Hp_L$ and $y^{(j)} \in \Hp_L^C$ and both subspaces are invariant under $R^{(L)}$. Similarly, since $y^{(j)} \in \Hp_{L+\iota}^C$ and using that $T^{(L)}$ has range $1$, we conclude that the second scalar product is also zero. Inserting the identity operator between each of the remaining terms, we obtain:
    \begin{align} \label{eq: much before first resampling}
    \begin{split}
        \langle x^{(i)} \, | \, R \, y^{(j)} \rangle 
        = \sum_{\substack{v^{(k)}, \tilde{v}^{(l)},\\ u^{(n)}, \tilde{u}^{(m)} \in \Hp}} &\langle x^{(i)} \, | \, R^{(L)} \, v^{(k)} \rangle \, \langle v^{(k)} \, | \, T^{(L)} \, \tilde{v}^{(l)} \rangle \, \langle \tilde{v}^{(l)} \, | \, R \, u^{(n)} \rangle \\ &\langle u^{(n)} \, | \, T^{(L+\iota)} \, \tilde{u}^{(m)} \rangle \, \langle \tilde{u}^{(m)} \, | \, R^{(L+\iota)} \, y^{(j)} \rangle.
    \end{split}
    \end{align}
    We define the set of vertex pairs crossing the boundary of $\Lambda_L$:
    \begin{align*}
        B_L = \{ (v^{(k)}, \tilde{v}^{(l)}) \in \Hp \times \Hp \text{ s.t. } \langle v^{(k)} \, | \, T^{(L)} \, \tilde{v}^{(l)} \rangle \neq 0 \}.
    \end{align*}
    Since $T^{(L)}$ is only non-zero on sites where the coin matrix is changed from $C$ to $C_0$, it follows that for any pair $(v^{(k)}, \tilde{v}^{(l)}) \in B_L$ we have $v, \tilde{v} \in \partial(\Lambda_L) \cup \partial(\Lambda_L^C)$ and the coin matrix is changed at $\tilde{v}$. Since the norm of the transition operator $T^{(L)}$ is bounded independently of $L$ (see Proposition \ref{propo:norm trans op}), there exists a constant $c_1 > 0$ that depends only on the coin matrix $C$ such that:
    \begin{align} \label{eq: before first resampling}
    \begin{split}
        &\E \big( \big| \langle x^{(i)} \, | \, R \, y^{(j)} \rangle \big|^s \big) \\ &\leq c_1 \, \sum_{\substack{(v^{(k)}, \tilde{v}^{(l)}) \in B_L, \\ (u^{(n)}, \tilde{u}^{(m)}) \in B_{L+\iota}, \\ v^{(k)} \in \Hp_L, \tilde{u}^{(m)} \in \Hp_{L+\iota}^C}} \E \left( \left| \langle x^{(i)} \, | \, R^{(L)} \, v^{(k)} \rangle \right|^s \left| \langle \tilde{v}^{(l)} \, | \, R \, u^{(n)} \rangle \right|^s \left| \langle \tilde{u}^{(m)} \, | \, R^{(L+\iota)} \, y^{(j)} \rangle \right|^s \right).
    \end{split}
    \end{align}
    Note that even though the first and third terms are independent, they are coupled by the second term. In order to decouple the terms, we adapt the so called re-sampling argument from \cite{HJS:09}, section 13, which is based on a strategy developed in \cite{AENSS:2005} and \cite{DNSS:2005} for continuum Anderson-type models. We show in the appendix \ref{sec: first resampling argu} that there exists a constant $c_2 >0$ independent of $L$ such that:
    \begin{align} \label{eq: first resampling}
    \begin{split}
        \E \big( \big| &\langle x^{(i)} \, | \, R \, y^{(j)} \rangle \big|^s \big) \\ &\leq c_2 \, \sum_{u^{(n)} \in \partial (\Lambda_L)} \E \left( \left| \langle x^{(i)} \, | \, R^{(L)} u^{(n)} \rangle \right|^s \right) \, \sum_{\tilde{u}^{(m)} \partial(\Lambda_{L+\iota}^C)} \E \left( \left| \langle \tilde{u}^{(m)} \, | \, R^{(L+\iota)} \, y^{(j)} \rangle \right|^s \right).
    \end{split}
    \end{align}
    Using the resolvent identity $R^{L+\iota} = R + R^{(L+\iota)} \, T^{(L+\iota)} \, R$ on the second term in \eqref{eq: first resampling} and inserting identities between each of the operators, we obtain for fixed $\tilde{u}^{(m)}$:
    \begin{align*}
        \E &\Big(\left| \langle \tilde{u}^{(m)} \, | \, R^{(L+\iota)} \, y^{(j)} \rangle \right|^s \Big) \leq \E \left(\left| \langle \tilde{u}^{(m)} \, | \, R \, y^{(j)} \rangle \right|^s \right) \\
        &+ \E \bigg( \sum_{v^{(k)}, \tilde{v}^{(l)} \in \Hp} \left| \langle \tilde{u}^{(m)} \, | \, R^{(L+\iota)} v^{(k)} \rangle \right|^s \, \left| \langle v^{(k)} \, | \, T^{(L+\iota)} \tilde{v}^{(l)} \rangle \right|^s \, \left| \langle \tilde{v}^{(l)} \, | \, R \, y^{(j)} \rangle \right|^s \bigg).
    \end{align*}
    Since $\tilde{u}^{(m)} \in \Hp_{L+\iota}^C$, we obtain $(v^{(k)}, \tilde{v}^{(l)}) \in B_{L+\iota}$ with $v^{(k)} \in \Hp_{L+\iota}^C$. Bounding $T^{(L+\iota)}$, there exists a constant $c_3>0$ that only depends on $C$ such that
    \begin{align*}
        \E \Big(\big| \langle \tilde{u}^{(m)} \, | \, &R^{(L+\iota)} \, y^{(j)} \rangle \big|^s \Big) \leq \E \left(\left| \langle \tilde{u}^{(m)} \, | \, R \, y^{(j)} \rangle \right|^s \right) \\ 
        &+ c_3 \sum_{\substack{(v^{(k)}, \tilde{v}^{(l)}) \in B_{L+\iota}, \\ v^{(k)} \in \Hp_{L+\iota}^C}} \E \bigg( \left| \langle \tilde{u}^{(m)} \, | \, R^{(L+\iota)} \, v^{(k)} \rangle \right|^s \, \left| \langle \tilde{v}^{(l)} \, | \, R \, y^{(j)} \rangle \right|^s \bigg).
    \end{align*}
    Plugging this into \eqref{eq: first resampling} yields:
    \begin{align} \label{eq: before second resampling}
    \begin{split}
        \E \big( \big| &\langle x^{(i)} \, | \, R \, y^{(j)} \rangle \big|^s \big) \leq c_2 \, \bigg( \sum_{u^{(n)} \in \partial (\Lambda_L)} \E \left( \left| \langle x^{(i)} \, | \, R^{(L)} \, u^{(n)} \rangle \right|^s \right) \bigg) \\ &\times \sum_{\tilde{u}^{(m)} \in \partial (\Lambda_{L+\iota}^C)} \Bigg( \E \left(\left| \langle \tilde{u}^{(m)} \, | \, R \, y^{(j)} \rangle \right|^s \right) \\ &+ c_3 \sum_{\substack{(v^{(k)}, \tilde{v}^{(l)}) \in B_{L+\iota}, \\ v^{(k)} \in \Hp_{L+\iota}^C}} \E \bigg( \left|\langle \tilde{u}^{(m)} \, | \, R^{(L+\iota)} \, v^{(k)} \rangle \right|^s \, \left| \langle \tilde{v}^{(l)} \, | \, R \, y^{(j)} \rangle \right|^s \bigg) \Bigg).
    \end{split}
    \end{align}
    A second re-sampling argument allows us to estimate \eqref{eq: before second resampling}: There exists another constant $c_4 > 0$ independent of $L$ such that
    \begin{align} \label{eq: second resampling}
    \begin{split}
        \E \big( &\big| \langle x^{(i)} \, | \, R \, y^{(j)} \rangle \big|^s \big) \\ &\leq c_4 \, |L| \, \sum_{u^{(n)} \in \partial (\Lambda_L)} \E \left( \left| \langle x^{(i)} \, | \, R^{(L)} \, u^{(n)} \rangle \right|^s \right) \, \sum_{\tilde{v}^{(l)} \in \overline{\partial ( \Lambda_{L+\iota}^C)}} \E \left( \big| \langle \tilde{v}^{(l)} \, | \, R \, y^{(j)} \rangle \big|^s \right).
    \end{split}
    \end{align}
    We prove \eqref{eq: second resampling} in appendix \ref{sec: second resampling argu}. Note that the cardinalities of $\partial ( \Lambda_L )$ and $\overline{ \partial ( \Lambda_{L+\iota}^C )}$ are both $\mathcal{O}(|L|)$. Using \eqref{propo: poly decay resolvent}, this gives for the first sum in \eqref{eq: second resampling} for any $a \geq 0$:
    \begin{align*}
        \sum_{u^{(n)} \in \partial (\Lambda_L)} \E \left( \left| \langle x^{(i)} \, | \, R^{(L)} \, u^{(n)} \rangle \right|^s \right) \leq \frac{c}{|L|^{a-1}}.
    \end{align*}
    Note that in order to apply Proposition \ref{propo: poly decay resolvent} we need that $\min(L_1, L_2) \geq 3$ and $x^{(i)} \nsim u^{(n)}$. Since $x^{(i)} \in \Hp^{0,0}$ and $u^{(n)} \in \partial ( \Lambda_L )$, the first assumption implies the second one. Additionally, Proposition \ref{propo: poly decay resolvent} requires $\|C-C_0\|_\infty \leq \frac{1}{|L|^{2ap + 4 + \frac{a}{s}}}$ for some $p > \frac{1}{1-s}$. Using that the cardinality of $\overline{\partial( \Lambda_{L+\iota}^C )}$ is $\mathcal{O}(|L|)$ yields
    \begin{align*}
        \sum_{\tilde{v}^{(l)} \in \overline{\partial ( \Lambda_{L+\iota}^C)}} \E \left( \big| \langle \tilde{v}^{(l)} \, | \, R \, y^{(j)} \rangle \big|^s \right) \leq c \, |L| \, \max_{\tilde{v}^{(l)} \in \overline{\partial ( \Lambda_{L+\iota}^C)}} \E \left( \big| \langle \tilde{v}^{(l)} \, | \, R \, y^{(j)} \rangle \big|^s \right).
    \end{align*}
    We plug the last two estimates into \eqref{eq: second resampling} and relabel $\tilde{v}^{(l)}$ to $v^{(k)}$ to obtain
    \begin{align*}
        \E \big( &\big| \langle x^{(i)} \, | \, R \, y^{(j)} \rangle \big|^s \big) \leq c \, |L|^{3-a} \, \max_{v^{(k)} \in \overline{\partial ( \Lambda_{L+\iota}^C)}} \E \left( \big| \langle v^{(k)} \, | \, R \, y^{(j)} \rangle \big|^s \right).
    \end{align*}
    for some constant $c > 0$ and $\|C-C_0\|_\infty \leq \frac{1}{|L|^{2ap + 4 + \frac{a}{s}}} =: \varphi$. Choosing $a$ and $L_0$ big enough such that $q = c \, L_0^{3-a} < 1$ finishes the proof.
\end{proof}

\subsection{Finite volume method} \label{sec: main proof}
Repeated application of Proposition \ref{propo: max on boundary} allows us to prove Theorem \ref{thm:main_thm}:
\begin{proof}
    We consider $L_0 \geq 5$ from Remark \ref{remark: relabeling propo} and set $L=(L_0,L_0)$. The left hand side of \eqref{eq:expo_decay} is bounded due to Theorem \ref{thm: frac mom bound}, which is why it suffices to consider only $x^{(i)}$ and $y^{(j)}$ such that $|x-y| > 4L_0 + 8$ . This condition guarantees that $y^{(j)} \in \Hp_{L+\iota}^C + [x^{(i)}]$, as can be verified in Figure \ref{fig: box and bigger box}. Applying Proposition \ref{propo: max on boundary} and Remark \ref{remark: relabeling propo} gives:
    \begin{align*}
        \E \left( \left| \langle x^{(i)} \, | \, R(C,z) \, y^{(j)} \rangle \right|^s \right) \leq q \, \max_{v^{(k)} \in \overline{\partial ( \Lambda_{L}^C)} + [x^{(i)}] } \E \left( \left| \langle v^{(k)} \, | \, R(C,z) \, y^{(j)} \rangle \right|^s \right).
    \end{align*}
    for some $q<1$ and all $C$ such that $\|C-C_0 \|_\infty \leq \varphi$. Applying Proposition \ref{propo: max on boundary} to the right hand side now yields:
    \begin{align*}
        \E \left( \left| \langle x^{(i)} \, | \, R(C,z) \, y^{(j)} \rangle \right|^s \right) \leq q^2 \, \max_{\tilde{v}^{(l)} \in \overline{\partial ( \Lambda_{L}^C)} + [v^{(k)}]} \E \left( \left| \langle \tilde{v}^{(l)} \, | \, R(C,z) \, y^{(j)} \rangle \right|^s \right).
    \end{align*}
    We iterate this procedure and apply Proposition \ref{propo: max on boundary} at least $n= \lfloor \frac{|x-y|}{4L_0+1} \rfloor$ times, until the point we obtained in the previous step is contained in a box of size $L_0$ centered at $y^{(j)}$. We obtain the constant $4L_0+1$ by identifying the point $v^{(k)} \in \partial(\Lambda_L^C)$ in Figure \ref{fig: box and bigger box} that is the furthest away from the center of the box, since this minimizes the steps $n$ we have to take to move from $x$ to $y$. After $n$ iterations of Proposition \ref{propo: max on boundary} we apply Theorem \ref{thm: frac mom bound} to obtain:
    \begin{align*}
        \E \left( \left| \langle x^{(i)} \, | \, R(C,z) \, y^{(j)} \rangle \right|^s \right) \leq c \, q^n.
    \end{align*}
    Setting $g = \frac{|\log(q)|}{4L_0+1}$ finishes the proof.
\end{proof}
\begin{figure} [h!]
    \centering
    \includegraphics[scale=0.35]{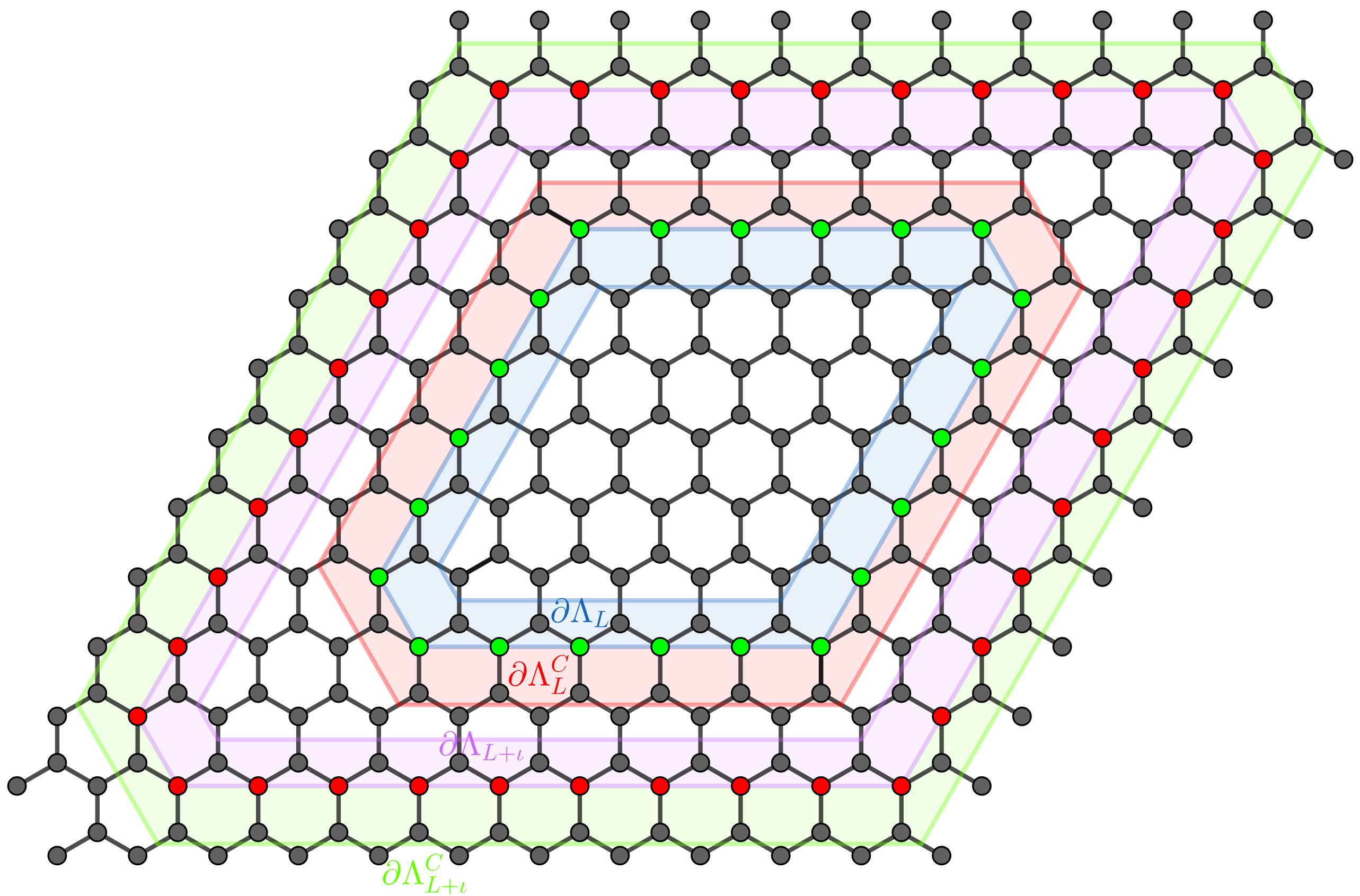}
    \caption{The boundaries of the boxes $\Lambda_L$ and $\Lambda_{L+\iota}$.}
    \label{fig: box and bigger box}
\end{figure}

\section{Transport using a topological index} \label{sec: topo properties}
\subsection{The scattering Quantum Walk}
In order to address transport properties of the QW using topological methods, we find it convenient to view the coined QW (CQW) as a scattering QW (SQW). Contrary to a CQW (see section \ref{sec: model and results}), where the Walker lives on the vertices of the graph, the Walker lives on its edges when considering a SQW (see eg. \cite{joye:2024}). For this, we identify the subspace of incoming edges to any lattice site with its coin space. The Walker moves from edge to edge. Notice that every CQW corresponds to a SQW, and vice versa.

Since the scattering approach requires directed edges, we split each edge into two directed edges and label them by their centers. Let $\mathcal{R}$ denote the set of edge centers. We connect two edge centers $u,v \in \mathcal{R}$ by a directed edge $(u,v)$ if and only if there exists a lattice site $x \in \Gamma$ such that $u$ is an incoming and $v$ an outgoing edge of $x$. The pair $(u,v)$ is called a link. We denote by $\mathcal{E}$ the set of links. The resulting graph $G = (\mathcal{R}, \mathcal{E})$ is called the scattering graph and illustrated in Figure \ref{fig: scattering graph}. 

\begin{wrapfigure}{r}{0.4\textwidth}
    \centering
    \includegraphics[scale=1.4]{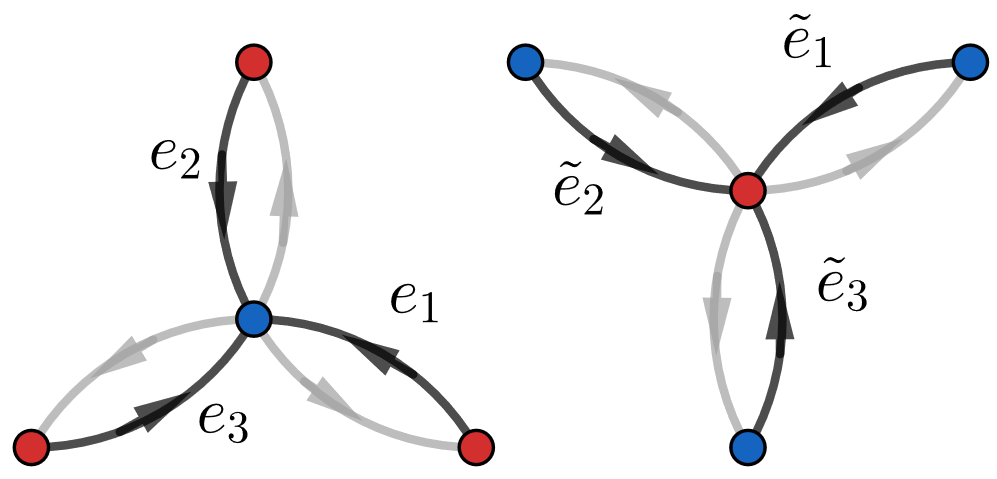}
    \caption{The relation between incoming subspace and coin space at $\Gamma_A$ and $\Gamma_B$ sites.}
    \label{fig: correspondence edges coin state}
\end{wrapfigure}
In the following we present the unitary map $\mathcal{V}$ mapping a coined QW to its scattering version, following the steps of \cite{joye:2024}: The operator describing the scattering QW is a unitary operator on $l^2(\mathcal{R})$. It moves the Walker along the links in $\mathcal{E}$. For any  $x \in \Gamma$, let $\mathcal{I}_x$ denote the three-dimensional subspace of $l^2(\mathcal{R})$ spanned by the incoming edges of $x$. Similarly, let $\mathcal{O}_x$ denote the subspace spanned by the outgoing edges. For any lattice site $x \in \Gamma$ we identify the subspace of incoming edges $\mathcal{I}_x$ with its coin space as shown in Figure \ref{fig: correspondence edges coin state}. We stress that we can also view the $\tilde{e}_j$ as an ONB of $\mathcal{O}_x$ for any $x \in \Gamma_A$, and similarly for $e_j$ and $x \in \Gamma_B$. Note that $l^2(\mathcal{R}) = \bigoplus_{x \in \Gamma} \mathcal{I}_x = \bigoplus_{x \in \Gamma} \mathcal{O}_x$ and $l^2(\Gamma) \otimes \C^3 = \bigoplus_{x \in \Gamma} \C^3$. Let $\mathcal{V}_x: \mathcal{I}_x \to \C^3$ denote the unitary operator mapping the orthonormal basis of $\mathcal{I}_x$ to the standard basis $(e_i)_i$ of the coin space $\C^3$, as detailed in Figure \ref{fig: correspondence edges coin state}. Therefore, the operator
\begin{align*}
    \mathcal{V} = \bigoplus_{x \in \Gamma} \mathcal{V}_x \, : l^2(\mathcal{R}) \to l^2(\Gamma) \otimes \C^3
\end{align*}
is unitary. The scattering version of the coined QW $U$ is then given by $\mathcal{V}^{-1} U \mathcal{V}$. From now on, let $\mathcal{U}$ denote the scattering version of the QW on the hexagonal lattice.
\begin{figure} [h!]
    \centering
    \includegraphics[scale=1.5]{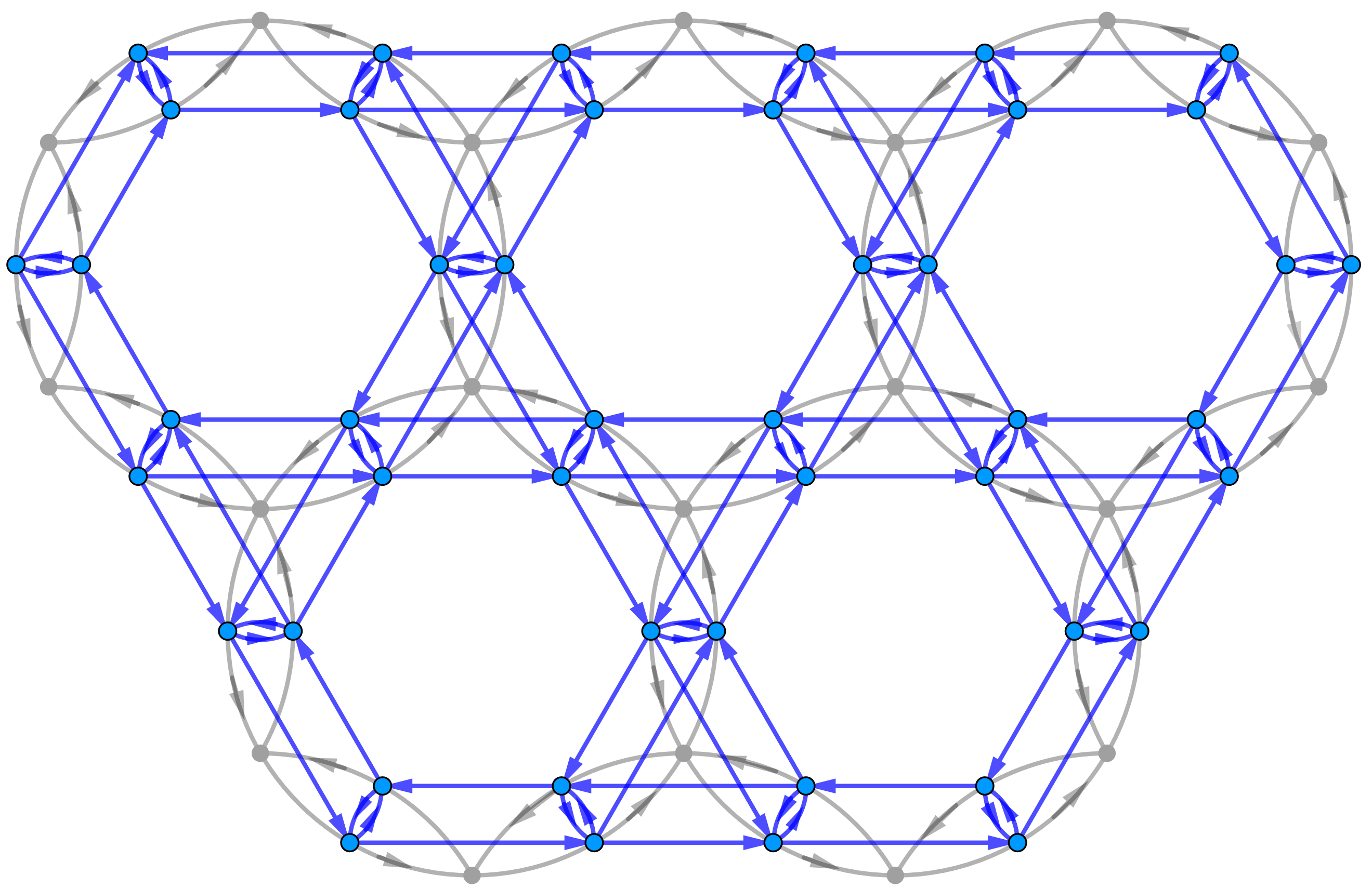}
    \caption{The scattering graph (blue) is obtained from the directed hexagonal lattice (grey). Note that the scattering graph is not planar.}
    \label{fig: scattering graph}
\end{figure}

\subsection{The index}
The index we consider is the index of a pair of projections, as defined in \cite{ASS:94}. Under the right conditions, it coincides with the trace of the flux operator of a projection on a suitable subset of edge centers in $\mathcal{R}$. The following theorem was proven in \cite{ABJ:2020}:
\begin{theorem} \label{thm: index properties}
    Let $\mathcal{U}$ be a unitary operator on a Hilbert space and $P$ be an orthogonal projection. Define the self-adjoint operator 
    \begin{align*}
        \Phi = \mathcal{U}^* P\mathcal{U} - P = \mathcal{U}^* [P,\mathcal{U}].
    \end{align*}
    If $1$ is an isolated eigenvalue of finite multiplicity of $\Phi^2$, we define the index
    \begin{align*}
        \text{ind}(\Phi) := \text{dim} \big( \text{ker} (\Phi - 1) \big) - \text{dim} \big( \text{ker} (\Phi + 1) \big).
    \end{align*}
    Furthermore, if $\text{ind}(\Phi) \neq 0$, then
    \begin{enumerate}
	   \item if $\Phi$ is compact, then the spectrum covers the whole unit circle: $\sigma(\mathcal{U}) = \Sp^1$
	   \item if $\Phi$ is trace class, then the absolutely continuous spectrum covers the unit circle: $\sigma_{ac}(\mathcal{U}) = \Sp^1$
    \end{enumerate}
\end{theorem}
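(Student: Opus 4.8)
The plan is to exploit two structural facts: $\Phi$ is a difference of the orthogonal projections $P$ and $Q:=\mathcal{U}^*P\mathcal{U}$, and the commutator $[P,\mathcal{U}]=\mathcal{U}\Phi$ lies in the same operator ideal as $\Phi$. First I would recall from \cite{ASS:94} the elementary algebra of differences of projections: $\Phi$ is self-adjoint with $\sigma(\Phi)\subseteq[-1,1]$, $\Phi^2$ commutes with $P$ and $Q$, one has $\ker(\Phi-1)=\operatorname{Ran} Q\cap\ker P$ and $\ker(\Phi+1)=\ker Q\cap\operatorname{Ran} P$, and on the orthogonal complement of these two eigenspaces there is a symmetry anticommuting with $\Phi$, so the remaining non-zero spectrum is symmetric about $0$; hence, under the hypothesis that $1$ is an isolated eigenvalue of finite multiplicity of $\Phi^2$ (which is automatic once $\Phi$ is compact), $\ind(\Phi)$ is a well-defined integer. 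When $\Phi$ is compact it moreover coincides, up to a sign, with the Fredholm index of the compression $P\mathcal{U}P$ on $P\Hp$, since $(P\mathcal{U}P)^*(P\mathcal{U}P)=P+P\Phi P$ and analogously for $(P\mathcal{U}P)(P\mathcal{U}P)^*$. Finally, since $[P,\mathcal{U}]=\mathcal{U}\Phi$ lies in the ideal in question, the Leibniz rule together with polynomial approximation on $\Sp^1$ give that $[P,f(\mathcal{U})]$ lies in that ideal for every continuous $f:\Sp^1\to\C$.

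\emph{Part 1.} Suppose $\sigma(\mathcal{U})\neq\Sp^1$ and pick $\lambda_0\in\Sp^1\setminus\sigma(\mathcal{U})$. A continuous branch of the argument on $\Sp^1\setminus\{\lambda_0\}$ produces a bounded self-adjoint $H$ with $\mathcal{U}=e^{iH}$, and since $e^{itH}$ is a continuous function of $\mathcal{U}$, the path $t\mapsto\mathcal{U}_t:=e^{itH}$, $t\in[0,1]$, is a norm-continuous path of unitaries from $I$ to $\mathcal{U}$ with $[P,\mathcal{U}_t]$ compact throughout. Hence $\Phi_t:=\mathcal{U}_t^*P\mathcal{U}_t-P=\mathcal{U}_t^*[P,\mathcal{U}_t]$ is a norm-continuous path of compact self-adjoint operators, equivalently $P\mathcal{U}_tP|_{P\Hp}$ is a norm-continuous family of Fredholm operators; its index is therefore constant in $t$, and since $\Phi_0=0$ one gets $\ind(\Phi)=\ind(\Phi_1)=0$, contradicting the assumption. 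Thus $\sigma(\mathcal{U})=\Sp^1$.

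\emph{Part 2.} Now $\Phi$ is trace class, hence compact, so $\sigma(\mathcal{U})=\Sp^1$ by Part 1 and it remains to prove $\sigma_{ac}(\mathcal{U})=\Sp^1$. The key identity is the telescoping formula
\begin{equation*}
\mathcal{U}^{*n}P\mathcal{U}^n-P=\sum_{k=0}^{n-1}\mathcal{U}^{*k}\,\Phi\,\mathcal{U}^k ,
\end{equation*}
a trace-class operator with $\operatorname{Tr}\!\big(\mathcal{U}^{*n}P\mathcal{U}^n-P\big)=n\,\ind(\Phi)$: a non-zero index forces a charge transported across $P$ that grows linearly in $n$, i.e.\ ballistically. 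Assume for contradiction that an open arc $J\subset\Sp^1$ avoids $\sigma_{ac}(\mathcal{U})$ and let $E:=E_J(\mathcal{U})$; it commutes with $\mathcal{U}$ and $\mathcal{U}|_{E\Hp}$ is purely singular. I would then split the trace of the telescoping identity along $\Hp=E\Hp\oplus(1-E)\Hp$: the $E$-block is governed by the purely singular unitary $\mathcal{U}|_{E\Hp}$, for which a RAGE/Wiener-type averaging estimate forces the Cesàro-averaged transported charge to be $o(n)$, whereas the complementary block carries a transported charge of its own; combined with additivity of the index under such an orthogonal splitting, this forces $\ind(\Phi)$ to be concentrated on $(1-E)\Hp$, and running the argument over a cover of the non-empty open set $\Sp^1\setminus\sigma_{ac}(\mathcal{U})$ squeezes $\ind(\Phi)$ to $0$, a contradiction. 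Equivalently, and probably more cleanly, I would pass to the scattering picture of $\mathcal{U}$ as in \cite{joye:2024} and invoke trace-class (Birman--Krein) scattering theory: the trace-class bound on $\Phi$ is exactly what is needed for the relevant wave operators to exist and to intertwine $\mathcal{U}$ with a reference dynamics that transports charge on the absolutely continuous subspace, which is impossible on a spectral arc unless the absolutely continuous spectrum fills it.

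\emph{Main obstacle.} Part 1 is essentially formal once one has the ideal property of $[P,f(\mathcal{U})]$ and the homotopy invariance of the index; the only delicate point there is that the spectral gap of $\Phi_t^2$ at $1$ may shrink along the homotopy, which is why it is convenient to argue with the Fredholm index of $P\mathcal{U}_tP$, whose stability needs only compactness and not a uniform gap. The real difficulty is Part 2: because $P$ does not commute with the spectral projections of $\mathcal{U}$, decomposing $\operatorname{Tr}(\mathcal{U}^{*n}P\mathcal{U}^n-P)$ along $E$ produces cross terms involving $E\Phi(1-E)$ that must be controlled uniformly in $n$, and it is precisely here, not in Part 1, that the trace-class hypothesis rather than mere compactness is indispensable. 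Establishing rigorously that the index is carried by the absolutely continuous subspace, whether via the RAGE estimate or via unitary trace-class scattering, is the crux of the argument.
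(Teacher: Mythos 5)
First, note that the paper itself does not prove Theorem \ref{thm: index properties}: it is quoted verbatim and the text says ``For the proof we refer to \cite{ABJ:2020}'', so there is no in-paper argument to compare against and your proposal must stand on its own. Your preliminaries and Part 1 do stand. The ASS identifications $\ker(\Phi-1)=\mathrm{Ran}(\mathcal{U}^*P\mathcal{U})\cap\ker P$, $\ker(\Phi+1)=\ker(\mathcal{U}^*P\mathcal{U})\cap\mathrm{Ran}\,P$, the computation $(P\mathcal{U}P)^*(P\mathcal{U}P)=P+P\Phi P$ identifying $\text{ind}(\Phi)$ with minus the Fredholm index of $P\mathcal{U}P|_{P\Hp}$, and the homotopy $\mathcal{U}_t=e^{itH}$ through continuous functions of $\mathcal{U}$ on a spectral gap, combined with norm-stability of the Fredholm index (correctly invoked instead of gap-stability of $\Phi_t^2$), give a complete proof of the compact case, and it is essentially the standard one. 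One slip in the preliminaries: $[P,f(\mathcal{U})]$ lies in the \emph{trace} ideal for every continuous $f$ is false --- uniform approximation by trigonometric polynomials controls the operator norm of the commutator, not its trace norm --- but you only use the compact ideal in Part 1, where the claim is correct.

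Part 2 contains a genuine gap. The telescoping identity and $\text{Tr}(\mathcal{U}^{*n}P\mathcal{U}^n-P)=n\,\text{ind}(\Phi)$ are right, and since $E=E_J(\mathcal{U})$ commutes with $\mathcal{U}$ one even has the exact identity $\text{Tr}\big(E(\mathcal{U}^{*n}P\mathcal{U}^n-P)E\big)=\sum_{k=0}^{n-1}\text{Tr}\big(\mathcal{U}^{*k}E\Phi E\,\mathcal{U}^{k}\big)=n\,\text{Tr}(E\Phi E)$, so a RAGE/Wiener argument on a purely singular arc can at best yield $\text{Tr}(E\Phi E)=0$. This does not ``squeeze $\text{ind}(\Phi)$ to $0$'': if $\sigma_{ac}(\mathcal{U})$ is a proper nonempty subset of $\Sp^1$, the entire index may consistently be carried by $(1-E)\Hp$ and no contradiction follows. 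What is missing is the localization (equidistribution) lemma $\text{Tr}(E_J\Phi E_J)=\text{ind}(\Phi)\,|J|/2\pi$ for every arc $J$ --- verified for the shift with $P=\chi_{\{n\ge0\}}$, where $\Phi=|\delta_{-1}\rangle\langle\delta_{-1}|$ and $\text{Tr}(E_J\Phi E_J)=|J|/2\pi$ --- whose proof is exactly the nontrivial anomaly computation (one must show $\text{Tr}(\Phi\,\mathcal{U}^m)=0$ for $m\neq0$, which cannot be obtained by naive cyclicity since $P\mathcal{U}^m$ is not trace class) that forms the heart of the argument in \cite{ABJ:2020}. You flag this as the crux, but flagging it is not closing it, and the alternative route via Birman--Krein scattering is only named, not carried out. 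As written, Part 1 is a proof; Part 2 is an outline with its central lemma unproved.
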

For the proof we refer to \cite{ABJ:2020}. It was shown in \cite{ASS:94} that for two orthogonal projections $P$ and $P'$ we have:
\begin{align} \label{eq: index invariance compact perturbations}
    P-P' \text{ is compact} \Rightarrow \text{ind}(\Phi) = \text{ind}(\Phi').
\end{align}
Our goal is to define a projection onto a suitable "half-space" of $\mathcal{R}$ that yields a non-zero index, and establish conditions under which the flux operator $\Phi$ is compact or trace-class. From this we can deduce the existence of continuous spectrum, which is an indicator for transport. Let $Q_x$, $\widehat{Q}_x$ be the projections onto $\mathcal{I}_x , \mathcal{O}_x \subset l^2(\mathcal{R})$. We can show the following:
\begin{proposition} \label{propo: counting dimensions}
    For any subset $\mathcal{M} \subset \mathcal{R}$, let $P = \sum_{u \in \mathcal{M}} \ket{u} \bra{u}$ be the projection onto $\mathcal{M}$ in $l^2(\mathcal{R})$. Let $\Phi = \mathcal{U}^* P \mathcal{U} - P$, then it holds for all $x \in \Gamma$:
    \begin{enumerate}
        \item $[\Phi, Q_x] = 0$
        \item $\Phi Q_x = \mathcal{U}^* P \widehat{Q}_x \mathcal{U} - P Q_x$
        \item $\text{ind}(\Phi)$ is well-defined if and only if $\exists \, c < 1$ and $R > 0$ s.t. $\sup_{|x| > R} \| \Phi Q_x \| \leq c$. In this case, we have
        \begin{align} \label{eq: sum dim difference}
            \text{ind}(\Phi) = \sum_{|x| \leq R} \text{dim} \big( \text{Ran} (P \widehat{Q}_x) \big) - \text{dim} \big( \text{Ran} (P Q_x) \big).
        \end{align}
    \end{enumerate}
\end{proposition}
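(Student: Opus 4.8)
The plan is to exploit the block structure of $\mathcal{U}$ with respect to the orthogonal decomposition $l^2(\mathcal{R}) = \bigoplus_{x \in \Gamma} \mathcal{I}_x$. First I would establish (1): since the scattering operator $\mathcal{U}$ maps $\mathcal{I}_x$ into $\mathcal{O}_x$ via the local coin action and then the shift redistributes $\mathcal{O}_x$ among the incoming subspaces of the neighbouring sites, one checks that $\mathcal{U} Q_x = \widehat{Q}_x \mathcal{U} Q_x$ — i.e. $\mathcal{U}$ sends $\mathcal{I}_x$ onto $\mathcal{O}_x$. Equivalently $\mathcal{U}^* \widehat{Q}_x = Q_x \mathcal{U}^* \widehat{Q}_x$. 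Now $P = \sum_{u \in \mathcal{M}} \ket{u}\bra{u}$ is diagonal in the edge-center basis, hence commutes with every $Q_x$ and every $\widehat{Q}_x$ (these are sums of the same rank-one projections). Therefore $P$ commutes with $\mathcal{U}^* P \mathcal{U}$'s building blocks: more directly, $\mathcal{U}^* P \mathcal{U} \, Q_x = \mathcal{U}^* P \widehat{Q}_x \mathcal{U} Q_x = \mathcal{U}^* \widehat{Q}_x P \widehat{Q}_x \mathcal{U} Q_x$, and reading this together with $Q_x \mathcal{U}^* \widehat{Q}_x = \mathcal{U}^* \widehat{Q}_x$ shows $\mathcal{U}^* P \mathcal{U} \, Q_x = Q_x \, \mathcal{U}^* P \mathcal{U}$. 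Combined with $[P, Q_x] = 0$ this gives $[\Phi, Q_x] = 0$, which is (1).

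For (2), using $\mathcal{U} Q_x = \widehat{Q}_x \mathcal{U} Q_x$ once more,
\begin{align*}
    \Phi Q_x = \mathcal{U}^* P \mathcal{U} Q_x - P Q_x = \mathcal{U}^* P \widehat{Q}_x \mathcal{U} Q_x - P Q_x.
\end{align*}
To drop the trailing $Q_x$ inside $\mathcal{U} Q_x$ I would note that by (1), $\Phi Q_x = Q_x \Phi$, and then re-expand; alternatively, observe that $\mathcal{U}^* P \widehat{Q}_x \mathcal{U} = \mathcal{U}^* P \widehat{Q}_x \mathcal{U}$ already has range inside $\mathcal{I}_x$ on the relevant component because $\widehat{Q}_x \mathcal{U} = \widehat{Q}_x \mathcal{U} Q_x$ (the $\mathcal{O}_x$ component of $\mathcal{U}\psi$ only depends on the $\mathcal{I}_x$ component of $\psi$, since the coin at $x$ acts block-diagonally). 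This last identity is the real content: it says $\widehat{Q}_x \mathcal{U}(1-Q_x) = 0$. Granting it, $\mathcal{U}^* P \widehat{Q}_x \mathcal{U} = \mathcal{U}^* P \widehat{Q}_x \mathcal{U} Q_x$, so the $Q_x$ is automatic and (2) follows.

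For (3), the decompositions $1 = \sum_x Q_x$ and $[\Phi, Q_x]=0$ show that $\Phi = \bigoplus_x \Phi Q_x$ is block-diagonal with blocks $\Phi_x := \Phi Q_x$ acting on the finite-dimensional space $\mathcal{I}_x \cong \C^3$ (note $\Phi_x$ maps $\mathcal{I}_x$ into $\mathcal{I}_x$ by (1)). Hence $\Phi^2 = \bigoplus_x \Phi_x^2$, and $1$ is an isolated eigenvalue of $\Phi^2$ of finite multiplicity precisely when $\|\Phi_x^2\| = \|\Phi_x\|^2 < 1$ for all but finitely many $x$ — which, since $\Gamma$ is locally finite, is exactly the condition $\sup_{|x|>R}\|\Phi Q_x\| \le c < 1$ for some $R$. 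Under this hypothesis, $\ker(\Phi \mp 1) = \bigoplus_{|x| \le R} \ker(\Phi_x \mp 1)$ (the far blocks contribute nothing since $\pm 1 \notin \sigma(\Phi_x)$), so
\begin{align*}
    \text{ind}(\Phi) = \sum_{|x| \le R} \Big( \dim\ker(\Phi_x - 1) - \dim\ker(\Phi_x + 1) \Big) = \sum_{|x| \le R} \text{sgn-trace}(\Phi_x).
\end{align*}
It remains to evaluate each signed dimension. By (2), $\Phi_x = \mathcal{U}^* P \widehat{Q}_x \mathcal{U}|_{\mathcal{I}_x} - P Q_x|_{\mathcal{I}_x}$ is a difference of two orthogonal projections restricted to $\mathcal{I}_x$: indeed $P\widehat{Q}_x$ is an orthogonal projection (since $P$ and $\widehat{Q}_x$ commute) and so is its unitary conjugate $\mathcal{U}^* P \widehat{Q}_x \mathcal{U}$ when restricted appropriately, and $PQ_x$ likewise. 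For a difference $A - B$ of two orthogonal projections on a finite-dimensional space one has the elementary identity $\dim\ker(A-B-1) - \dim\ker(A-B+1) = \text{rank}(A) - \text{rank}(B)$, because the $\pm 1$ eigenspaces of $A - B$ are exactly $\text{Ran}(A)\cap\ker(B)$ and $\ker(A)\cap\text{Ran}(B)$, while the complementary eigenvalues pair up symmetrically. Applying this with $A = \mathcal{U}^* P \widehat{Q}_x \mathcal{U}$ and $B = P Q_x$ (restricted to $\mathcal{I}_x$), and using that conjugation by the unitary $\mathcal{U}$ preserves rank so $\text{rank}(A) = \dim\text{Ran}(P\widehat{Q}_x)$, yields exactly \eqref{eq: sum dim difference}.

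\textbf{Main obstacle.} The genuinely model-specific input — and the step I expect to be most delicate — is the pair of localization identities $\mathcal{U} Q_x = \widehat{Q}_x \mathcal{U} Q_x$ and $\widehat{Q}_x \mathcal{U}(1-Q_x) = 0$, which encode that in the scattering picture the walk factorizes as a local (coin) scattering at $x$ mapping $\mathcal{I}_x$ onto $\mathcal{O}_x$ followed by a pure permutation of edge centers along links. Verifying these cleanly requires unwinding the definition of $\mathcal{U} = \mathcal{V}^{-1} U \mathcal{V}$ together with the identification in Figure \ref{fig: correspondence edges coin state} of $\mathcal{I}_x$ (and $\mathcal{O}_x$) with the coin space, and checking that the shift $S$ indeed acts as a bijection $\mathcal{R}\to\mathcal{R}$ respecting the $\mathcal{O}_x / \mathcal{I}_{x'}$ structure. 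Everything else is linear algebra of projections on $\C^3$ plus the standard block-diagonalization argument, and the hypothesis in (3) is precisely what promotes this finite-block picture to a well-defined index.
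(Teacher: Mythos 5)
Your argument is correct and rests on exactly the two identities the paper names as the basis of the proof it imports from \cite{ABJ:2020} and \cite{AschMouneime:2019} — namely $\mathcal{U} Q_x = \widehat{Q}_x \mathcal{U}$ (which you split into $\mathcal{U}Q_x = \widehat{Q}_x \mathcal{U} Q_x$ and $\widehat{Q}_x \mathcal{U}(1-Q_x)=0$) and $[P,Q_x]=0$ — followed by the block decomposition $\Phi=\bigoplus_x \Phi Q_x$ and the standard Avron--Seiler--Simon count for a difference of projections, so this is essentially the same route. The one loose phrase is equating isolation of $1$ in $\sigma(\Phi^2)$ with ``$\|\Phi Q_x\|<1$ for all but finitely many $x$'': the correct equivalent is the uniform bound $\sup_{|x|>R}\|\Phi Q_x\|\le c<1$, which is what you actually use in both directions, so the slip is purely verbal.
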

Proposition \ref{propo: counting dimensions} was already shown for different models in \cite{ABJ:2020} (Proposition 4.2) and \cite{AschMouneime:2019} (Proposition 2.2). Their proof can be applied to the presented SQW on the hexagonal lattice without any significant changes and is based on the identities $\mathcal{U} Q_x = \widehat{Q}_x \mathcal{U}$ and $[P, Q_x] = 0$.

Whenever \eqref{eq: sum dim difference} holds we can obtain the index by counting the amount of links that enter or leave $\mathcal{M}$ in some finite neighborhood of the origin. To easily count these links, we will define a path $\gamma$ that parametrizes the boundary of $\mathcal{M}$. Since the scattering graph $G=(\mathcal{R}, \mathcal{E})$ is not planar (see Figure \ref{fig: scattering graph}), we can not use its dual graph. Nevertheless, we can define $\gamma$ as some curve in $\R^2$ that connects the "faces" of $G$:

\begin{wrapfigure}{r}{0.35\textwidth}
    \centering
    \includegraphics[scale=1.3]{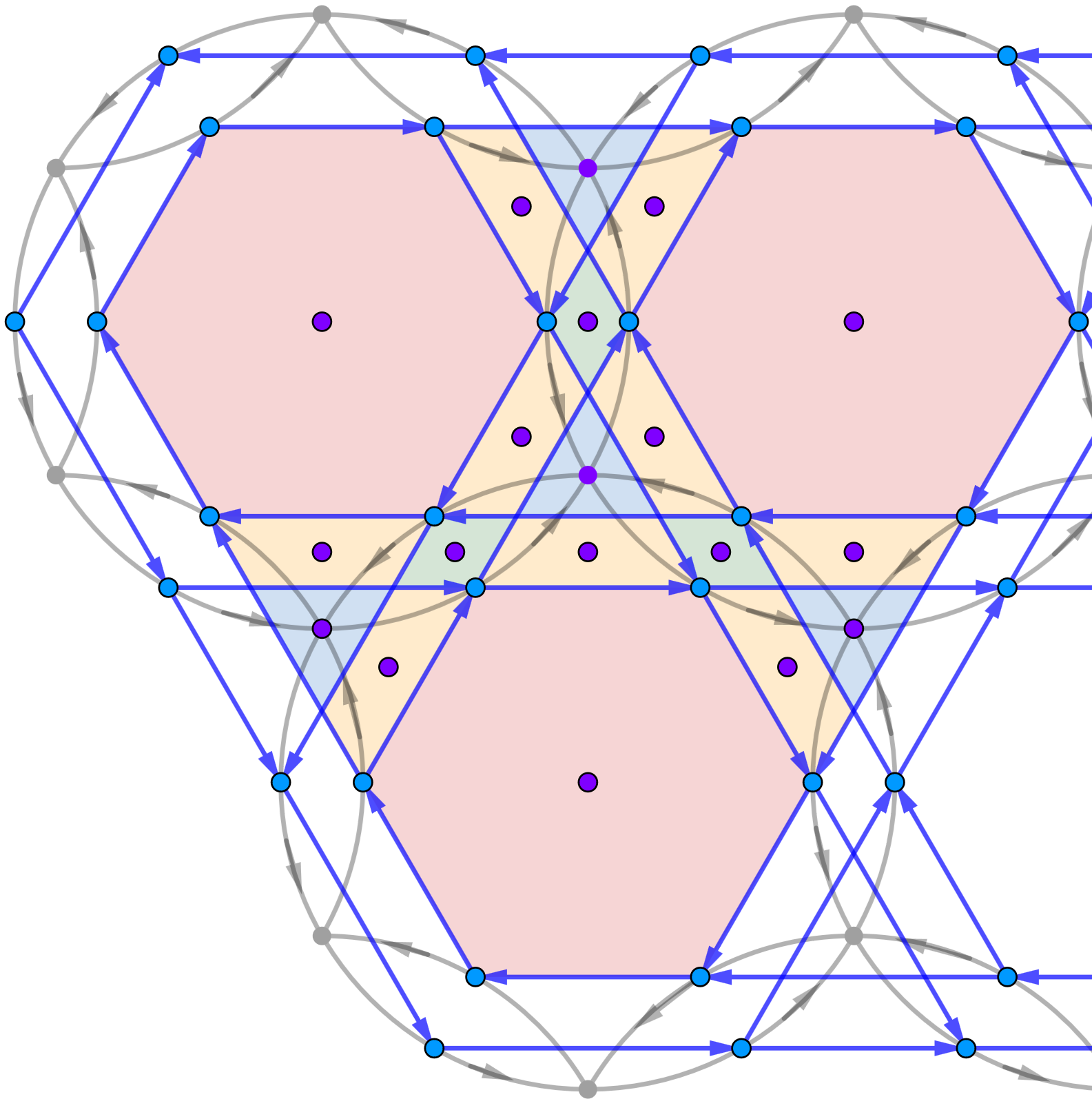}
    \caption{The "faces" of $G$ and their centers $\mathcal{R}^*$.}
    \label{fig: scattering graph faces}
\end{wrapfigure}
We omit the loops that arise from splitting the edges of the hexagonal lattice in two directions, since they do not affect the definition of $\mathcal{M}$, and identify the "faces" of $G$ as shown in Figure \ref{fig: scattering graph faces}. Each site of $\Gamma$ is the center of a triangle, which in turn borders on trapezoids. There is a parallelogram between two adjacent lattice sites, while we find a hexagon in each plaquette of the lattice. We stress that these are not actual faces in the sense of graph theory. However, it helps us clearly define the path $\gamma$ that parametrizes the boundary of $\mathcal{M}$.

Let $\mathcal{R}^*$ denote the set of centers of faces, then let $\gamma$ be a continuous path of piecewise straight segments connecting adjacent faces such that $\gamma(t) \in \mathcal{R}^*$ for all $t \in \Z$. A loop in $\gamma$ corresponds to a finite set of edges in $\mathcal{M}$, which leads to a finite-rank perturbation of $P$. Since the index is invariant under compact perturbations \eqref{eq: index invariance compact perturbations}, we can assume without loss of generality that $\gamma$ does not contain any loops. Note that crossing the corners of faces is not allowed, i.e. from a parallelogram we can only enter a trapezoid. Then $\gamma$ partitions the set $\mathcal{R}$ into the set $\mathcal{M}$ of edges that are on the left of $\gamma$ and the set $\mathcal{M}^C$ of edges on the right of $\gamma$, see Figure \ref{fig: possible path}. \\
\begin{figure} [h!]
    \centering
    \includegraphics[scale=1.2]{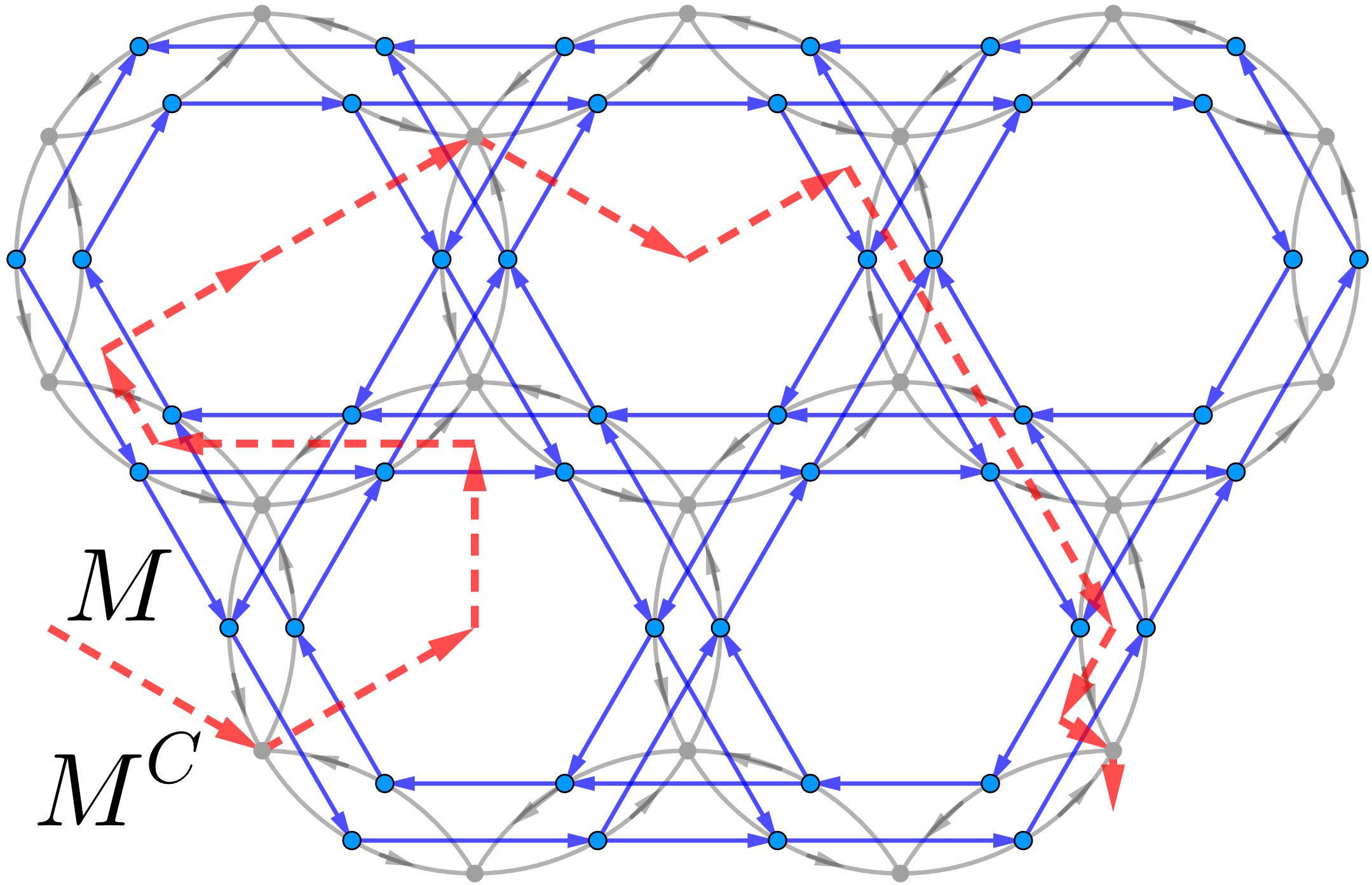}
    \caption{A path $\gamma$ partitioning $\mathcal{R}$ into $\mathcal{M}$ and $\mathcal{M}^C$.}
    \label{fig: possible path}
\end{figure}\\
Due to Proposition \ref{propo: counting dimensions} we consider all vertices $x \in \Gamma$ such that there exists a link between an incoming and an outgoing edge of $x$ that is bisected by $\gamma$:
\begin{align*}
    \Gamma_\gamma = \{ x \in \Gamma \, | \, P^\perp \mathcal{U} P Q_x \neq 0 \; \text{or} \; P \mathcal{U} P^\perp Q_x \neq 0 \}.
\end{align*}
Here $P^\perp = 1 - P$ denotes the projection onto $\mathcal{M}^C$. We note that $\gamma$ can be partitioned into a countable set of "steps" (see Figure \ref{fig: steps passing z}) in which $\gamma$ passes a vertex $x \in \Gamma_\gamma$. These steps must start and end in either a hexagon or a parallelogram. A path starting in a hexagon and finishing in a parallelogram is called an $hp$-step. Similarly, we define $hh$-, $pp$- and $ph$-steps. Since the direction of $\gamma$ corresponds to a sign change in \eqref{eq: sum dim difference}, a $ph$-step is equivalent to an $hp$-step. Note that there may be multiple paths from one face to another while passing a site $x \in \Gamma$. However, since they can be continuously deformed into another without crossing any of the adjacent edge centers, the resulting split of the adjacent edges of $x$ into $\mathcal{M}$ and $\mathcal{M}^C$ is the same. Thus, up to rotation and reflection, any path $\gamma$ can without loss of generality be partitioned into a series of 4 steps illustrated in Figure \ref{fig: steps passing z}.
\begin{figure}[h!]
    \centering
    \begin{minipage}{.25\textwidth}
        \centering
        \includegraphics[width=.9\linewidth]{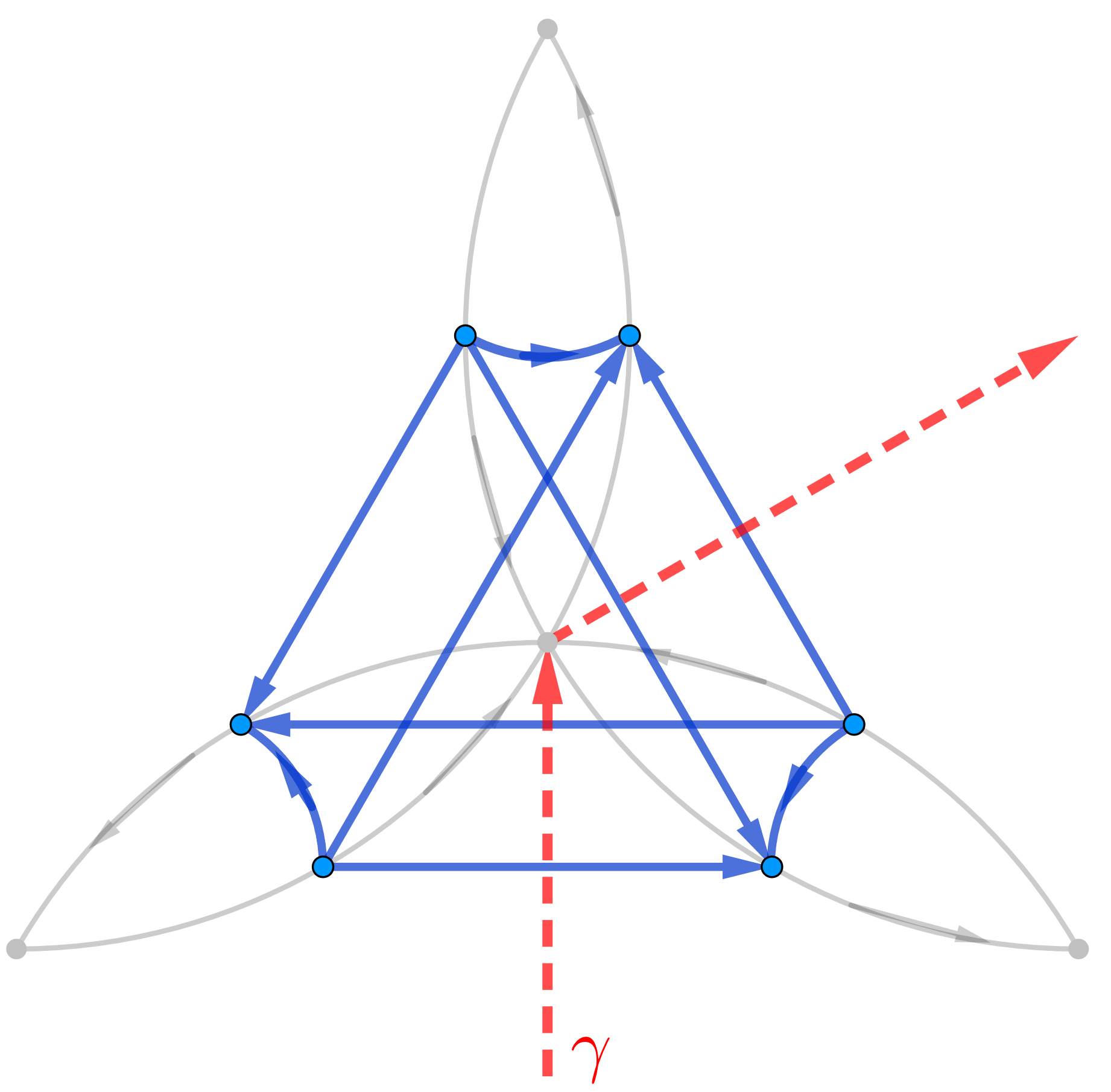}
    \end{minipage}%
    \begin{minipage}{.25\textwidth}
        \centering
        \includegraphics[width=.9\linewidth]{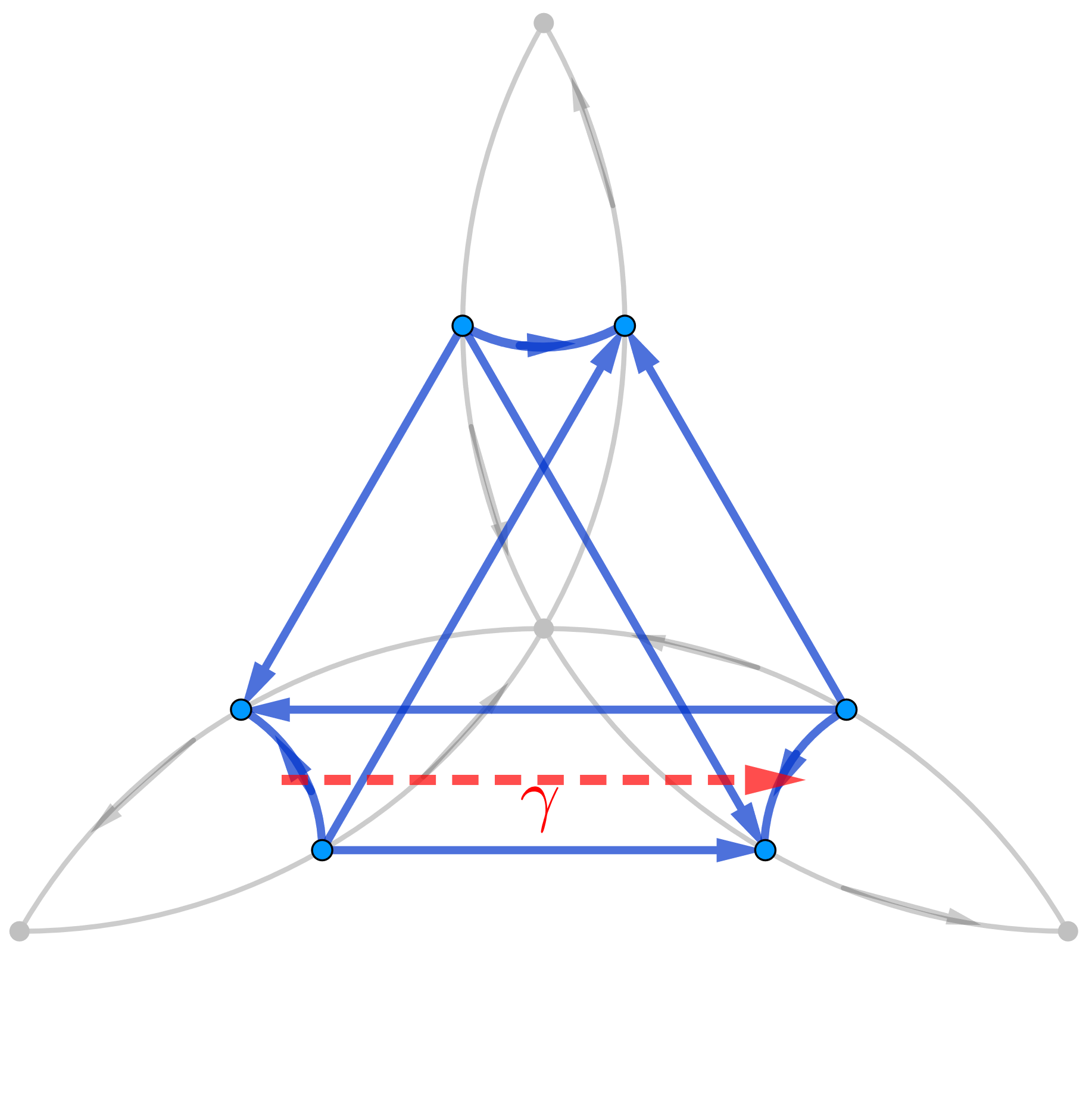}
    \end{minipage}%
    \begin{minipage}{.25\textwidth}
        \centering
        \includegraphics[width=.9\linewidth]{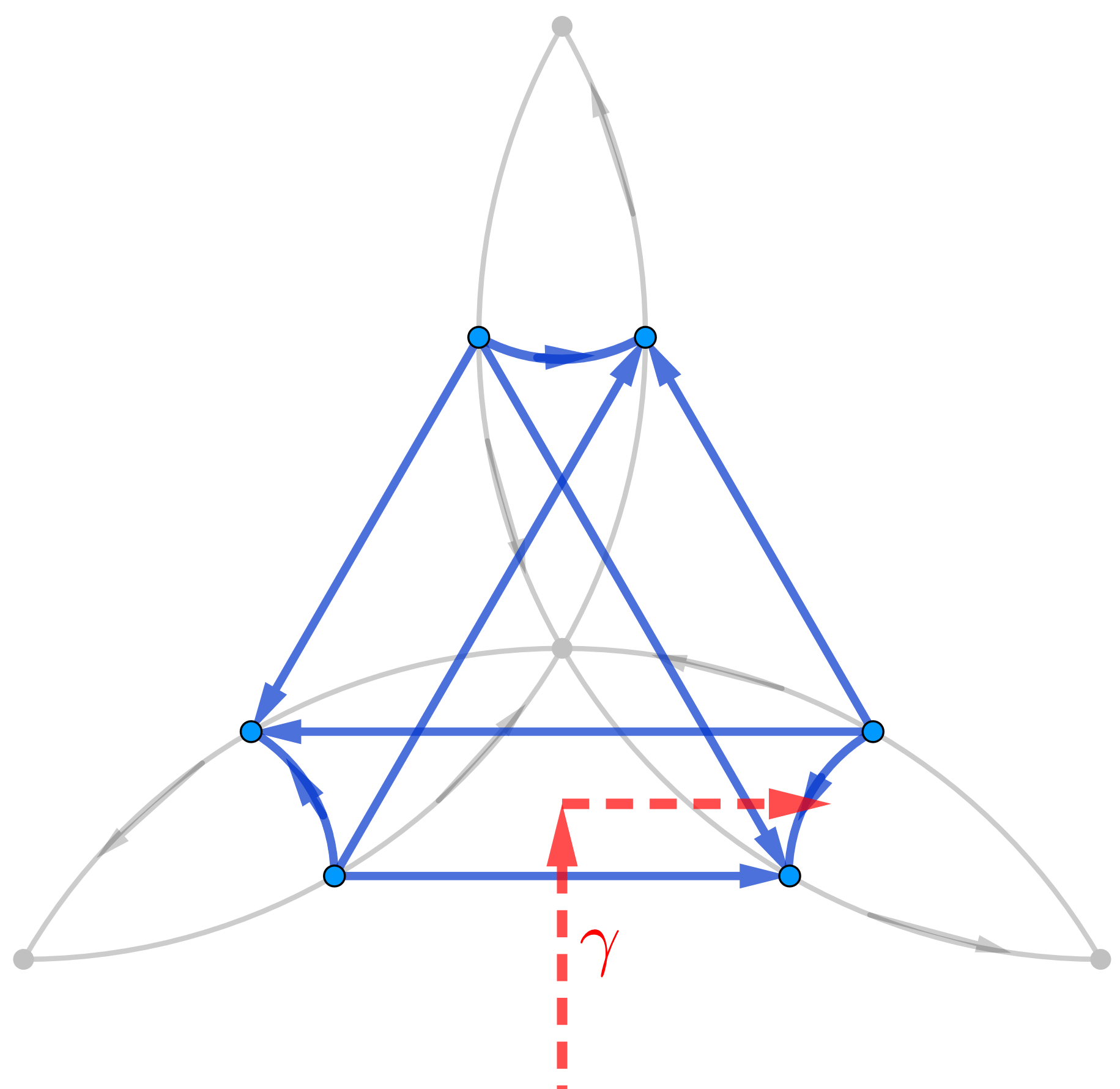}
    \end{minipage}%
    \begin{minipage}{.25\textwidth}
        \centering
        \includegraphics[width=.9\linewidth]{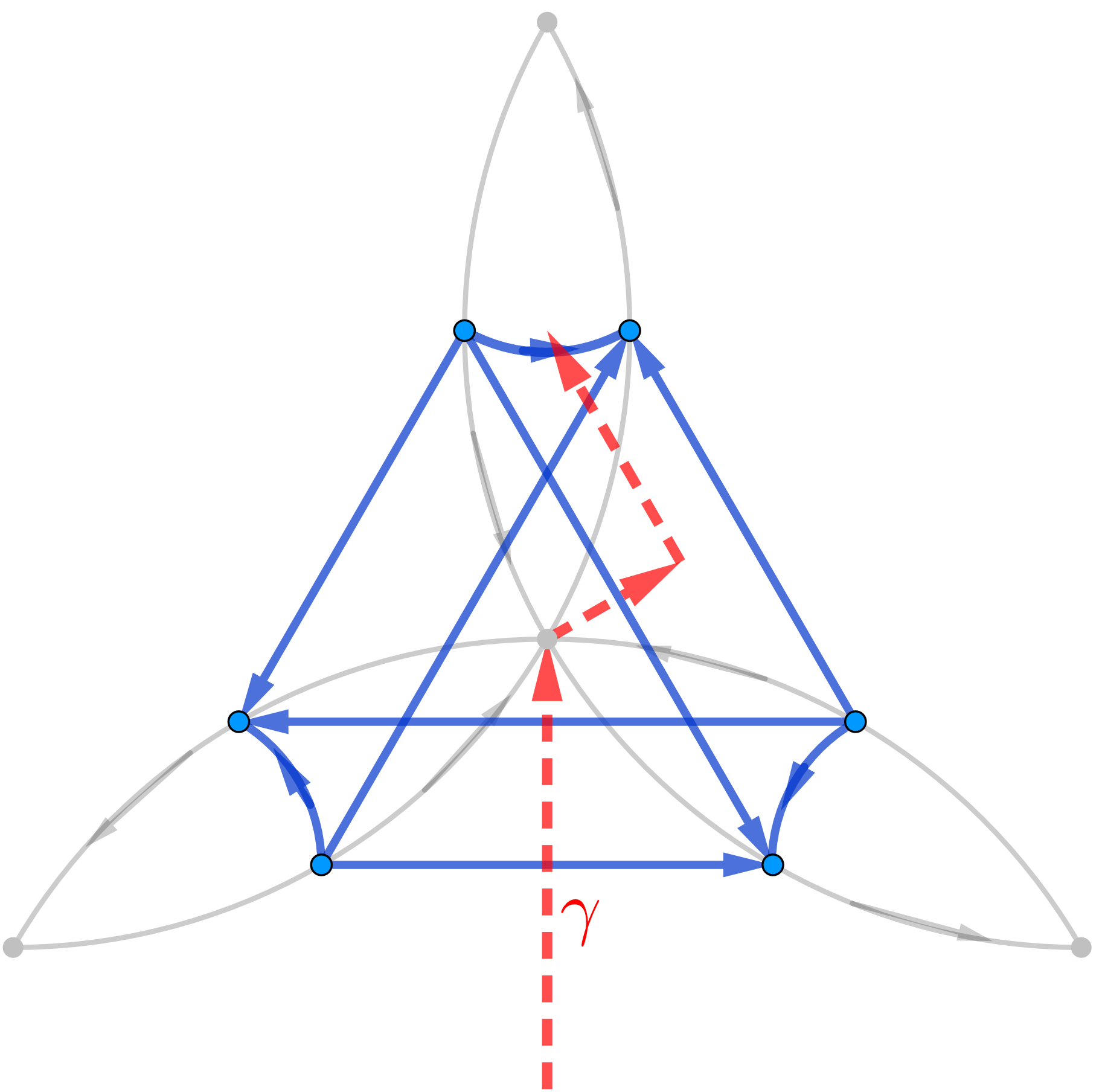}
    \end{minipage}
    \caption{All possible steps passing a lattice site $x \in \Gamma$, from left to right: $hh$, $pp$, and two $hp$ steps.}
    \label{fig: steps passing z}
\end{figure}

We can count the dimensions of the respective incoming and outgoing subspaces in all 3 cases: If $\gamma$ performs an $hh$- or $pp$-step, then $\text{dim} \big( \text{Ran} (P \widehat{Q}_x) \big)$ and $\text{dim} \big( \text{Ran} (P Q_x) \big)$ agree with values $1$ or $2$, depending on the direction of $\gamma$. In any case, the resulting difference is zero. In the case of an $hp$-step, we have $\text{dim} \big( \text{Ran} (P \widehat{Q}_x) \big) - \text{dim} \big( \text{Ran} (P Q_x) \big) = \pm 1$, depending on the direction of $\gamma$. We note that in both cases this corresponds to the difference of bisected links entering and leaving $\mathcal{M}$ divided by $3$. Thus, only an $hp$-step contributes to a non-zero index.

\begin{definition} \label{def: relevant path}
    A path $\gamma$ connecting adjacent faces of the scattering graph (see Figure \ref{fig: scattering graph faces}) with piecewise straight segments is called relevant if it is continuous, without loops and $\exists \, N \geq 0$ such that $\gamma$ performs only $hh$-steps for all $t \leq -N$ and only $pp$-steps for all $t \geq N$.
\end{definition}

We can prove the following:

\begin{lemma}
    Let $\gamma$ be a relevant path and $\mathcal{M}$ be the subset of $\mathcal{R}$ on the left of $\gamma$. Define $\Phi = \mathcal{U}^* P \mathcal{U} - P$, where $P$ is the projection onto $\mathcal{M}$ and suppose $\exists \, R > 0$ such that $\sup_{x \in \Gamma_\gamma, |x| > R} \|\Phi Q_x \| < 1$. Then $\text{ind}(\Phi) = \pm 1$. 
\end{lemma}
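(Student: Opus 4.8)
The plan is to derive a finite formula for $\text{ind}(\Phi)$ from Proposition \ref{propo: counting dimensions}(3), identify which of its terms are nonzero using the classification of steps established above, and then evaluate the resulting net count by a telescoping argument along $\gamma$.

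The first step is to check that $\Phi Q_x = 0$ for every $x \notin \Gamma_\gamma$. By definition $x \notin \Gamma_\gamma$ means $P^\perp \mathcal{U} P Q_x = 0$ and $P \mathcal{U} P^\perp Q_x = 0$; since $\mathcal{U} Q_x = \widehat{Q}_x \mathcal{U}$, the restriction $\mathcal{U}|_{\mathcal{I}_x} : \mathcal{I}_x \to \mathcal{O}_x$ is unitary, and the two identities say it maps $\mathcal{M} \cap \mathcal{I}_x$ into $\mathcal{M} \cap \mathcal{O}_x$ and $\mathcal{M}^C \cap \mathcal{I}_x$ into $\mathcal{M}^C \cap \mathcal{O}_x$, both inclusions being onto by a dimension count in the three-dimensional space $\mathcal{I}_x$. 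Inserting this into $\Phi Q_x = \mathcal{U}^* P \widehat{Q}_x \mathcal{U} - P Q_x$ from Proposition \ref{propo: counting dimensions}(2) yields $\Phi Q_x = 0$, and in particular $\dim \text{Ran}(P \widehat{Q}_x) = \dim \text{Ran}(P Q_x)$ there. Consequently $\sup_{|x| > R} \|\Phi Q_x\| = \sup_{x \in \Gamma_\gamma,\, |x| > R} \|\Phi Q_x\| < 1$, so Proposition \ref{propo: counting dimensions}(3) applies: $\text{ind}(\Phi)$ is well-defined and $\text{ind}(\Phi) = \sum_{|x| \le R} \big( \dim \text{Ran}(P \widehat{Q}_x) - \dim \text{Ran}(P Q_x) \big)$.

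Next I would isolate the nonzero terms of this sum. A term vanishes unless $x \in \Gamma_\gamma$, and for such $x$ the computation preceding Definition \ref{def: relevant path} shows the term equals $0$ for an $hh$- or $pp$-step of $\gamma$ and $\pm 1$ for an $hp$-step. The observation linking this to the hypothesis is that whenever $\dim \text{Ran}(P \widehat{Q}_x) \neq \dim \text{Ran}(P Q_x)$, the operator $\Phi Q_x$ is, on the three-dimensional space $\mathcal{I}_x$, a difference of two orthogonal projections of unequal rank, so a dimension count in $\C^3$ shows the range of one meets the kernel of the other and hence $\|\Phi Q_x\| = 1$. Thus every vertex at which $\gamma$ performs an $hp$- or $ph$-step satisfies $|x| \le R$, and therefore, up to the overall orientation-dependent sign, $\text{ind}(\Phi)$ equals the number of $hp$-steps of $\gamma$ minus the number of $ph$-steps, taken over the whole (infinite) path.

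Finally I would evaluate this net count by telescoping. As $\gamma$ is relevant it is loopless, so its steps visit a well-defined bi-infinite sequence of "checkpoints" $F_i$, each a hexagon or a parallelogram, with the step $F_i \to F_{i+1}$ of type $hh$, $pp$, $hp$ or $ph$ accordingly. Setting $\phi(F) = 0$ for a hexagon and $\phi(F) = 1$ for a parallelogram, the signed contribution of the step $F_i \to F_{i+1}$ to the sum is exactly $\phi(F_{i+1}) - \phi(F_i)$. Since $\gamma$ is relevant there is $N$ with $F_i$ a hexagon for all $i \le -N$ and a parallelogram for all $i \ge N$, so only finitely many terms are nonzero and the sum telescopes to $\lim_{i \to +\infty} \phi(F_i) - \lim_{i \to -\infty} \phi(F_i) = 1 - 0 = 1$; reversing the orientation of $\gamma$ gives $-1$, hence $\text{ind}(\Phi) = \pm 1$. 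I expect the main obstacle to lie in the bookkeeping of the last two paragraphs — relating the steps of the infinite path $\gamma$ to the finitely many nonzero terms of \eqref{eq: sum dim difference} and keeping track of the orientation sign — rather than in any single hard estimate; the remark that a mismatch in dimensions forces $\|\Phi Q_x\| = 1$ is the technical point that ties the growth hypothesis to the finiteness of the relevant region.
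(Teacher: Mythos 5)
Your proof is correct, and it reaches the conclusion by a route that differs from the paper's in one substantive respect. The paper first treats the case $N=1$, where $\gamma$ performs a single $hp$-step and \eqref{eq: sum dim difference} immediately gives $\pm 1$, and then reduces the general case to it by deforming $\gamma$ into a path with exactly one $hp$-step and invoking the invariance of the index under the resulting finite-rank change of $P$ via \eqref{eq: index invariance compact perturbations}. You instead evaluate \eqref{eq: sum dim difference} for the original path directly, by telescoping the potential $\phi$ (hexagon $=0$, parallelogram $=1$) along the bi-infinite sequence of faces. Your route has the merit of making explicit two points the paper leaves implicit: that $\Phi Q_x=0$ for $x\notin\Gamma_\gamma$, which is exactly what is needed to pass from the lemma's hypothesis (a supremum over $\Gamma_\gamma$) to the hypothesis of Proposition \ref{propo: counting dimensions}(3) (a supremum over all $x$); and that a mismatch $\dim\mathrm{Ran}(P\widehat{Q}_x)\neq\dim\mathrm{Ran}(PQ_x)$ forces $\|\Phi Q_x\|=1$, so that all $hp$- and $ph$-steps occur within $|x|\le R$ and the telescoping sum is genuinely finite. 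The one point you should nail down is the identity ``signed contribution of the step $F_i\to F_{i+1}$ equals $\phi(F_{i+1})-\phi(F_i)$'': this requires that \emph{every} $h\to p$ step contributes $+1$ and every $p\to h$ step contributes $-1$ (under one global convention), uniformly over the two geometrically distinct $hp$-configurations of Figure \ref{fig: steps passing z}. The paper asserts the value $\pm1$ ``depending on the direction of $\gamma$'' but does not verify this sign consistency, and the paper's deformation argument is robust precisely because it never needs it (only one $hp$-step survives the deformation), whereas your telescoping does; a short check of the two $hp$-configurations against the orientation convention for $\mathcal{M}$ would close this.
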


\begin{proof}
    Suppose that $N=1$, i.e. $\gamma$ only performs $hh$-steps on $(-\infty, 1]$ and only $pp$-steps on $[1, \infty)$. The index of $\Phi$ is well-defined by Proposition \ref{propo: counting dimensions} and can be obtained by counting the dimensions of the respective incoming and outgoing subspaces for all $x \in \Gamma_\gamma$ in a finite neighborhood of the origin. Since $\gamma$ performs only a single $hp$-step, we have $\text{ind}(\Phi)= \pm 1$. 

    If $N > 1$, let $N_- \leq -N$ denote the biggest $t \leq -N$ such that $\gamma(t)$ is the center of a hexagon, and similarly let $N_+$ be the smallest $t \geq N$ such that $\gamma(t)$ is the center of a parallelogram. We connect $\gamma(N_-)$ and $\gamma(N_+)$ by a path $\tilde{\gamma}$ that performs only $hh$-steps until $\tilde{\gamma}(N_+ - 1)$, which is in the center of a hexagon. The obtained path $\Tilde{\gamma}$ thus performs only $hh$-steps for $t \leq N_+ -1$ and only $pp$-steps for $t \geq N_+$. Using the translation $t \mapsto t - N_+$, we obtain a relevant path with $N = 1$. Let $\tilde{P}$ be the projection on the subset of $\mathcal{R}$ on the left of $\tilde{\gamma}$. Since the difference $P-\tilde{P}$ is of finite rank, we have $\text{ind}(\Phi) = \text{ind}(\tilde{\Phi}) = \pm 1$ due to \eqref{eq: index invariance compact perturbations}.  
\end{proof}

\subsection{Relation to the coefficients of the coin matrix}
A direct calculation shows
\begin{align} \label{eq: Phi squared}
    \Phi^2 = P \mathcal{U}^* P^\perp \mathcal{U} P + P^\perp \mathcal{U}^* P \mathcal{U} P^\perp.
\end{align}
Using the correspondence between the incoming subspace and the coin space as illustrated in Figure \ref{fig: correspondence edges coin state}, we can assign the correct weights to the links in the scattering graph, see Figure \ref{fig: PhiQ H1}.
\begin{figure} [h!]
    \centering
    \includegraphics[scale=4]{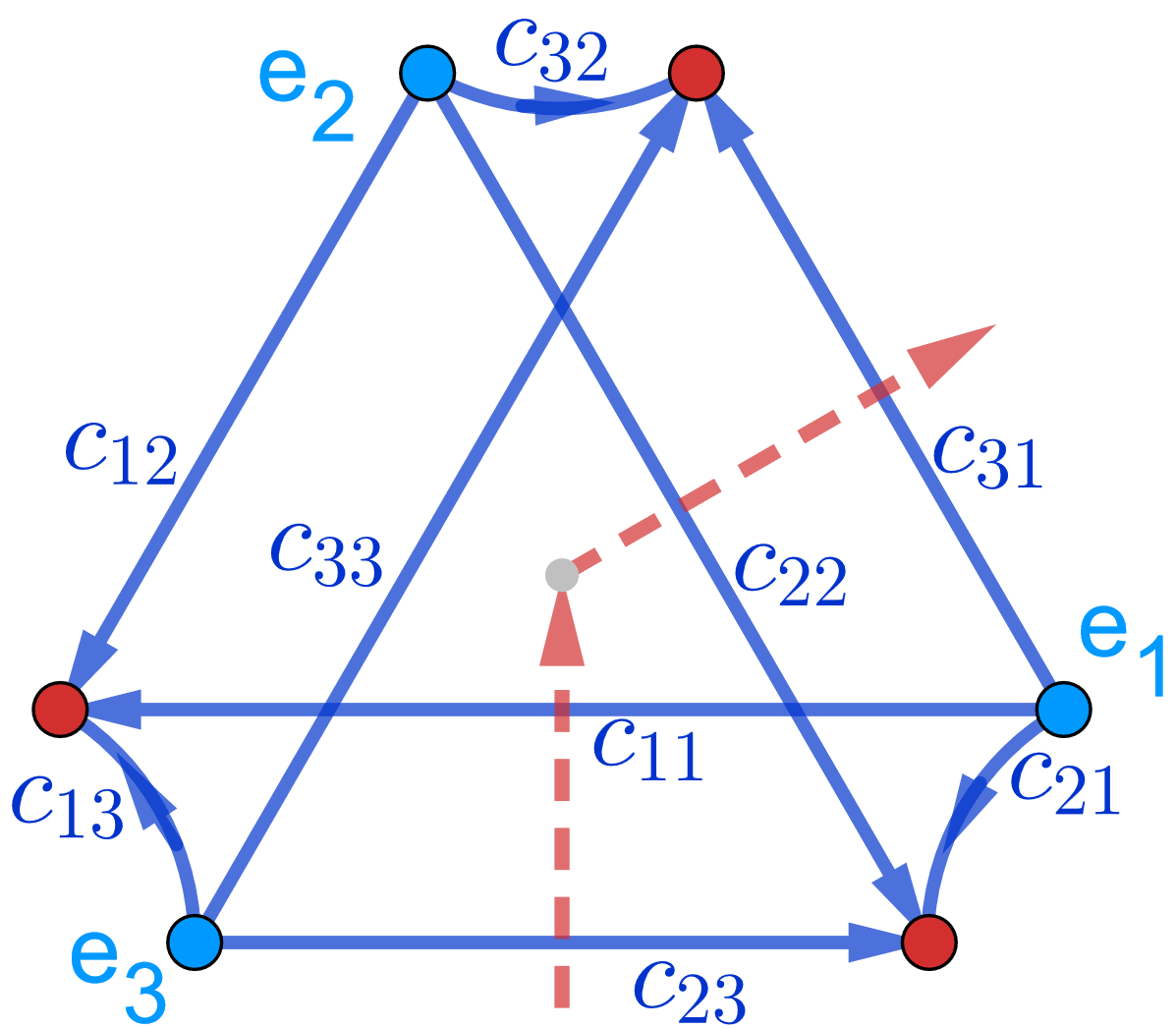}
    \caption{The weighted scattering graph, together with a path $\gamma$ passing $x \in \Gamma_A$ in a $hh$-step. The weights $c_{ij}$ are the coefficients of the coin matrix $C_x$ at site $x$.}
    \label{fig: PhiQ H1}
\end{figure}

Applying \eqref{eq: Phi squared}, we can express $\Phi^2 Q_x$ as a $3 \times 3$ matrix. If the orthonormal basis of $\mathcal{I}_x$ is labeled and $\gamma$ passes $x \in \Gamma_A$ as illustrated in Figure \ref{fig: PhiQ H1}, we can compute the matrix elements of $\Phi^2 Q_x$:
\begin{align*}
    \bra{e_3} \Phi^2 Q_x e_2 \rangle = \bra{e_1} P \mathcal{U}^* P^\perp \mathcal{U} P e_2 \rangle = \bra{e_3} \mathcal{U}^* c_{22} \tilde{e}_2 \rangle = \bra{c_{23} \tilde{e}_2 } c_{22} \tilde{e}_2 \rangle = \overline{c_{23}} c_{22}.
\end{align*}
Here, we let $\tilde{e}_j$ denote an element of the ONB of the outgoing subspace $\mathcal{O}_x$, see Figure \ref{fig: correspondence edges coin state}. Similar computations yield:
\begin{align*}
    \Phi^2 Q_x = \begin{pmatrix}
        |c_{11}|^2 + |c_{31}|^2 & 0 & 0 \\
        0 & |c_{22}|^2 & c_{23} \overline{c_{22}} \\
        0 & c_{22} \overline{c_{23}} & |c_{23}|^2
    \end{pmatrix}.
\end{align*}
The eigenvalues of this matrix are $|c_{11}|^2 + |c_{31}|^2$, $|c_{22}|^2 + |c_{23}|^2$ and $0$. Since the coin matrix is unitary, the first two eigenvalues both simplify to $1-|c_{21}|^2$. We have
\begin{align*}
    \| \Phi Q_x \|^2 = \sup_{\substack{ \varphi \in \text{Ran}(Q_x) \\ \| \varphi\| = 1}} \bra{\varphi} \Phi^* \Phi \, \varphi \rangle = \max \big( | \lambda | \; \text{s.t. } \, \lambda \in \sigma(\Phi^2 \, Q_x) \big).
\end{align*}
Thus, we conclude that $\| \Phi Q_x \| = \sqrt{1- |c_{21}|^2}$ for the case illustrated in Figure \ref{fig: PhiQ H1}. We can argue similarly for the other possible $hh$-steps: Ordering the steps as shown in Figure \ref{fig: hh steps passing z}, we obtain for $\| \Phi Q_x \|$ the values $\sqrt{1- |c_{13}|^2}$, $\sqrt{1- |c_{32}|^2}$, $\sqrt{1- |c_{23}|^2}$, $\sqrt{1- |c_{31}|^2}$ and $\sqrt{1- |c_{12}|^2}$.\\
\begin{figure}[h!]
    \centering
    \begin{minipage}{.2\textwidth}
        \centering
        \includegraphics[width=.9\linewidth]{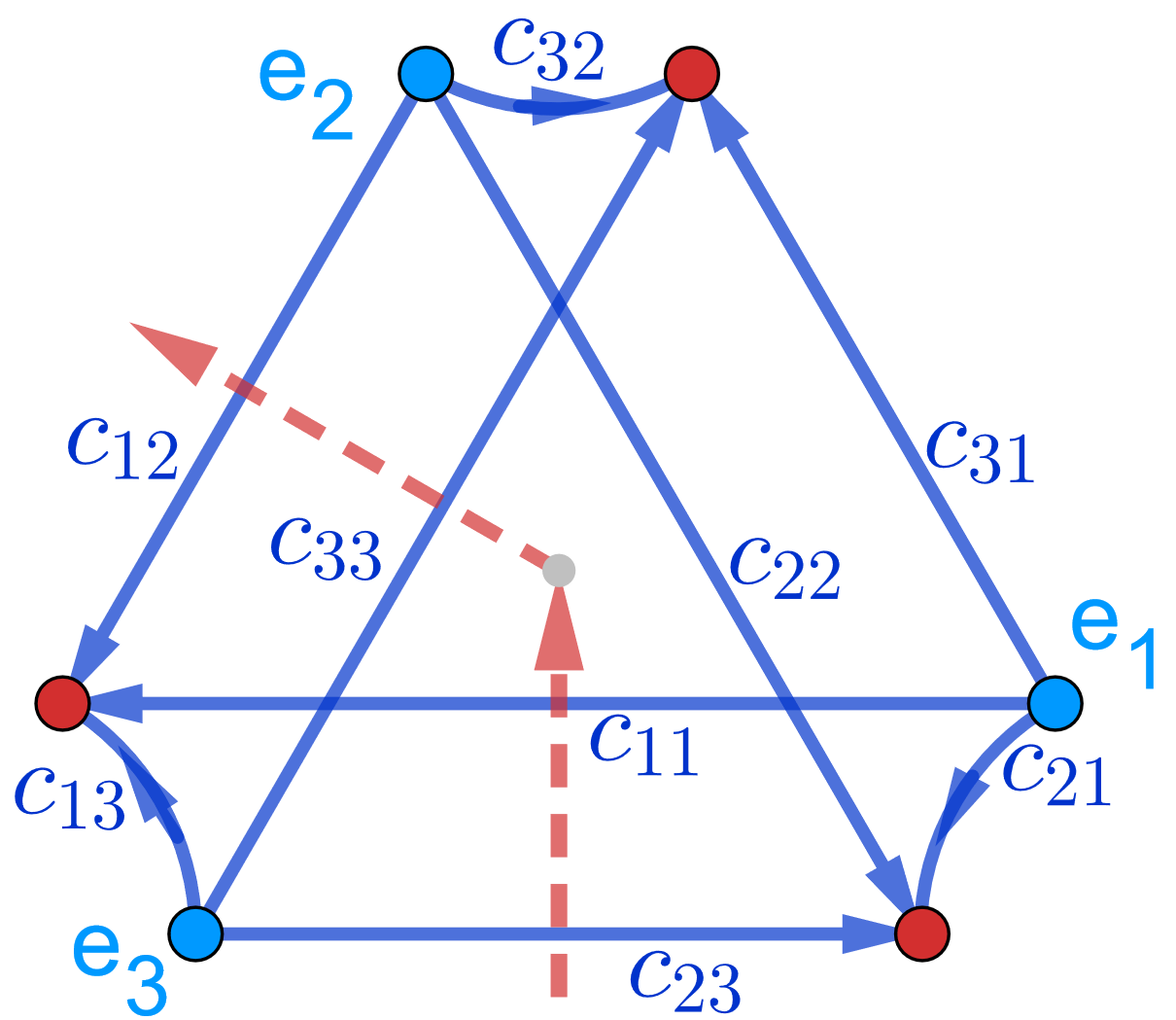}
    \end{minipage}%
    \begin{minipage}{.2\textwidth}
        \centering
        \includegraphics[width=.9\linewidth]{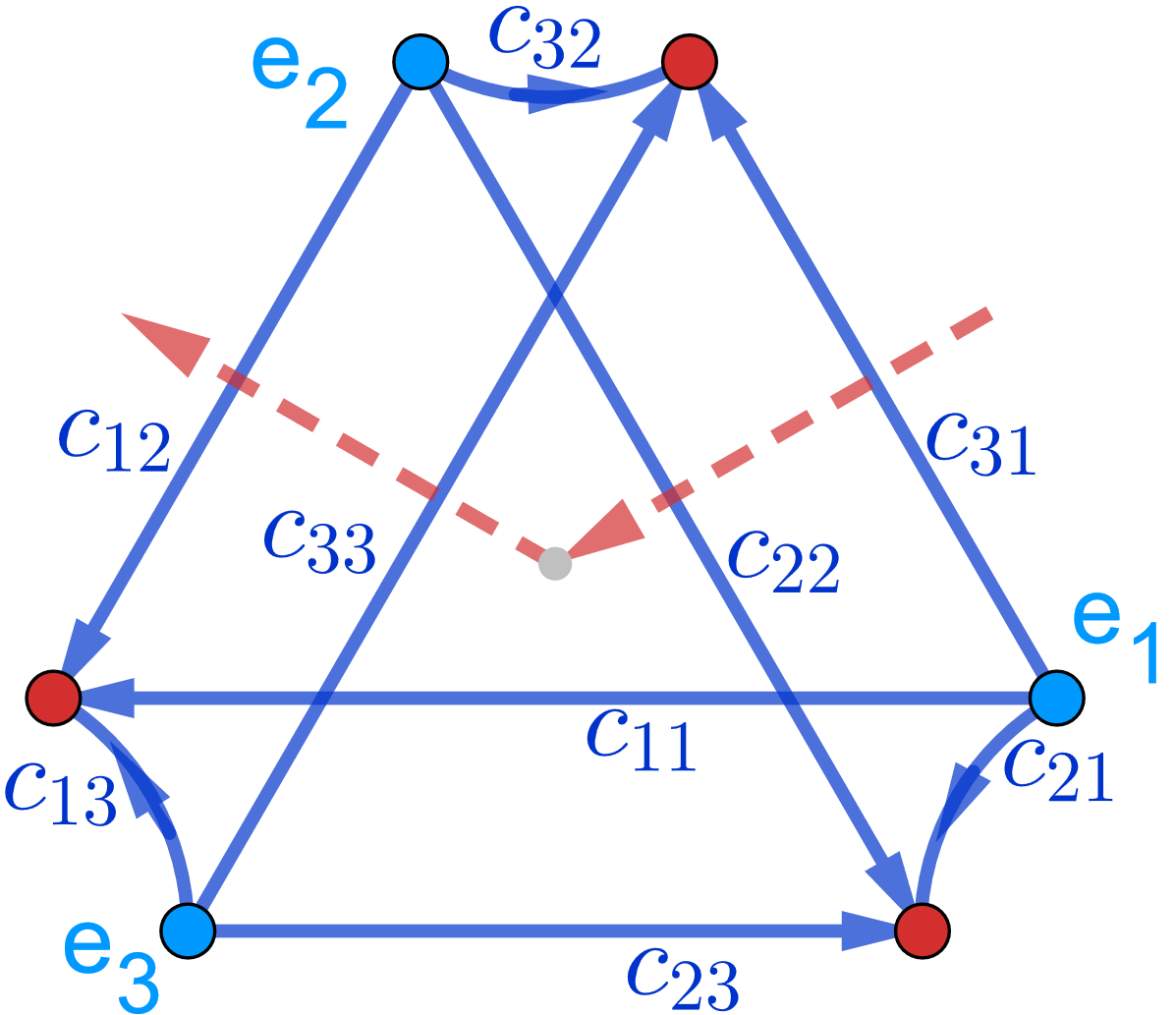}
    \end{minipage}%
    \begin{minipage}{.2\textwidth}
        \centering
        \includegraphics[width=.9\linewidth]{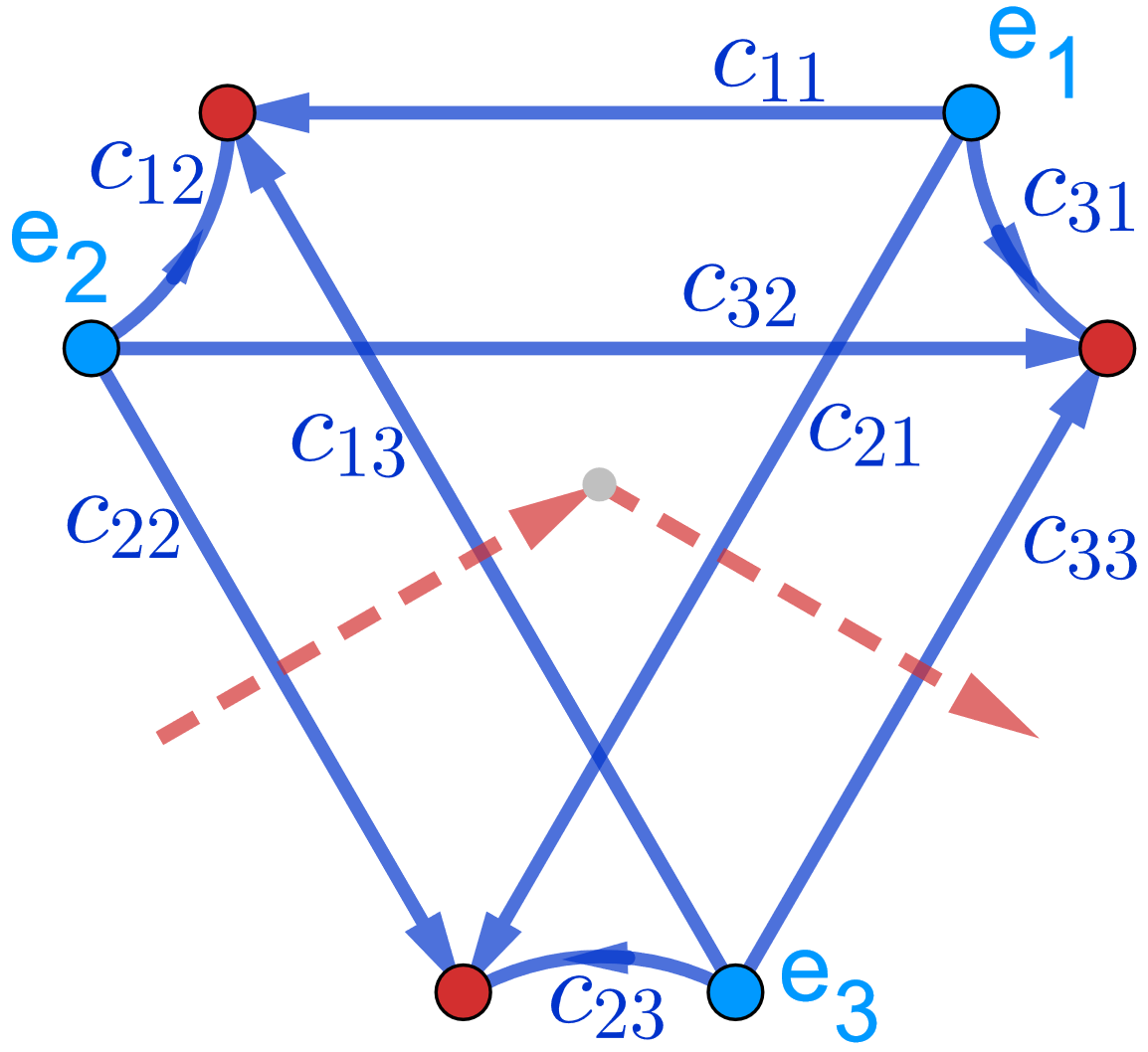}
    \end{minipage}%
    \begin{minipage}{.2\textwidth}
        \centering
        \includegraphics[width=.9\linewidth]{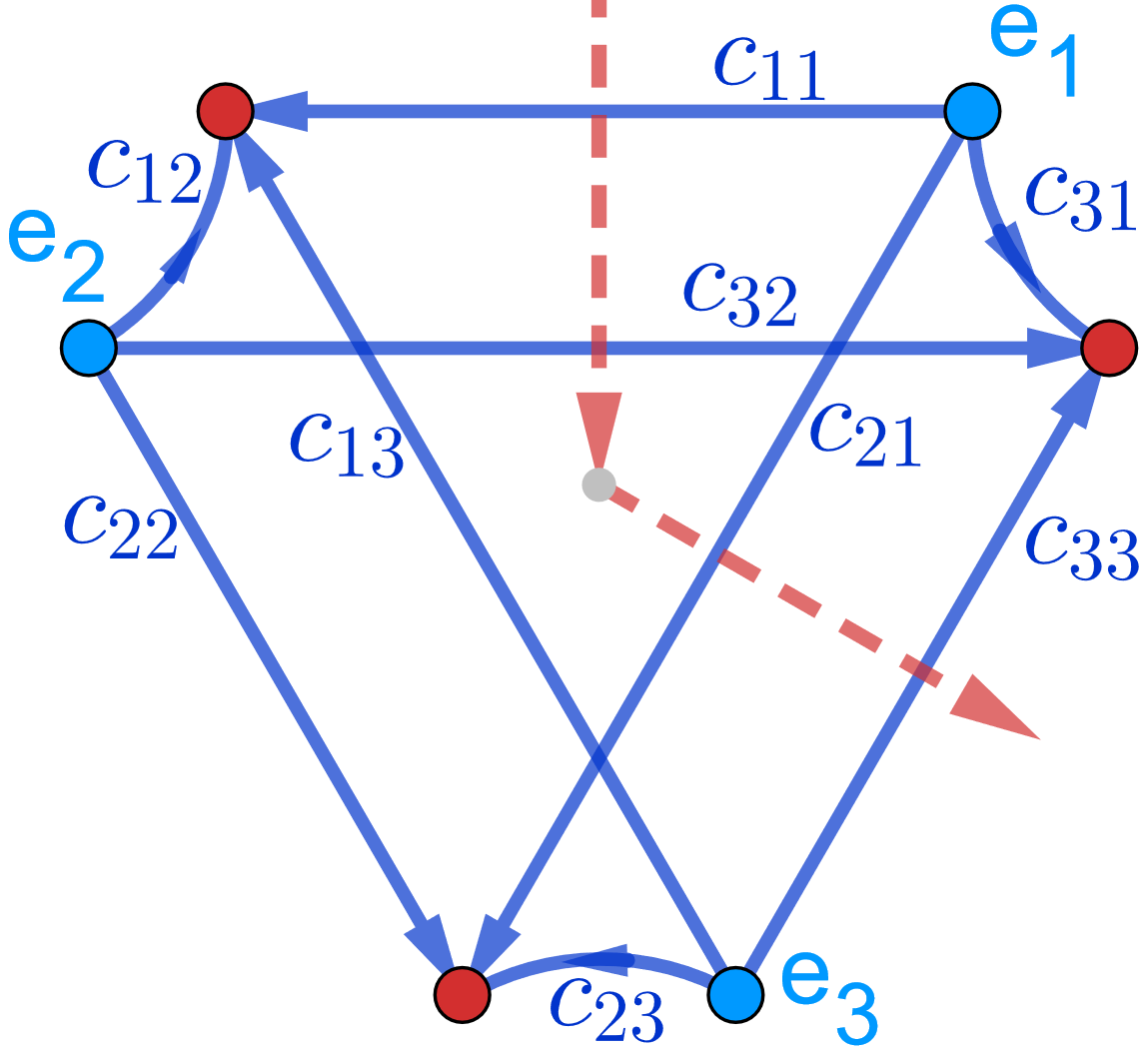}
    \end{minipage}%
    \begin{minipage}{.2\textwidth}
        \centering
        \includegraphics[width=.9\linewidth]{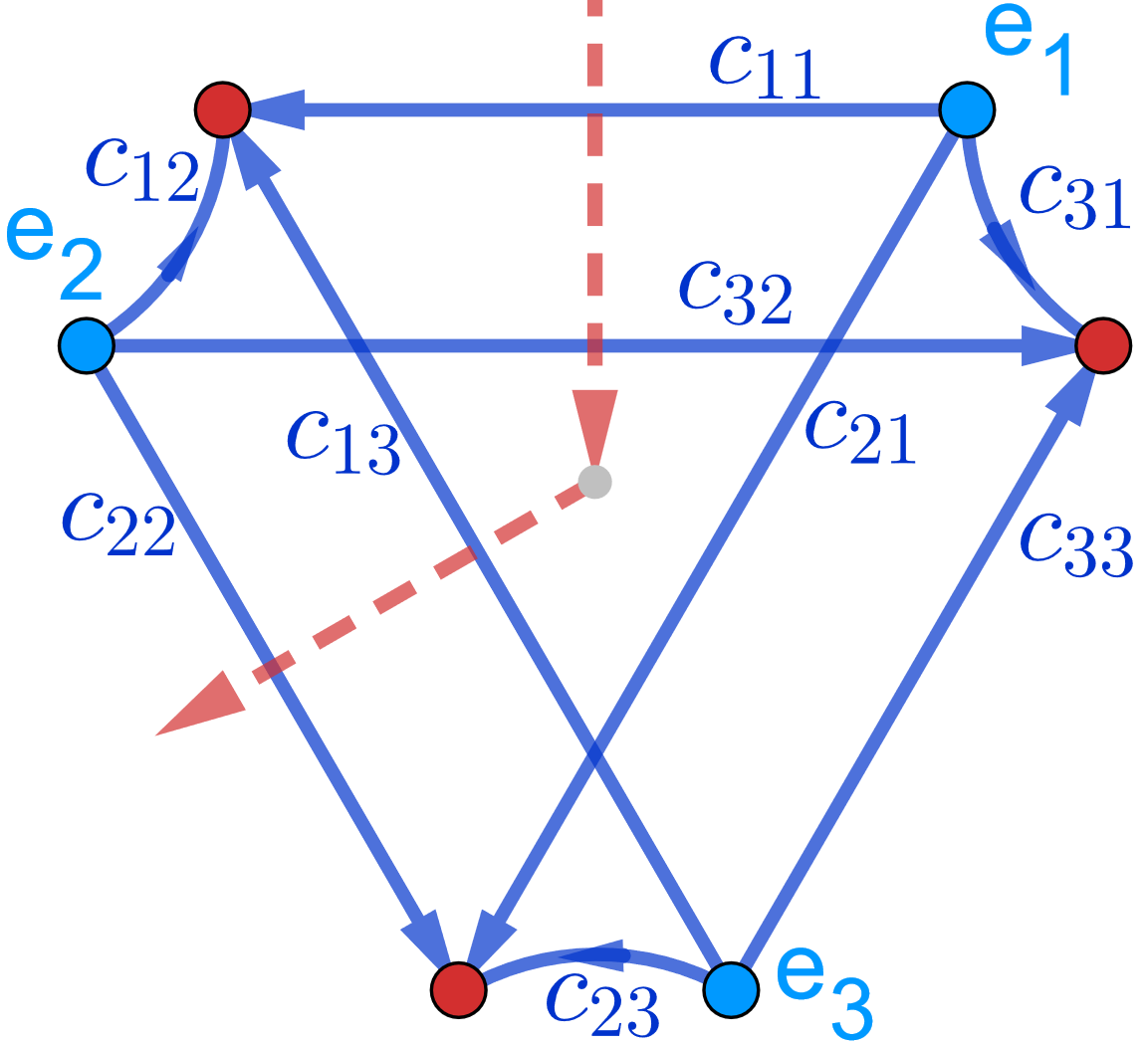}
    \end{minipage}
    \caption{Possible $hh$-steps passing a vertex $x \in \Gamma$.}
    \label{fig: hh steps passing z}
\end{figure}\\
By labeling all possible $pp$-steps from $\gamma_1$ to $\gamma_6$ as shown in Figure \ref{fig: pp steps passing z}, we obtain for the norm of $\Phi Q_x$ the values $\sqrt{1 - |c_{23}|^2}$, $\sqrt{1 - |c_{31}|^2}$, $\sqrt{1 - |c_{21}|^2}$, $\sqrt{1 - |c_{22}|^2}$, $\sqrt{1 - |c_{33}|^2}$ and $\sqrt{1 - |c_{11}|^2}$.

\begin{figure}[h!]
    \centering
    \begin{minipage}{.3\textwidth}
        \centering
        \includegraphics[width=.9\linewidth]{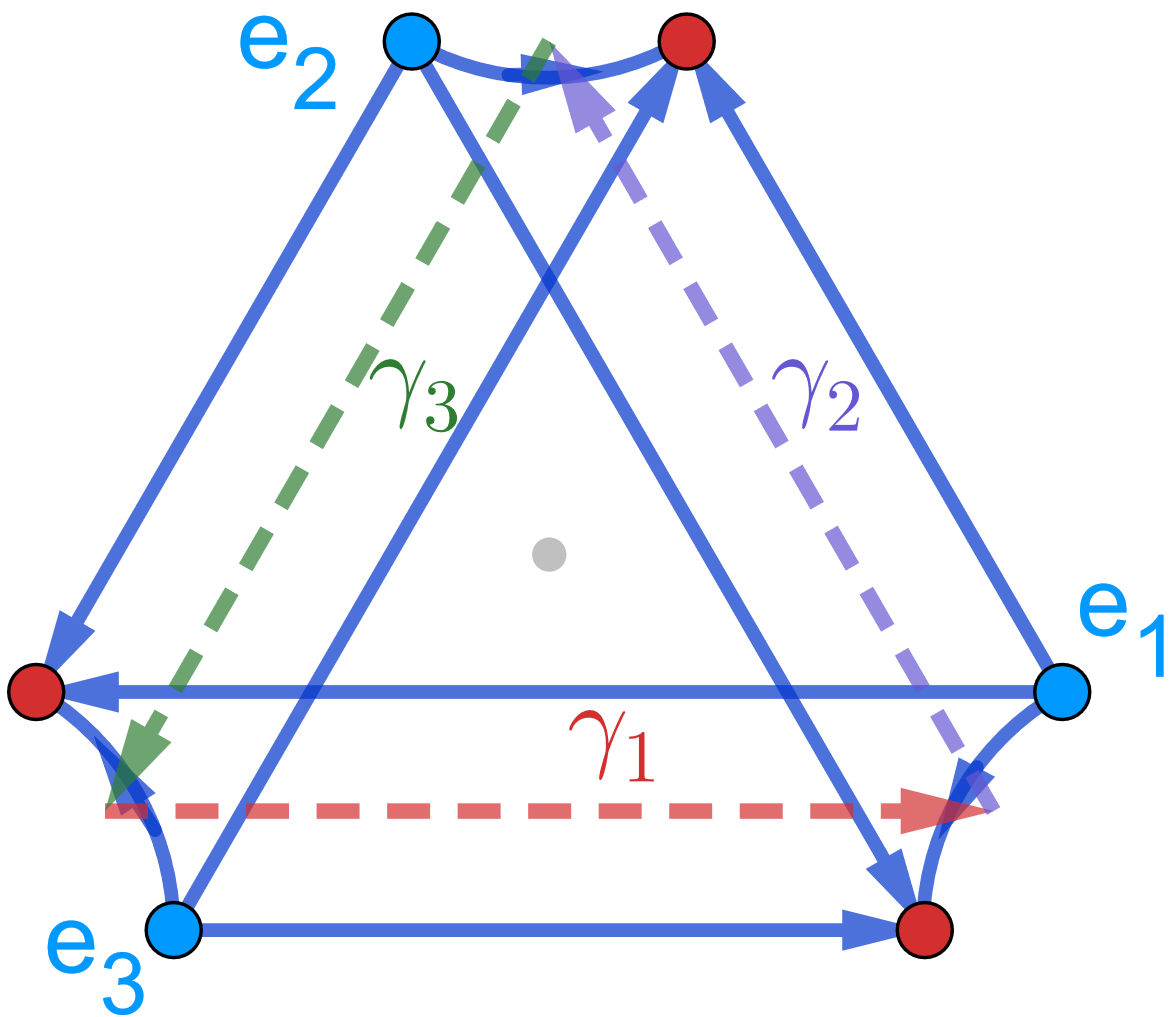}
    \end{minipage}%
    \begin{minipage}{.3\textwidth}
        \centering
        \includegraphics[width=.9\linewidth]{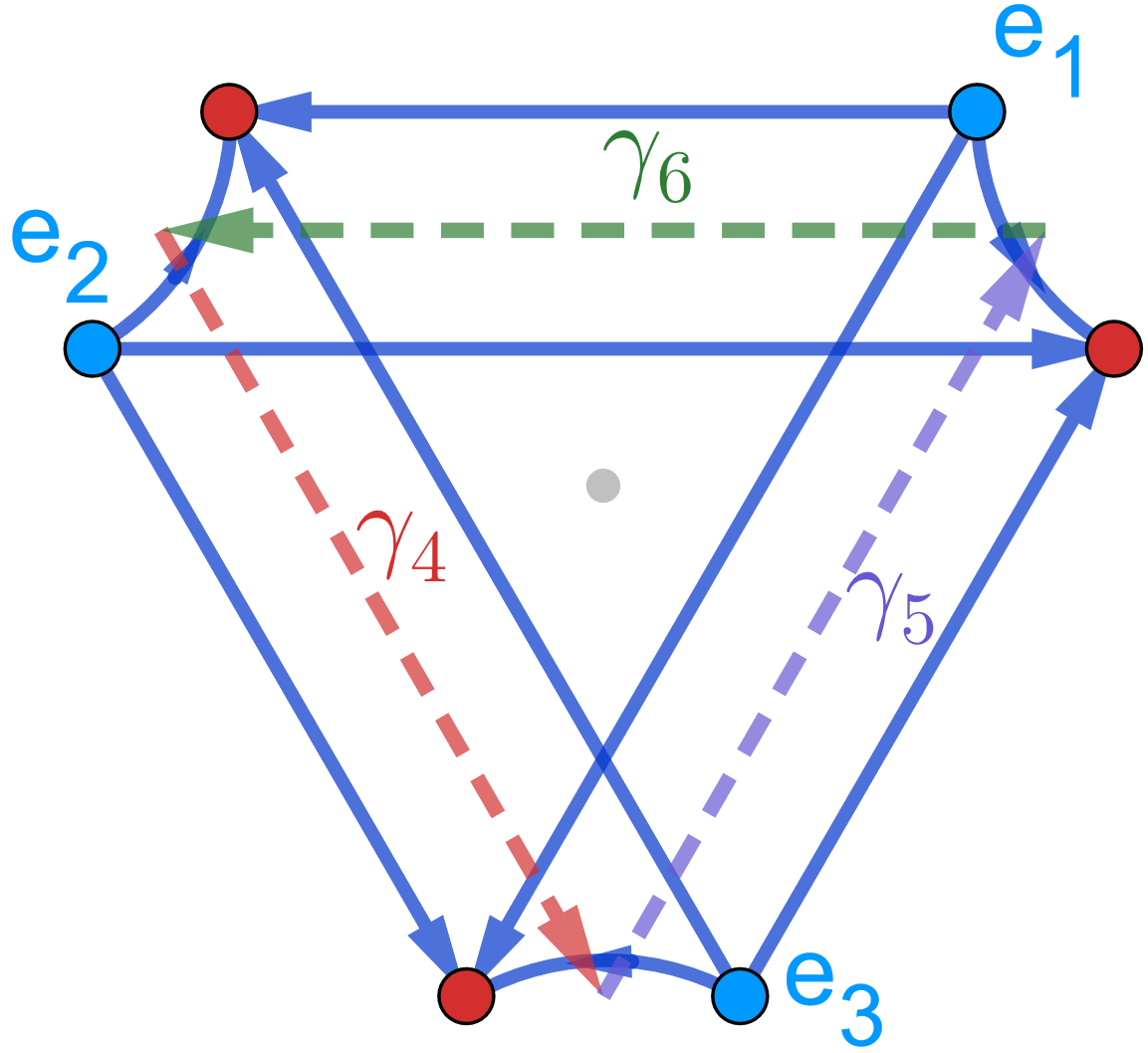}
    \end{minipage}
    \caption{All possible $pp$-steps, labeled from $\gamma_1$ to $\gamma_6$.}
    \label{fig: pp steps passing z}
\end{figure}
For any relevant path $\gamma$ we let $c^\gamma(x)$ denote the coefficient $(C_x)_{i,j}$ of the coin matrix at $x \in \Gamma_\gamma$ that corresponds to the $hh$ or $pp$ step that $\gamma$ performs at $x$. We stress that the index pair $(i,j)$ may change from lattice site to lattice site, depending on the step $\gamma$ performs. Using Proposition \ref{propo: counting dimensions}, we conclude that a relevant path $\gamma$ leads to a well-defined index if and only if there exists some $\varepsilon > 0$ such that for all $x \in \Gamma_\gamma$ sufficiently large we have $|c^\gamma(x)| > \varepsilon$.

The self-adjoint operator $\Phi$ is compact if and only if there exists an orthogonal set $(f_n)_{n \in \N}$ such that $\Phi = \sum_{n=1}^\infty \lambda_n \ket{f_n} \bra{f_n}$ with eigenvalues $\lambda_n \to 0$. Note that $\Phi = \bigoplus_{x \in \Gamma} \Phi \, Q_x$ where $\Phi\,  Q_x = 0$ if $x \notin \Gamma_\gamma$. The spectral decomposition of $\Phi$ thus reduces to the spectral decomposition of the blocks $\Phi \, Q_x$ and $\Phi$ is compact if and only if $|c^\gamma(x)| \to 1$ for $x \in \Gamma_\gamma$, $|x| \to \infty$. Similarly, we obtain for the trace of $|\Phi|$:
\begin{align} \label{eq: Phi trace class}
    \tr(|\Phi|) = \tr \Big( \bigoplus_{x \in \Gamma} |\Phi| \, Q_x \Big) = \sum_{x \in \Gamma_\gamma} \tr(|\Phi| \, Q_x) = \sum_{x \in \Gamma_\gamma} \sqrt{1 - |c^\gamma(x)|^2}.
\end{align}
The operator $\Phi$ is thus trace-class if and only if the sum \eqref{eq: Phi trace class} converges. To summarize, while compactness of $\Phi$ requires $|c^\gamma(x)| \to 1$ for $|x| \to \infty$ along $\gamma$, $\Phi$ is trace-class if this convergence is fast enough. Using Theorem \ref{thm: index properties}, we have proved the following:
\begin{theorem} \label{thm: index theorem}
    Let $\mathcal{M} \subset \mathcal{R}$ be a subset of edges, $P$ be the projection onto $\mathcal{M}$ in $l^2(\mathcal{R})$ and $\Phi = \mathcal{U}^* P \mathcal{U} - P$. Suppose that a path $\gamma$ describing the boundary of $\mathcal{M}$ is relevant (see Definition \ref{def: relevant path}). For any $x \in \Gamma_\gamma$, let $c^\gamma(x)$ denote the coefficient of the coin matrix $C_x$ that corresponds to the step $\gamma$ performs at $x$, i.e. $\| \Phi \, Q_x \| = \sqrt{1 - |c^\gamma(x)|^2}$. Then
    \begin{align*}
        \text{ind}(\Phi) \text{ is well-defined} \iff \exists \, R>0 \text{ s.t.} \inf_{x \in \Gamma_\gamma, |x| > R} |c^\gamma(x)| > 0.
    \end{align*}
    In this case, we have $\text{ind}(\Phi) = \pm 1$ and
    \begin{enumerate}
        \item $\Phi$ is compact $\iff$ $|c^\gamma(x)| \to 1$ for $|x| \to \infty$. In this case, we have $\sigma(\mathcal{U}) = \Sp^1$.
        \item $\Phi$ is trace-class $\iff$ $\sum_{x \in \Gamma_\gamma} \sqrt{1 - |c^\gamma(x)|^2} < \infty$. Then, we have $\sigma_{ac}(\mathcal{U}) = \Sp^1$.
    \end{enumerate}
\end{theorem}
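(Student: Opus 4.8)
The plan is to assemble the pieces established above into the four assertions.

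\textbf{Well-definedness of the index.} By Proposition \ref{propo: counting dimensions}, part 3, $\text{ind}(\Phi)$ is well-defined precisely when there exist $c<1$ and $R>0$ with $\sup_{|x|>R}\|\Phi Q_x\|\le c$. Since $\Phi=\bigoplus_{x\in\Gamma}\Phi Q_x$ and $\Phi Q_x=0$ whenever $x\notin\Gamma_\gamma$ — which follows from the definition of $\Gamma_\gamma$ together with $[\Phi,Q_x]=[P,Q_x]=0$ and the identity $\|\Phi Q_x\|^2=\|\Phi^2 Q_x\|$ — this supremum is effectively taken over $x\in\Gamma_\gamma$ with $|x|>R$. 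Substituting $\|\Phi Q_x\|=\sqrt{1-|c^\gamma(x)|^2}$, the requirement $\sup\le c<1$ becomes $\inf_{x\in\Gamma_\gamma,\,|x|>R}|c^\gamma(x)|^2\ge 1-c^2>0$, i.e. $\inf_{x\in\Gamma_\gamma,\,|x|>R}|c^\gamma(x)|>0$, which is the stated equivalence.

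\textbf{Value of the index.} Relevance of $\gamma$ together with the Lemma above immediately gives $\text{ind}(\Phi)=\pm1$; the only input is the local step count already carried out, namely that $hh$- and $pp$-steps contribute $0$ to \eqref{eq: sum dim difference} while each $hp$-step contributes $\pm1$, and a relevant path performs exactly one $hp$-step up to finite-rank — hence, by \eqref{eq: index invariance compact perturbations}, index-irrelevant — modifications of $\mathcal{M}$.

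\textbf{Spectral dichotomy.} I would read this off from the block structure $\Phi=\bigoplus_{x\in\Gamma_\gamma}\Phi Q_x$, where each block has rank at most two and single nonzero singular value $\sqrt{1-|c^\gamma(x)|^2}$ (the two nonzero eigenvalues of $\Phi^2 Q_x$ collapsing to the common value $1-|c^\gamma(x)|^2$ by unitarity of $C_x$). A direct sum of finite-rank operators is compact iff the block norms tend to $0$, i.e. iff $|c^\gamma(x)|\to1$ along $\Gamma_\gamma$; and $\tr|\Phi|=\sum_{x\in\Gamma_\gamma}\tr(|\Phi|Q_x)$ is, up to a bounded multiplicity factor, comparable to $\sum_{x\in\Gamma_\gamma}\sqrt{1-|c^\gamma(x)|^2}$, so $\Phi$ is trace-class iff that sum converges. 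Since $\text{ind}(\Phi)=\pm1\ne0$, Theorem \ref{thm: index properties} then delivers $\sigma(\mathcal{U})=\Sp^1$ in the compact case and $\sigma_{ac}(\mathcal{U})=\Sp^1$ in the trace-class case.

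The one genuinely model-specific point — and the part I expect to be the main obstacle — is the geometric bookkeeping behind $\|\Phi Q_x\|=\sqrt{1-|c^\gamma(x)|^2}$: one must check that, up to the symmetries of the hexagonal lattice, the possible local shapes of a relevant $\gamma$ at a passed vertex are exhausted by the $hh$- and $pp$-steps of Figures \ref{fig: hh steps passing z} and \ref{fig: pp steps passing z}, and that for each such shape the $3\times3$ matrix of $\Phi^2 Q_x$ — obtained by tracking which incoming and outgoing edges of $x$ lie in $\mathcal{M}$ and weighting the links by the entries of $C_x$ as in Figure \ref{fig: PhiQ H1} — really has spectrum $\{1-|c^\gamma(x)|^2,\,1-|c^\gamma(x)|^2,\,0\}$, the collapse of the first two eigenvalues being the step where unitarity of the coin is used. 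This is a finite but delicate case analysis; everything else in the argument is soft.
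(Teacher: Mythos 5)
Your proposal is correct and follows essentially the same route as the paper: Proposition \ref{propo: counting dimensions} plus the block decomposition $\Phi=\bigoplus_x\Phi Q_x$ for well-definedness and the compact/trace-class dichotomy, the lemma on relevant paths for $\text{ind}(\Phi)=\pm1$, and Theorem \ref{thm: index properties} for the spectral conclusions; the ``delicate case analysis'' you flag is exactly what the paper carries out via the explicit $\Phi^2Q_x$ computations for the steps in Figures \ref{fig: hh steps passing z} and \ref{fig: pp steps passing z}. Your remark that $\tr(|\Phi|Q_x)$ equals $\sqrt{1-|c^\gamma(x)|^2}$ only up to a multiplicity factor (the eigenvalue $1-|c^\gamma(x)|^2$ of $\Phi^2Q_x$ is double) is in fact slightly more careful than the paper's equation \eqref{eq: Phi trace class}, and is immaterial for the convergence criterion.
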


\begin{remark}
    We note that the criteria above depend only on the absolute value of the coefficients of $C_x$. Thus, the same results hold when introducing random phases to the model. 
\end{remark}

\begin{remark}
    All criteria in Theorem \ref{thm: index theorem} concern only the coin matrices on sites $x \in \Gamma_\gamma$, i.e. sites close to the path $\gamma$. The results are thus independent of the coin matrices on all other sites.
\end{remark}

\begin{example}
    Let $C_x = C_0$ (see section \ref{sec: model and results}) for all $x \in \Gamma$. We know that this choice of coin matrix induces a fully localized behavior with spectrum $\sigma(\mathcal{U}) = \{1, e^{i \frac{2}{3}\pi}, e^{i \frac{4}{3}\pi} \}$. However, this is no contradiction to Theorem \ref{thm: index theorem}, since at least one non-zero diagonal entry is required to define a relevant path $\gamma$ that leads to a non-zero index. Thus, Theorem \ref{thm: index theorem} does not apply.
\end{example}

\begin{example}
    Let $C_x = I$ for all $x \in \Gamma$. In this case, we have $\sigma_{ac}(\mathcal{U}) = \Sp^1$. However, in order to define a relevant path $\gamma$ that leads to a non-zero index, we require at least one non-zero off-diagonal entry in the coin matrix. Thus, no path $\gamma$ yields a non-zero index and we can not apply Theorem \ref{thm: index theorem}.
\end{example}

\begin{example} \label{exmpl: things work}
    Let $C_x = \begin{pmatrix}
        0 & 1 & 0 \\ 1 & 0 & 0 \\ 0 & 0 & 1
    \end{pmatrix}$ for all $x \in \Gamma$. We can construct a path $\gamma$ that performs the $hh$-step illustrated in Figure \ref{fig: PhiQ H1} for all $t \leq -1$ and the $pp$-steps $\gamma_3$ and $\gamma_5$ for all $t \geq 1$, see Figure \ref{fig: pp steps passing z}. These steps correspond to the coefficients $c_{21}$ and $c_{33}$ and Theorem \ref{thm: index theorem} trivially yields that such a path $\gamma$ leads to a non-zero index with $\Phi$ being trace-class. By a direct computation using the Fourier Transform we verify that $\sigma_{ac}(\mathcal{U}) = \Sp^1$.
\end{example}

\begin{example}
    Let $C_x = \begin{pmatrix}
        * & a_1(x) & * \\ a_2(x) & * & * \\ * & * & a_3(x)
    \end{pmatrix}$. Then the same path $\gamma$ as in Example \ref{exmpl: things work} leads to a non-zero index if and only if $|a_j(x)| > \varepsilon > 0$ for all $x$ sufficiently large (in fact, only for all large $x$ that are close to $\gamma$). Furthermore, if $|a_j(x)| \to 1$, then $\sigma(\mathcal{U}) = \Sp^1$. If this convergence is fast enough such that $\sum_{x \in \Gamma_\gamma} \sqrt{1 - |a_j(x)|^2} < \infty$ for all $j$, then $\sigma_{ac}(\mathcal{U}) = \Sp^1$.
\end{example}

\appendix

\section{First re-sampling argument} \label{sec: first resampling argu}
We derive equation \eqref{eq: first resampling} from \eqref{eq: before first resampling}, following the steps of \cite{HJS:09} (Proposition 13.1). We will let $c$ denote a constant, which is independent of $L$, but may change from line to line.
\subsection{Splitting terms}
We fix $(v^{(k)}, \tilde{v}^{(l)}) \in B_L, (u^{(n)}, \tilde{u}^{(m)}) \in B_{L+\iota}$ such that $v^{(k)} \in \Hp_L$ and $\tilde{u}^{(m)} \in \Hp_{L+\iota}^C$ and define $J = \{v, \tilde{v}, u, \tilde{u} \}$. For all sites $r \in J$ we choose i.i.d. random variables $(\widehat{\omega}_r)_{r\in J}$ that are uniformly distributed on $\T$ and independent from $(\omega_r)_{r \in \Gamma}$. Writing $P_r = \sum_{i=1}^3 \ket{r^{(i)}} \bra{r^{(i)}}$, we define
\begin{align*}
    D_{\omega, \widehat{\omega}} = \sum_{r \in J} e^{i \widehat{\omega}_r} \, P_r + \sum_{r \notin J} e^{i \omega_r} \, P_r.
\end{align*}
$D_{\omega, \widehat{\omega}}$ uses the phase $\widehat{\omega}_r$ on all $r \in J$ and $\omega_r$ on all other sites. We set
\begin{align*}
    U_{\omega, \widehat{\omega}}^{(L)} = D_{\omega, \widehat{\omega}} \, S \, \mathcal{C}^{\Lambda_L}(C) \;\; \text{and} \;\; \widehat{R}^{(L)} = \big( U_{\omega, \widehat{\omega}}^{(L)} - z \big)^{-1}.
\end{align*}
To simplify the notation we define $\widehat{D} = \sum_{r \in J} \big( e^{i \omega_r} - e^{i \widehat{\omega}_r} \big) \, P_r$. Applying the resolvent identity to $R^{(L)}$ and $R^{(L+ \iota)}$ yields
\begin{align*}
    R^{(L)} &= \widehat{R}^{(L)} - \widehat{R}^{(L)} \widehat{D} \, U^{(L)} R^{(L)} \;\; \text{ and } \;\; R^{(L+ \iota)} = \widehat{R}^{(L+ \iota)} - R^{(L+ \iota)} \widehat{D} \, U^{(L+\iota)} \widehat{R}^{(L+ \iota)}.
\end{align*}
We use this to bound each term in the right hand side of \eqref{eq: before first resampling}:
\begin{align} \label{eq: splitting first res}
\begin{split}
    &\E \Big( \big| \langle x^{(i)} \, | \, R^{(L)} \, v^{(k)} \rangle \big|^s \, \big| \langle \tilde{v}^{(l)} \, | \, R \, u^{(n)} \rangle \big|^s \, \big| \langle \tilde{u}^{(m)} \, | \, R^{(L+\iota)} \, y^{(j)} \rangle \big|^s \Big) \\
    &\overset{s<1}{\leq} \widehat{\E} \E \bigg( \Big( \big| \langle x^{(i)} \, | \, \widehat{R}^{(L)} \, v^{(k)} \rangle \big|^s + \big| \langle x^{(i)} \, | \, \widehat{R}^{(L)} \widehat{D} \, U^{(L)} R^{(L)} \, v^{(k)} \rangle \big|^s \Big) \, \big| \langle \tilde{v}^{(l)} \, | \, R \, u^{(n)} \rangle \big|^s \\
    &\hphantom{{} \leq \widehat{\E} \E \bigg(} \Big( \big| \langle \tilde{u}^{(m)} \, | \, \widehat{R}^{(L+ \iota)} \, y^{(j)} \rangle \big|^s + \big| \langle \tilde{u}^{(m)} \, | \, R^{(L+ \iota)} \widehat{D} \, U^{(L+\iota)} \widehat{R}^{(L+ \iota)} \, y^{(j)} \rangle \big|^s \Big) \bigg) \\
    & = A_1 + A_2 + A_3 + A_4,
\end{split}
\end{align}
where
\begin{align*}
    A_1 &= \widehat{\E}\E \bigg( \Big| \langle x^{(i)} \, | \, \widehat{R}^{(L)} \, v^{(k)} \rangle \Big|^s  \; \Big| \langle \tilde{v}^{(l)} \, | \, R \, u^{(n)} \rangle \Big|^s \;  \Big| \langle \tilde{u}^{(m)} \, | \, \widehat{R}^{(L+ \iota)} \, y^{(j)} \rangle \Big|^s  \bigg)\\
    A_4 &= \widehat{\E}\E \bigg( \Big| \langle x^{(i)} \, | \, \widehat{R}^{(L)} \widehat{D} \, U^{(L)} R^{(L)} \, v^{(k)} \rangle \Big|^s \; \Big| \langle \tilde{v}^{(l)} \, | \, R \, u^{(n)} \rangle \Big|^s \\ & \hphantom{{} = \widehat{\E}\E \bigg(} \Big| \langle \tilde{u}^{(m)} \, | \, R^{(L+ \iota)} \widehat{D} \, U^{(L+\iota)} \widehat{R}^{(L+ \iota)} \, y^{(j)} \rangle \Big|^s \bigg).
\end{align*}
The intermediate terms $A_2$ and $A_3$ are defined analogously. Here, $\E$ denotes the expectation over $(\omega_r)_{r \in \Gamma}$ and $\widehat{\E}$ the one over $(\widehat{\omega}_r)_{r \in J}$. 

\subsection{Estimate for $A_1$} \label{sec: est for A1}
Let $\E \big( ... |J \big)$ be the conditional expectation with respect to the $\sigma$-field generated by $(\omega_r)_{r \notin J}$. By the law of total expectation we have for any random variable $X$:
\begin{align} \label{eq: total expectation}
    \E (X) = \E \big( \E (X | J ) \big).
\end{align}
This gives us:
\begin{align*}
    A_1 &= \widehat{\E} \bigg( \E \bigg( \big| \langle x^{(i)} | \, \widehat{R}^{(L)} v^{(k)} \rangle \big|^s  \; \E \Big( \big| \langle \tilde{v}^{(l)} | \, R u^{(n)} \rangle \big|^s \, \Big| \, J \Big) \, \big| \langle \tilde{u}^{(m)} | \, \widehat{R}^{(L+ \iota)} y^{(j)} \rangle \big|^s \bigg) \bigg).
\end{align*}
We have used that the operator $U_{\omega, \widehat{\omega}}^{(L)}$ depends only on the random variables $(\omega_r)_{r \notin J}$ and $(\widehat{\omega}_r)_{r \in J}$. Thus, when taking the expectation $\E$, which is with respect to the random variables $(\omega_r)_{r \in \Gamma}$, the resolvents $\widehat{R}^{(L)}$ and $\widehat{R}^{(L+\iota)}$ are measurable with respect to the $\sigma$-algebra generated by $(\omega_r)_{r \notin J}$ and can be pulled out of the conditional expectation. Similar to Theorem \ref{thm: frac mom bound} we can bound the conditional expectation:
\begin{align*}
    \E \Big( \big| \langle \tilde{v}^{(l)} \, | \, R \, u^{(n)} \rangle \big|^s \, \Big| \, J \Big) \leq C(s).
\end{align*}
Note that this estimate requires $\tilde{v}, u \in J$. We obtain:
\begin{align*}
    A_1 \leq c \; \widehat{\E} \bigg( \E \Big( \big| \langle x^{(i)} \, | \, \widehat{R}^{(L)} \, v^{(k)} \rangle \big|^s  \; \big| \langle \tilde{u}^{(m)} \, | \, \widehat{R}^{(L+ \iota)} \, y^{(j)} \rangle \big|^s \Big) \bigg).
\end{align*}
Notice that the equation above does not contain $(\omega_r)_{r \in J}$. Since $(\omega_r)_{r \in J}$ and $(\widehat{\omega}_r)_{r \in J}$ are identically distributed, we can replace the latter by the former:
\begin{align*}
    A_1 \leq c \; \E \Big( \big| \langle x^{(i)} \, | \, R^{(L)} v^{(k)} \rangle \big|^s  \; \big| \langle \tilde{u}^{(m)} \, | \, R^{(L+ \iota)} \, y^{(j)} \rangle \big|^s \Big).
\end{align*}
We stress that $v^{(k)} \in \Hp_L$, $\tilde{u}^{(m)} \in \Hp_{L+\iota}^C$ and $\Hp_L$, $\Hp_{L+\iota}^C$ are invariant under $R^{(L)}$ and $R^{(L+\iota)}$ respectively. Thus, the two scalar products above are independent random variables and we obtain:
\begin{align} \label{eq: A1 estimate}
    A_1 \leq c \; \E \Big( \big| \langle x^{(i)} \, | \, R^{(L)} v^{(k)} \rangle \big|^s \Big) \; \E \Big( \big| \langle \tilde{u}^{(m)} \, | \, R^{(L+ \iota)} \, y^{(j)} \rangle \big|^s \Big).
\end{align}
This estimate is of the required form.

\subsection{Estimate for $A_4$} \label{subsec: A4}
We apply \eqref{eq: total expectation} to $A_4$ and use Hölder's inequality twice:
\begin{align}
\begin{split} \label{eq: A4 terms}
    A_4 &\leq \widehat{\E} \bigg( \E \bigg( \E \Big( \big| \langle x^{(i)} \, | \, \widehat{R}^{(L)} \widehat{D} \, U^{(L)} R^{(L)} \, v^{(k)} \rangle \big|^{3s} \Big| J \Big)^\frac{1}{3} \, \E \Big( \big| \langle \tilde{v}^{(l)} \, | \, R \, u^{(n)} \rangle \big|^{3s} \Big| J \Big)^\frac{1}{3} \\
    & \hphantom{{} \leq \widehat{\E} \bigg( \E \bigg( } \E \Big( \big| \langle \tilde{u}^{(m)} \, | \, R^{(L+ \iota)} \widehat{D} \, U^{(L+\iota)} \widehat{R}^{(L+ \iota)} \, y^{(j)} \rangle \big|^{3s} \Big| J \Big)^\frac{1}{3} \bigg) \bigg).
\end{split}
\end{align}
Due to $s < \frac{1}{3}$ we can bound the second expectation in \eqref{eq: A4 terms} by Theorem \ref{thm: frac mom bound}. We continue with the first term. Using the definition of $\widehat{D}$ we obtain:
\begin{align*}
    \big| &\langle x^{(i)} \, | \, \widehat{R}^{(L)} \widehat{D} \, U^{(L)} R^{(L)} \, v^{(k)} \rangle \big|^{3s} \\ &\overset{s < \frac{1}{3}}{\leq} \sum_{r \in J} \sum_{l=1}^3 \big| e^{i \omega_r} - e^{i \widehat{\omega}_r} \big|^{3s} \, \big| \langle x^{(i)} \, | \, \widehat{R}^{(L)} r^{(l)} \rangle \big|^{3s} \, \big| \langle r^{(l)} \, | \, U^{(L)} R^{(L)} \, v^{(k)} \rangle \big|^{3s}.
\end{align*}
Note that $x^{(i)} \in \Hp_L$, which is invariant under $\widehat{R}^{(L)}$. Thus, $r^{(l)}$ is an element of $\Hp_L$ as well. Since the number of terms in the sum above is bounded independently of $L$, we obtain
\begin{align*}
    \big| &\langle x^{(i)} | \widehat{R}^{(L)} \widehat{D} \, U^{(L)} R^{(L)} v^{(k)} \rangle \big|^{3s} \leq c \sum_{\substack{r^{(l)} \in \Hp_L \\ \text{s.t. } r \in J}} \big| \langle x^{(i)} \, | \, \widehat{R}^{(L)} r^{(l)} \rangle |^{3s} \, | \langle r^{(l)} \, | \, U^{(L)} R^{(L)}  v^{(k)} \rangle \big|^{3s}.
\end{align*}
We remark that $\widehat{R}^{(L)}$ depends only on $(\widehat{\omega}_r)_{r \in J}$ and $(\omega_r)_{r \notin J}$. Since the expectation $\E$ is with respect to $(\omega_r)_{r \in \Gamma}$, the first scalar product is measurable with respect to the $\sigma$-algebra generated by $(\omega_r)_{r \notin J}$. Therefore:
\begin{align}
\begin{split} \label{eq: A4 first term intermediate step}
    \E &\Big( \big| \langle x^{(i)} \, | \, \widehat{R}^{(L)} \widehat{D} \, U^{(L)} R^{(L)} \, v^{(k)} \rangle \big|^{3s} \Big| J \Big) \\
    &\leq c \sum_{\substack{r^{(l)} \in \Hp_L \\ \text{s.t. } r \in J}} \big| \langle x^{(i)} \, | \, \widehat{R}^{(L)} r^{(l)} \rangle \big|^{3s} \, \E \Big( \big| \langle r^{(l)} \, | \, U^{(L)} R^{(L)} \, v^{(k)} \rangle \big|^{3s} \, \Big| \, J \Big).
\end{split}
\end{align}
Since $D_\omega$ is unitary, we have:
\begin{align*}
    \big| \langle r^{(l)} \, | \, U^{(L)} R^{(L)} \, v^{(k)} \rangle \big| = \big| \langle e^{i \omega_r} \, r^{(l)} \, | \, I + z \, R^{(L)} \, v^{(k)} \rangle \big|.
\end{align*}
Using this, Theorem \ref{thm: frac mom bound} and $|z|<1$, we can bound the expectation in \eqref{eq: A4 first term intermediate step} and obtain for the first term in \eqref{eq: A4 terms}:
\begin{align*}
    \E &\Big( \big| \langle x^{(i)} \, | \, \widehat{R}^{(L)} \widehat{D} \, U^{(L)} R^{(L)} \, v^{(k)} \rangle \big|^{3s} \Big| J \Big) \leq c \sum_{\substack{r^{(l)} \in \Hp_L \\ \text{s.t. } r \in J}} \big| \langle x^{(i)} \, | \, \widehat{R}^{(L)} \, r^{(l)} \rangle \big|^{3s}.
\end{align*}
We can obtain a similar estimate for the third term in \eqref{eq: A4 terms}: As above, we use the definition of $\widehat{D}$, apply the expectation, utilize that $\widehat{R}^{(L+\iota)}$ is measurable with respect to $(\omega_r)_{r \notin J}$ and apply Theorem \ref{thm: frac mom bound} to obtain:
\begin{align*}
    \E \Big( \big| \langle \tilde{u}^{(m)} | R^{(L+ \iota)} \widehat{D} U^{(L+\iota)} \widehat{R}^{(L+ \iota)} y^{(j)} \rangle \big|^{3s} \Big| J \Big) \leq c \sum_{\substack{\tilde{r}^{(k)} \in \Hp_{L+\iota}^C \\ \text{s.t. } \tilde{r} \in J}} \big| \langle \tilde{r}^{(k)} | U^{(L+\iota)} \widehat{R}^{(L+ \iota)} y^{(j)} \rangle \big|^{3s}.
\end{align*}
Due to our assumption $y^{(j)} \in \Hp_{L+2\iota}^C$ we have $y \notin J$, as can be verified in Figure \ref{fig: box and bigger box}: The site of $J$ closest to $y$ is $u$, which is at distance $1$ to $\tilde{u}$. The site $\tilde{u}$ is colored red in Figure \ref{fig: box and bigger box} and since $y$ is outside a box of size $L+2 \iota$, the distance between $y$ and $u$ is at least $3$. We stress that here the assumption $y^{(j)} \in \Hp_{L+\iota}^C $ is not enough. Since $D_{\omega, \widehat{\omega}}$ is unitary, it follows:
\begin{align*}
    \big| \langle \tilde{r}^{(k)} | U^{(L+\iota)} \widehat{R}^{(L+ \iota)} y^{(j)} \rangle \big| &= \big| \langle \tilde{r}^{(k)} | I+ z \, \widehat{R}^{(L+ \iota)} y^{(j)} \rangle \big| = |z| \, \big| \langle \tilde{r}^{(k)} | \widehat{R}^{(L+ \iota)} y^{(j)} \rangle \big|.
\end{align*}
Using $|z| < 1$, this yields for the third term in \eqref{eq: A4 terms}:
\begin{align*}
    \E \Big( \big| \langle \tilde{u}^{(m)} \, | \, R^{(L+ \iota)} \widehat{D} U^{(L+\iota)} \widehat{R}^{(L+ \iota)} y^{(j)} \rangle \big|^{3s} \Big| J \Big) \leq c \sum_{\substack{\tilde{r}^{(k)} \in \Hp_{L+\iota}^C \\ \text{s.t. } \tilde{r} \in J}} \big| \langle \tilde{r}^{(k)} | \widehat{R}^{(L+ \iota)} y^{(j)} \rangle \big|^{3s}.
\end{align*}
Since $J$ has a fixed finite number of elements, we have $\big( \sum_{j \in J} \alpha_j^{3s} \big)^\frac{1}{3} \leq c \sum_{j \in J} \alpha_j^s$ for some constant $c$ and all $\alpha_j > 0$. Using the obtained bounds for the three terms in \eqref{eq: A4 terms}, we conclude:
\begin{align*}
    A_4 \leq c \, \sum_{\substack{r^{(l)} \in \Hp_L \\ \text{s.t. } r \in J}} \, \sum_{\substack{\tilde{r}^{(k)} \in \Hp_{L+\iota}^C \\ \text{s.t. } \tilde{r} \in J}} \widehat{\E} \E \Big( \big| \langle x^{(i)} \, | \, \widehat{R}^{(L)} r^{(l)} \rangle \big|^{s} \, \big| \langle \tilde{r}^{(k)} \, | \, \widehat{R}^{(L+ \iota)} \, y^{(j)} \rangle \big|^{s} \Big).
\end{align*}
We note that the last line only depends on the random variables $(\omega_r)_{r \notin J}$ and $(\widehat{\omega}_r)_{r \in J}$. Since $(\omega_r)_{r \in J}$ and $(\widehat{\omega}_r)_{r \in J}$ are independent and identically distributed, we can replace the latter by the former. The two scalar products inside the expectation are then independent random variables by the same reasoning as for $A_1$. This yields:
\begin{align} \label{eq: A4 estimate}
    A_4 \leq c \, \sum_{\substack{r^{(l)} \in \Hp_L \\ \text{s.t. } r \in J}} \E \Big( \big| \langle | x^{(i)} \, | \, R^{(L)} r^{(l)} \rangle \big|^{s} \Big) \, \sum_{\substack{\tilde{r}^{(k)} \in \Hp_{L+\iota}^C \\ \text{s.t. } \tilde{r} \in J}}  \E \Big( \big| \langle \tilde{r}^{(k)} \, | \, R^{(L+ \iota)} \, y^{(j)} \rangle \big|^{s} \Big).
\end{align}

\subsection{Combining the estimates}
We can interpolate the arguments used for $A_1$ and $A_4$ to obtain similar estimates for $A_2$ and $A_3$. By definition of $v, \tilde{v}, u, \tilde{u}$ we have $\{r \in J \, | \, \exists \, l \text{ s.t. } r^{(l)} \in \Hp_L \} = \{v, \tilde{v} \}$ and $\{ r \in J \, | \, \exists \, l \text{ s.t. } r^{(l)} \in \Hp_{L+\iota}^C \} = \{ u, \tilde{u} \}$. This yields for $A_4$:
\begin{align*}
    \sum_{v,\tilde{v},u,\tilde{u}} A_4 (v,\tilde{v},u,\tilde{u}) \leq &c \bigg( \sum_{u^{(n)} \in \partial(\Lambda_L)} \E \Big( \big| \langle x^{(i)} \, | \, R^{(L)} u^{(n)} \rangle \big|^{s} \Big) \bigg) \\ &\times \bigg( \sum_{\tilde{u}^{(m)} \in \partial(\Lambda_{L+\iota}^C)} \E \Big( \big| \langle \tilde{u}^{(m)} \, | \, R^{(L+ \iota)} \, y^{(j)} \rangle \big|^{s} \Big) \bigg).
\end{align*}
Similar estimates hold for the easier terms $A_1, A_2$ and $A_3$. Using \eqref{eq: before first resampling} and \eqref{eq: splitting first res}, we finally obtain \eqref{eq: first resampling}. We stress that all constants obtained in the appendix \ref{sec: first resampling argu} are independent of $L$.

\section{Second re-sampling argument} \label{sec: second resampling argu}
Analogously to the proof of Proposition 13.2 in \cite{HJS:09}, we derive equation \eqref{eq: second resampling} from \eqref{eq: before second resampling}. Since the methods used are similar to the ones in appendix \ref{sec: first resampling argu}, we will omit details. Note that the term containing $u^{(n)}$ in \eqref{eq: before second resampling} is already in the shape of \eqref{eq: second resampling}, so we only have to deal with the second term. We let $c > 0$ denote some constant independent of $L$ that may change from line to line. Fixing $\tilde{u}^{(m)} \in \partial (\Lambda_{L+\iota}^C)$, $(v^{(k)}, \tilde{v}^{(l)}) \in B_{L+\iota}$ with $v^{(k)} \in \Hp_{L+\iota}^C$, we define $\Tilde{J} = \{ \tilde{u}, v, \tilde{v} \}$. Similar to our approach in appendix \ref{sec: first resampling argu} we re-sample the random variables on all sites in $\tilde{J}$. Let $(\tilde{\omega}_r)_{r \in \tilde{J}}$ be an i.i.d. family of random variables uniformly distributed on the torus $\T$. Let $D_{\omega, \tilde{\omega}}$ use the phase $\tilde{\omega}_r$ on all sites $r \in \tilde{J}$ and $\omega_r$ on all others. Let $U_{\omega, \tilde{\omega}} = D_{\omega, \tilde{\omega}} U$ and $\tilde{R}$ be the re-sampled Quantum Walk and its resolvent. To simplify the notation we define $\tilde{D} = \sum_{r \in \tilde{J}} \big( e^{i \omega_r} - e^{i \tilde{\omega}_r} \big) \, P_r$. Using the resolvent identity $R = \tilde{R} - R \, \tilde{D} \, U \, \tilde{R}$, we obtain for the last expectation in \eqref{eq: before second resampling}:
\begin{align} \label{eq: split in B1 and B2}
\begin{split}
    &\E \bigg( \left|\langle \tilde{u}^{(m)} \, | \, R^{(L+\iota)} \, v^{(k)} \rangle \right|^s \, \left| \langle \tilde{v}^{(l)} \, | \, R \, y^{(j)} \rangle \right|^s \bigg) \\
    &\overset{s < 1}{\leq} \tilde{\E} \E \bigg( \left|\langle \tilde{u}^{(m)} \, | \, R^{(L+\iota)} \, v^{(k)} \rangle \right|^s \, \big| \langle \tilde{v}^{(l)} \, | \, \tilde{R} \, y^{(j)} \rangle \big|^s \bigg) \\ &+ \tilde{\E} \E \bigg( \left|\langle \tilde{u}^{(m)} \, | \, R^{(L+\iota)} \, v^{(k)} \rangle \right|^s \, \big| \langle \tilde{v}^{(l)} \, | \, R \tilde{D} U \tilde{R} \, y^{(j)} \rangle \big|^s \bigg) = B_1 + B_2.
\end{split}
\end{align}
We write $\E (... | \tilde{J})$ for the conditional expectation with respect to the $\sigma$-algebra generated by $(\omega_r)_{r \notin \tilde{J}}$. We proceed as in section \ref{sec: est for A1} and apply the law of total expectation (equation \eqref{eq: total expectation}) to $B_1$, use that $\tilde{R}$ is independent of the random variables $(\omega_r)_{r \in \tilde{J}}$ and apply Theorem \ref{thm: frac mom bound}. We can then replace the random variables $(\tilde{\omega}_r)_{r \in \tilde{J}}$ by $(\omega_r)_{r \in \tilde{J}}$ to obtain:
\begin{align} \label{eq: B1 estimate}
    B_1 \leq c \, \E \Big( \big| \langle \tilde{v}^{(l)} \, | \, R \, y^{(j)} \rangle \big|^s \Big).
\end{align}
The law of total expectation, Hölder's inequality and Theorem \ref{thm: frac mom bound} give for $B_2$:
\begin{align} \label{eq: B2 first estimate}
    B_2 \leq c \, \tilde{\E} \E \Big( \E \Big( \big| \langle \tilde{v}^{(l)} \, | \, R \tilde{D} U \tilde{R} \, y^{(j)} \rangle \big|^{2s} \, \Big| \, \tilde{J} \Big)^\frac{1}{2} \Big).
\end{align}
Using the definition of $\tilde{D}$ and Theorem \ref{thm: frac mom bound}, we obtain for the inner expectation:
\begin{align} \label{eq: B2 second estimate}
    \E &\Big( \big| \langle \tilde{v}^{(l)} \, | \, R \tilde{D} U \tilde{R} \, y^{(j)} \rangle \big|^{2s} \, \Big| \, \tilde{J} \Big) \leq c \sum_{\substack{r^{(k)} \in \Hp \\ \text{s.t. } r \in \tilde{J}}} \big| \langle r^{(k)} \, | \, U \tilde{R} \, y^{(j)} \rangle \big|^{2s}.
\end{align}
Since $y^{(j)} \in \Hp_{L+2\iota}^C$, we have $y \notin \tilde{J}$, by the same argument as in appendix \ref{sec: first resampling argu}. Again, we stress that the requirement $y^{(j)} \in \Hp_{L+\iota}^C$ would not be enough. This yields:
\begin{align} \label{eq: B2 third estimate}
    \big| \langle r^{(k)} | \, U \tilde{R} \, y^{(j)} \rangle \big| = \big| \langle r^{(k)} | \, I + z \tilde{R} \, y^{(j)} \rangle \big| = |z| \, | \langle r^{(k)} | \, \tilde{R} \, y^{(j)} \rangle |.
\end{align}
Since $|\tilde{J}| \leq 3$, we have $\big( \sum_{j \in J} \alpha_j^{2s} \big)^\frac{1}{2} \leq c \sum_{j \in J} \alpha_j^{s}$ for all $\alpha_j > 0$. Using $|z|<1$, we insert \eqref{eq: B2 second estimate} and \eqref{eq: B2 third estimate} into \eqref{eq: B2 first estimate} and replace $\tilde{\omega}_r$ by $\omega_r$ to obtain:
\begin{align} \label{eq: B2 estimate}
    B_2 \leq c \sum_{\substack{r^{(k)} \in \Hp \\ \text{s.t. } r \in \tilde{J}}} \E \Big( | \langle r^{(k)} \, | \, R \, y^{(j)} \rangle |^{s} \Big).
\end{align}
Plugging the estimates \eqref{eq: split in B1 and B2}, \eqref{eq: B1 estimate} and \eqref{eq: B2 estimate} into \eqref{eq: before second resampling} we obtain:
\begin{align*}
    \E \big( \big| &\langle x^{(i)} | \, R \, y^{(j)} \rangle \big|^s \big) \leq c \, \bigg( \sum_{u^{(n)} \in \partial (\Lambda_L)} \E \left( \big| \langle x^{(i)} | \, R^{(L)} \, u^{(n)} \rangle \big|^s \right) \bigg) \\ &\times \sum_{\tilde{u}^{(m)} \in \partial (\Lambda_{L+\iota}^C)} \Bigg( \E \left( \big| \langle \tilde{u}^{(m)} | \, R \, y^{(j)} \rangle \big|^s \right) + c \sum_{\substack{(v^{(k)}, \tilde{v}^{(l)}) \in B_{L+\iota}, \\ v^{(k)} \in \Hp_{L+\iota}^C}} \E \Big( \big| \langle \tilde{v}^{(l)} | \, R \, y^{(j)} \rangle \big|^s \Big) \\ &\hphantom{{} \times \sum_{\tilde{u}^{(m)} \in \partial (\Lambda_{L+\iota}^C)} \Bigg( } + \sum_{\substack{r^{(k)} \in \Hp \\ r \in \{ v,\tilde{v}, \tilde{u}\} }} \E \Big( | \langle r^{(k)} | \, R \, y^{(j)} \rangle |^{s} \Big) \Bigg).
\end{align*}
We stress that $| \partial(\Lambda_{L+\iota}^C)| = \mathcal{O}(|L|)$ and $|B_{L+\iota}| = \mathcal{O}(|L|)$ and note that all scalar products above are of the form $\langle r^{(k)} | R \, y^{(j)} \rangle$ with $r^{(k)} \in \overline{\partial (\Lambda_{L+\iota}^C)}$. Rearranging the terms in the sum above, this finishes the proof of equation \eqref{eq: second resampling}.

\section*{Declarations}
\textbf{Funding:} \\
This work was supported by the MSCA Cofund QuanG (Grant Nr: 101081458), funded by the European Union. The views and opinions expressed are those of the author only and do not necessarily reflect those of the European Union or the granting authority. Neither the European Union nor the granting authority can be held responsible for them. This work was partially supported by the ANR grant number ANR-24-CE40-5714-02.
\vspace{2mm} \\
\textbf{Competing Interests:} \\
The author has no competing interests to declare that are relevant to the content of this article.
\vspace{2mm} \\
\textbf{Data Availability:} \\
This manuscript has no associated data.

\bibliographystyle{plain}  
\bibliography{Bibliography}  

\begin{thebibliography}{10}

\bibitem{Ahlbrecht:2011}
A.~Ahlbrecht, H.~Vogts, A.~H. Werner, and R.~F. Werner.
\newblock Asymptotic evolution of quantum walks with random coin.
\newblock {\em Journal of Mathematical Physics}, 52(4), 2011.

\bibitem{Ahlbrecht:2012}
Andre Ahlbrecht, Christopher Cedzich, Robert Matjeschk, Volkher~B. Scholz,
  Albert~H. Werner, and Reinhard~F. Werner.
\newblock Asymptotic behavior of quantum walks with spatio-temporal coin
  fluctuations.
\newblock {\em Quantum Information Processing}, 11(5):1219–1249, March 2012.

\bibitem{ASW:2011}
Andre Ahlbrecht, Volkher~B. Scholz, and Albert~H. Werner.
\newblock Disordered quantum walks in one lattice dimension.
\newblock {\em Journal of Mathematical Physics}, 52(10), 10 2011.

\bibitem{AENSS:2005}
Michael Aizenman, Alexander Elgart, Serguei Naboko, Jeffrey~H. Schenker, and
  Gunter Stolz.
\newblock Moment analysis for localization in random schrödinger operators.
\newblock {\em Inventiones mathematicae}, 163(2):343–413, October 2005.

\bibitem{AizenmanMolchanov:1993}
Michael Aizenman and Stanislav Molchanov.
\newblock {Localization at large disorder and at extreme energies: an
  elementary derivation}.
\newblock {\em Communications in Mathematical Physics}, 157(2):245 -- 278,
  1993.

\bibitem{AizenmanSchenker:2001}
Michael Aizenman, Jeffrey~H. Schenker, Roland~M. Friedrich, and Dirk
  Hundertmark.
\newblock Finite-volume fractional-moment criteria for anderson localization.
\newblock {\em Communications in Mathematical Physics}, 224(1):219–253,
  November 2001.

\bibitem{Warzel:15}
Michael Aizenman and Simone Warzel.
\newblock {\em Random Operators: Disorder Effects on Quantum Spectra and
  Dynamics}, volume 168 of {\em Graduate Studies in Mathematics}.
\newblock AMS, 2015.

\bibitem{Ambainis:2019}
Andris Ambainis, Andr\'{a}s Gily\'{e}n, Stacey Jeffery, and Martins Kokainis.
\newblock Quadratic speedup for finding marked vertices by quantum walks.
\newblock In {\em Proceedings of the 52nd Annual ACM SIGACT Symposium on Theory
  of Computing}, STOC 2020, page 412–424. Association for Computing
  Machinery, 2020.

\bibitem{Molfetta:2018}
Pablo Arrighi, Giuseppe Di~Molfetta, Iv\'an M\'arquez-Mart\'{\i}n, and Armando
  P\'erez.
\newblock Dirac equation as a quantum walk over the honeycomb and triangular
  lattices.
\newblock {\em Phys. Rev. A}, 97:062111, 06 2018.

\bibitem{ABJ:2017}
Joachim Asch, O.~Bourget, and Alain Joye.
\newblock Chirality induced interface currents in the chalker–coddington
  model.
\newblock {\em Journal of Spectral Theory}, 2017.

\bibitem{ABJ:2010}
Joachim Asch, Olivier Bourget, and Alain Joye.
\newblock Localization properties of the chalker coddington model.
\newblock {\em Annales Henri Poincaré}, 11(7):1341–1373, November 2010.

\bibitem{ABJ:2012}
Joachim Asch, Olivier Bourget, and Alain Joye.
\newblock Dynamical localization of the chalker-coddington model far from
  transition.
\newblock {\em Journal of Statistical Physics}, 147(1):194–205, April 2012.

\bibitem{ABJ:2015}
Joachim Asch, Olivier Bourget, and Alain Joye.
\newblock Spectral stability of unitary network models.
\newblock {\em Reviews in Mathematical Physics}, 27(07):1530004, 2015.

\bibitem{ABJ:2020}
Joachim Asch, Olivier Bourget, and Alain Joye.
\newblock On stable quantum currents.
\newblock {\em Journal of Mathematical Physics}, 61(9):092104, 09 2020.

\bibitem{AschMouneime:2019}
Joachim Asch and Mohamed Mouneime.
\newblock Examples for stable quantum currents, 2019.

\bibitem{ASS:94}
J.~Avron, R.~Seiler, and B.~Simon.
\newblock The index of a pair of projections.
\newblock {\em Journal of Functional Analysis}, 120(1):220--237, 1994.

\bibitem{Boumaza:2025}
Hakim Boumaza and Amine Khouildi.
\newblock Dynamical localization for a random scattering zipper.
\newblock {\em Letters in Mathematical Physics}, 115(64), 05 2025.

\bibitem{Bourget:2003}
Olivier Bourget, James~S. Howland, and Alain Joye.
\newblock Spectral analysis of unitary band matrices.
\newblock {\em Communications in Mathematical Physics}, 234(2):191–227, March
  2003.

\bibitem{Cedzich:2017}
C.~Cedzich, T.~Geib, F.~A. Grünbaum, C.~Stahl, L.~Velázquez, A.~H. Werner,
  and R.~F. Werner.
\newblock The topological classification of one-dimensional symmetric quantum
  walks.
\newblock {\em Annales Henri Poincaré}, 19(2):325–383, November 2017.

\bibitem{Cedzich:2018}
C.~Cedzich, T.~Geib, C.~Stahl, L.~Velázquez, A.~H. Werner, and R.~F. Werner.
\newblock Complete homotopy invariants for translation invariant symmetric
  quantum walks on a chain.
\newblock {\em Quantum}, 2:95, September 2018.

\bibitem{CedWer:2021}
C.~Cedzich and A.~H. Werner.
\newblock Anderson localization for electric quantum walks and skew-shift cmv
  matrices.
\newblock {\em Communications in Mathematical Physics}, 387, 11 2021.

\bibitem{CedFilLi:2024}
Christopher Cedzich, Jake Fillman, Long Li, Darren~C Ong, and Qi~Zhou.
\newblock Exact mobility edges for almost-periodic cmv matrices via gauge
  symmetries.
\newblock {\em International Mathematics Research Notices}, 2024(8):6906--6941,
  12 2023.

\bibitem{CedFil:2023}
Christopher Cedzich, Jake Fillman, and Darren~C. Ong.
\newblock Almost everything about the unitary almost mathieu operator.
\newblock {\em Communications in Mathematical Physics}, 403, 10 2023.

\bibitem{Chalker:1988}
J~T Chalker and P~D Coddington.
\newblock Percolation, quantum tunnelling and the integer hall effect.
\newblock {\em Journal of Physics C: Solid State Physics}, 21(14):2665, 1988.

\bibitem{DNSS:2005}
Anne~Boutet de~Monvel, Serguei Naboko, Peter Stollmann, and Günter Stolz.
\newblock Localization near fluctuation boundaries via fractional moments and
  applications.
\newblock {\em Journal d'Analyse Mathématique}, 100:83--116, 2006.

\bibitem{Molfetta:2022}
Giuseppe Di~Molfetta and Victor Deng.
\newblock Geodesic quantum walks.
\newblock {\em Phys. Rev. A}, 105(6):062420, 2022.

\bibitem{Fefferman:2012}
Charles Fefferman and Michael Weinstein.
\newblock Honeycomb lattice potentials and dirac points.
\newblock {\em Journal of the American Mathematical Society}, 25, 02 2012.

\bibitem{FroehlichSpencer:1983}
J{\"u}rg Fr{\"o}hlich and Thomas Spencer.
\newblock {Absence of diffusion in the Anderson tight binding model for large
  disorder or low energy}.
\newblock {\em Communications in Mathematical Physics}, 88(2):151 -- 184, 1983.

\bibitem{Grover:1996}
Lov~K. Grover.
\newblock A fast quantum mechanical algorithm for database search.
\newblock In {\em Proceedings of the Twenty-Eighth Annual ACM Symposium on
  Theory of Computing}, STOC '96, page 212–219. Association for Computing
  Machinery, 1996.

\bibitem{HJ:2012}
Eman Hamza and Alain Joye.
\newblock Correlated markov quantum walks.
\newblock {\em Annales Henri Poincaré}, 13(8):1767–1805, March 2012.

\bibitem{HJ:2014}
Eman Hamza and Alain Joye.
\newblock Spectral transition for random quantum walks on trees.
\newblock {\em Communications in Mathematical Physics}, 326(2):415–439,
  February 2014.

\bibitem{HJS:2006}
Eman Hamza, Alain Joye, and Günter Stolz.
\newblock Localization for random unitary operators.
\newblock {\em Letters in Mathematical Physics}, 75(3):255–272, January 2006.

\bibitem{HJS:09}
Eman Hamza, Alain Joye, and Günter Stolz.
\newblock Dynamical localization for unitary anderson models.
\newblock {\em Mathematical Physics, Analysis and Geometry}, 12(4):381–444,
  September 2009.

\bibitem{Higuchi:2014}
Yusuke Higuchi, Norio Konno, Iwao Sato, and Etsuo Segawa.
\newblock Spectral and asymptotic properties of grover walks on crystal
  lattices.
\newblock {\em Journal of Functional Analysis}, 267(11):4197--4235, 2014.

\bibitem{Higuchi:2018}
Yusuke Higuchi and Etsuo Segawa.
\newblock Quantum walks induced by dirichlet random walks on infinite trees.
\newblock {\em Journal of Physics A: Mathematical and Theoretical},
  51(7):075303, January 2018.

\bibitem{Joye:2005}
Alain Joye.
\newblock Fractional moment estimates for random unitary operators.
\newblock {\em Letters in Mathematical Physics}, 72(1):51–64, April 2005.

\bibitem{Joye_2011}
Alain Joye.
\newblock Random time-dependent quantum walks.
\newblock {\em Communications in Mathematical Physics}, 307(1):65–100, July
  2011.

\bibitem{Joye:2011}
Alain Joye.
\newblock Random unitary models and their localization properties.
\newblock {\em Contemporary Mathematics}, 552:117--134, 04 2011.

\bibitem{Joye:2012}
Alain Joye.
\newblock Dynamical localization for d-dimensional random quantum walks.
\newblock {\em Quantum Information Processing}, 11:1251--1269, 2012.

\bibitem{joye:2024}
Alain Joye.
\newblock Unitary and open scattering quantum walks on graphs, 2024.

\bibitem{Joye:2014}
Alain Joye and Laurent Marin.
\newblock Spectral properties of quantum walks on rooted binary trees.
\newblock {\em Journal of Statistical Physics}, 155(6):1249–1270, April 2014.

\bibitem{JM:2010}
Alain Joye and Marco Merkli.
\newblock Dynamical localization of quantum walks in random environments.
\newblock {\em Journal of Statistical Physics}, 140:1025--1053, 2010.

\bibitem{Karski:2009}
Michal Karski, Leonid Förster, Jai-Min Choi, Andreas Steffen, Wolfgang Alt,
  Dieter Meschede, and Artur Widera.
\newblock Quantum walk in position space with single optically trapped atoms.
\newblock {\em Science}, 325(5937):174–177, July 2009.

\bibitem{Kato:66}
Tosio Kato.
\newblock {\em Perturbation Theory for Linear Operators}.
\newblock 1966.

\bibitem{Kempe:2003}
J~Kempe.
\newblock Quantum random walks: An introductory overview.
\newblock {\em Contemporary Physics}, 44(4):307–327, July 2003.

\bibitem{Klausen:2023}
Frederik~Ravn Klausen.
\newblock Random problems in mathematical physics, 2023.

\bibitem{Koshovets:1991}
I.~A. Koshovets.
\newblock Unitary analog of the anderson model. purely point spectrum.
\newblock {\em Theoretical and Mathematical Physics}, pages 1240 -- 1270, 1991.

\bibitem{Molfetta:2024}
Ugo Nzongani, Nathana\"el Eon, Iv\'an M\'arquez-Mart\'{\i}n, Armando P\'erez,
  Giuseppe Di~Molfetta, and Pablo Arrighi.
\newblock Dirac quantum walk on tetrahedra.
\newblock {\em Phys. Rev. A}, 110:042418, 10 2024.

\bibitem{Portugal:2013}
Renato Portugal.
\newblock {\em Quantum Walks and Search Algorithms}.
\newblock Springer Publishing Company, Incorporated, 2013.

\bibitem{Qiang:2024}
Xiaogang Qiang, Shixin Ma, and Haijing Song.
\newblock Quantum walk computing: Theory, implementation, and application.
\newblock {\em Intelligent Computing}, 3:0097, 2024.

\bibitem{Richard:2017}
S.~Richard, A.~Suzuki, and R.~Tiedra~de Aldecoa.
\newblock Quantum walks with an anisotropic coin i: spectral theory.
\newblock {\em Letters in Mathematical Physics}, 108(2):331–357, September
  2017.

\bibitem{felix:22}
Félix Rouveyre.
\newblock Marches quantiques sur un réseau hexagonal.
\newblock Master's thesis, Université Grenoble Alpes, 2022.

\bibitem{Shenvi:2003}
Neil Shenvi, Julia Kempe, and K.~Birgitta Whaley.
\newblock Quantum random-walk search algorithm.
\newblock {\em Physical Review A}, 67(5), 2003.

\bibitem{Stollmann:2001}
P.~Stollmann.
\newblock {\em Caught by Disorder: Bound States in Random Media}.
\newblock Caught by Disorder: Bound States in Random Media. Birkh{\"a}user,
  2001.

\bibitem{Szegedy:2004}
M.~Szegedy.
\newblock Quantum speed-up of markov chain based algorithms.
\newblock In {\em 45th Annual IEEE Symposium on Foundations of Computer
  Science}, pages 32--41, 2004.

\bibitem{Venancio:2023}
B.~F. Venancio, H.~S. Ghizoni, and M.~G.~E. da~Luz.
\newblock Scattering quantum walk framework for two-dimensional materials: The
  case of honeycomb lattice structures.
\newblock {\em Phys. Rev. B}, 108:094303, 2023.

\bibitem{VenegasAndraca:2012}
Salvador~Elías Venegas-Andraca.
\newblock Quantum walks: a comprehensive review.
\newblock {\em Quantum Information Processing}, 11(5):1015–1106, July 2012.

\bibitem{WangDam:2019}
Fengpeng Wang and David Damanik.
\newblock Anderson localization for quasi-periodic cmv matrices and quantum
  walks.
\newblock {\em Journal of Functional Analysis}, 276(6):1978--2006, 2019.

\bibitem{Zaehringer:2010}
F.~Zähringer, G.~Kirchmair, R.~Gerritsma, E.~Solano, R.~Blatt, and C.~F. Roos.
\newblock Realization of a quantum walk with one and two trapped ions.
\newblock {\em Physical Review Letters}, 104(10), 2010.

\end{thebibliography}

\end{document}